\documentclass{article} 
\usepackage{preprint, times}

\usepackage{amsmath,amsfonts,bm}

\def\eqref#1{equation~\ref{#1}}

\def\1{\bm{1}}

\DeclareMathAlphabet{\mathsfit}{\encodingdefault}{\sfdefault}{m}{sl}
\SetMathAlphabet{\mathsfit}{bold}{\encodingdefault}{\sfdefault}{bx}{n}

\newcommand{\R}{\mathbb{R}}

\newcommand{\softmax}{\mathrm{softmax}}

\DeclareMathOperator*{\argmax}{arg\,max}

\usepackage[hidelinks]{hyperref}
\usepackage{url}
\usepackage{amsmath,amssymb,amsthm,amsfonts}
\usepackage{tikz}
\usepackage{graphicx}
\usetikzlibrary{positioning,arrows.meta,shapes.geometric}

\newtheorem{theorem}{Theorem}[section]
\newtheorem{lemma}[theorem]{Lemma}
\newtheorem{proposition}[theorem]{Proposition}
\newtheorem{corollary}[theorem]{Corollary}

\newtheorem{remark}{Remark}[section]
\newtheorem{definition}{Definition}[section]
\newtheorem{example}{Example}[section]

\newcommand{\norm}[1]{\left\lVert #1 \right\rVert}

\def\last{\text{last}}
\def\out{\text{\rm{\rm{out}}}}

\def\F{\mathcal{F}}
\def\R{\mathbb{R}}

\def\Z{{\mathbb{Z}}}
\def\N{{\mathbb{N}}}

\def\TM{{\mathbb{T}}}

\def\Arpn{\mathbf{Arith}_{p,n}}
\def\Ar{\mathbf{Arith}}

\def\Alg{\mathbf{A}}

\DeclareMathOperator{\len}{len}
\DeclareMathOperator{\sen}{sen}
\DeclareMathOperator{\Poly}{Poly}

\DeclareMathOperator{\FNN}{FNN}
\DeclareMathOperator{\ATT}{ATT}
\DeclareMathOperator{\Relu}{ReLU}
\DeclareMathOperator{\CoT}{CoT}

\usepackage{booktabs}
\usepackage{pgfplots}
\usepackage{subcaption}
\pgfplotsset{compat=1.18}

\title{On the Turing Completeness of Transformers and Agents}
\author{
Yimu Qiao\textsuperscript{\rm 2, 3},
Lijia Yu\textsuperscript{\rm 1},
Ruichen Qiu\textsuperscript{\rm 2, 3}, 
Xiao-Shan Gao\textsuperscript{\rm 2, 3}\thanks{Corresponding author, xgao@mmrc.iss.ac.cn}\\
\textsuperscript{\rm 1}Institute of AI for Industries, Chinese Academy of Sciences\\
\textsuperscript{\rm 2}SKLMS, Academy of Mathematics and Systems Science, Chinese Academy of Sciences\\ 
 \textsuperscript{\rm 3}University of Chinese Academy of Sciences\\
}
\begin{document}
\maketitle
\begin{abstract}
Transformers have emerged as the dominant architecture in sequence modeling, achieving remarkable success in natural language processing and reasoning tasks. While existing literature has established the Turing completeness of transformers under bounded input length, the reasoning power of a single transformer operating on inputs of unbounded length is not fully explored.
In this paper, we theoretically investigate the reasoning limitations of a single transformer and the enhanced capabilities of agent systems. 
%
We show that a single fixed finite precision transformer cannot memorize  certain Turing machines with inputs of arbitrary length, such as the arithmetic; 
and a single fixed infinite precision transformer trained with a random algorithm is not Turing complete with probability one under reasonable conditions.
%
To overcome the limitation of a single transformer, we define a formal agent architecture consisting of decision, execution, and memory modules and show that for any Turing machine $\TM$, there exists an agent that can memorize $\TM$ and is computationally the same as $\TM$. Thus, agents are Turing complete.
\end{abstract}

\section{Introduction}

Transformers are the central architecture in modern sequence modeling and have achieved remarkable practical success in natural language processing, code generation, and mathematical reasoning. 
%
Recent theoretical studies have made substantial progress on the reasoning power of transformers. Log-precision transformers have been shown to be Turing complete \citep{Attention-TC,yx,jiang2025softmax} in a certain sense; constant size transformers have also been shown to be Turing complete \citep{li2026constant}. 
Moreover, log-precision  transformers with Chain-of-thought (CoT) are capable of solving any polynomial-time problem  \citep{merrill2023expressive}, and they are also able to handle the arithmetic task $\Arpn$ \citep{Feng-Cot2023}.
CoT has likewise been shown to improve the reasoning power of fixed-precision transformers \citep{ZhiyuanLi2024CoT,feng2024numerical}. On the negative side, it is proven that there exists a class of reasoning problems for which transformers, with and without CoT, have the same solving power~\citep{yu2025analyzing}.

As stated by \cite{cui2026position}, there is a limitation shared by much of the existing literature on transformer reasoning: the conclusions of most works are limited to finite length questions, such as \citep{Attention-TC,merrill2023expressive,Feng-Cot2023,yx,jiang2025softmax}. This means that for a given Turing machine $\TM$ and length limitation $n$, they can construct a transformer $\F_{\TM,n}$ depending on $n$ such that $\F_{\TM,n}$ can memorize $\TM$ on any input with a length not exceeding $n$. In other words,   $\{\F_{\TM,n}\}_{n=1}^{\infty}$ is Turing complete for simulating $\TM$, but not for a single transformer. However, people do not usually use a family of transformers to solve problems in practice.
\cite{mudarisov2025limitations} showed why transformers struggle to process long sequences, indicating that as the input length grows, the attention weights tend to become nearly uniform.

In order to understand the reasoning power of the transformer on inputs of arbitrary length, we have refined the relevant theories by answering the following question.

{\bf Question 1.
Is it possible for a single fixed transformer with CoT to memorize a Turing machine for inputs of arbitrary length?
}

For a Turing machine $\TM$, we say that a transformer $\F$ memorizes $\TM$ if, for any stoppable input $x$, it holds $\widehat{\F}_{\out}(x) = \TM(x)$. 
In this paper, we consider the case of unbounded length inputs for Turing machines, and we show that under reasonable conditions, a single transformer cannot achieve Turing completeness even with CoT. 
For finite precision situations, we employ an explicit constructive approach to demonstrate Turing incompleteness.
For infinite precision situations, we provide a tight sensitivity bound on how much a transformer’s output can change for two inputs that differ at only a single position (Theorems \ref{th-1} and \ref{th-2}), which is then used to obtain their Turing incompleteness. 
%
Overall, we obtain the following conclusions (Refer to Theorems \ref{zzdd1}, \ref{th-confd}, and \ref{ae}):
\begin{theorem}[Informal]
\label{th-m1}
(1) A single finite precision transformer is not Turing complete.
(2) A single infinite precision transformer is not Turing complete under positive confidence.
(3) A single infinite precision transformer trained with a random algorithm is not Turing complete with probability one under reasonable conditions.
\end{theorem}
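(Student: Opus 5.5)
The plan is to prove the three parts separately, since each uses a different obstruction. Throughout, a single fixed transformer $\F$ with CoT is a fixed set of parameters applied autoregressively. Its output on input $x$ is $\widehat{\F}_{\out}(x)$, produced after some number of CoT steps.

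\textbf{Part (1), finite precision.} I would use a counting and periodicity argument. With finite precision, every attention score, softmax weight and hidden vector lies in a fixed finite set of representable numbers. The attention aggregation, a sum of at most $n$ terms rounded to precision, can therefore only ``count'' up to a bounded threshold. Beyond that threshold, long inputs differing in the multiplicity of a repeated block become indistinguishable. Concretely, I would show there is a constant $N$, depending only on $\F$, with the following property: for inputs of the form $u\,a^{k}\,v$ with $k\ge N$, the hidden states at every position depend on $k$ only through a saturated value. The reason is that softmax weights of $1/k$ underflow or round identically, and the rounded averages stabilize. The CoT continuation is itself a fixed function of those states, so $\widehat{\F}_{\out}(u a^{k} v)$ is eventually constant or periodic in $k$.

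To finish Part (1), I would pick the Turing machine for arithmetic $\Ar$, or even simpler, the machine outputting the parity or the value of $k$. For that machine, $\TM(u a^k v)$ takes infinitely many values or alternates. This contradicts the saturated behaviour, so $\F$ does not memorize $\TM$. The main care is handling softmax rounding rigorously. I would build an explicit construction of two inputs of equal length beyond the threshold that yield identical internal states but different correct answers.

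\textbf{Part (2), infinite precision under positive confidence.} Here I would first derive the sensitivity bounds announced as Theorems~\ref{th-1} and~\ref{th-2}. Suppose two inputs of length $n$ differ at one position. Under Lipschitz layers and bounded weights, the softmax-averaged attention output changes by $O(1/n)$ up to constants depending on the weight norms. Propagating this through $L$ layers gives an output-logit difference of $O(C^L/n)$. Positive confidence means the correct token's probability exceeds the others by a fixed margin $\delta>0$. Taking $n$ large, with two inputs differing in one symbol whose correct first output tokens differ (for example, the last-digit behaviour of addition), the logits differ by less than $\delta$. Hence both inputs produce the same first token, a contradiction.

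For CoT the induction must be carried over generated tokens. Since identical tokens are appended, the two sequences continue to differ only at the original position, so the bound persists step by step.

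\textbf{Part (3), random training.} I would show that the set of parameters realizing a Turing-complete $\F$ is contained in a measure-zero set, for instance a countable union of proper analytic subvarieties. The sensitivity bound forces exact-margin-zero ties, i.e.\ equalities of analytic functions of the parameters, along infinitely many input pairs. If the random algorithm's output distribution is absolutely continuous with respect to Lebesgue measure (the ``reasonable condition''), that set has probability zero.

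I expect the main obstacle to be Part (3): turning the $O(1/n)$ sensitivity into a nontrivial analytic equation that a generic parameter violates. I would first try showing the limiting logit difference as $n\to\infty$ must vanish for a nonconstant analytic function of the weights.
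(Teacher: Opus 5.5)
Your Parts (1) and (2) follow the paper's route in outline: saturation of a finite-precision model along a pumped family $x\oplus u^k\oplus z$, and sensitivity decay plus a margin contradiction at the first disagreement of the two generations. Part (3), however, has a genuine gap. The sensitivity bound does not force an exact tie. At the first disagreement of the generations on a sensitive pair $(x_n,x_n')$ it only gives a near-tie $0\le \F_a(s_n)-\F_b(s_n)\le\epsilon_n$ with $\epsilon_n\to0$. This happens at a state $s_n=x_n\oplus z_n$ that changes with $n$, and whose CoT prefix $z_n$ (and the tokens $a,b$) themselves depend on the parameters. So there is no single analytic equation in $\theta$ and no well-defined ``limiting logit difference''. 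Intersecting these thin sets over $n$ gives measure zero only if you control their measures uniformly.

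The paper does this in two steps that your plan lacks.
\begin{itemize}
\item It freezes everything except the output bias $c$, so the logits are $g_\vartheta(s)+c$. Choosing $a$ at $s$ and $b$ at $s'$ then confines $c_a-c_b$ to an interval of length at most $2\|g_\vartheta(s)-g_\vartheta(s')\|_2=O(\ln^{D-1}\ell_n/\ell_n^{1-\alpha})$ by Corollary~\ref{cor:k-perturb}, which is a thin slab in bias space.
\item It enumerates the admissible generations. The biases that handle both $x_n$ and $x_n'$ lie in at most $|G(x_n)|\,|G(x_n')|\le f(\ell_n)^2$ such slabs, whose total measure in any box tends to $0$ under condition~(3) of Theorem~\ref{ae}. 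Tonelli over $\vartheta$ (using Lemma~\ref{lem:MG-borel}) and absolute continuity then finish the argument.
\end{itemize}
This CoT-count hypothesis is part of the ``reasonable conditions'', and you dropped it. Without it the union of slabs is uncontrolled, and the paper does not claim the result in that generality. Your exact-equation picture holds only when the admissible CoT is unique (Corollary~\ref{Tarithae}). There, shrinking intervals that all contain $c_a-c_b$ pin it to one value on each bias fibre, i.e.\ to a hyperplane.

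There are also two smaller errors.

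In Part (2), your $O(C^L/n)$ bound is false for $L\ge2$. After one layer the perturbation is of order $1/j$ at every position $j>i$, not at a single position. The next attention layer averages all of these, and $\frac{1}{j}\sum_{w\le j}\frac{\ln^{\ell-2}w}{w}=O(\ln^{\ell-1}j/j)$. So the correct rate is $\ln^{L-1}n/n$, and Theorem~\ref{th-2} shows it is attained. Your induction must therefore bound the perturbation at all positions $j>i$, through both the value term and the softmax weights. Any vanishing bound suffices for a single changed non-final symbol, and with the correct rate also for $O(n^\alpha)$ changed symbols, so your conclusion survives.

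In Part (1), parity is not a valid witness. Finite-precision RoPE is periodic with some period $C$ (Lemma~\ref{xhj}). The saturation argument only yields $\widehat{\F}_q(x\oplus u^{k-C}\oplus z)=\widehat{\F}_q(x\oplus u^{k}\oplus z)$ for large $k$, which is compatible with parity when $C$ is even. You need pairwise distinct outputs along the family (the non-converging condition), as with the value of $k$ or arithmetic. Also:
\begin{itemize}
\item The compared inputs differ in length by $C|u|$; they are not of equal length.
\item The stabilization comes from the softmax normalizer overflowing to $\mathrm{Inf}$ once more than $10^{2q}$ matching positions carry a positive exponential. At that point all weights of the head vanish in both computations. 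It does not come from rounding of $1/k$.
\end{itemize}
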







Theorem \ref{th-m1} shows that single transformer with CoT cannot memorize certain Turing machines on unbounded length inputs. 
On the other hand, agents based on transformer are widely used to solve complex tasks, such as mathematical reasoning. Through the combination of LLMs, skills and tools,  the capabilities of agents can be significantly improved. 
In light of the negative results in Theorem \ref{th-m1}, it is interesting to examine the reasoning capabilities of LLM-based agents.

{\bf Question 2. 
Are LLM-based agents Turing complete?
}

It was shown that memory augmented LLMs are computationally universal \citep{Schuurmans2023}.
Recently, memory augmented LLMs have been shown to be Turing complete with explicit constructions \citep{chen2026neural}.
In this paper, we focus on agents that are more realistic and show their Turing completeness. 
Specifically, we arrive at the following conclusions (Refer to Theorem \ref{agent}):

\begin{theorem}[Informal]
\label{th-m2}
We define an agent architecture consisting of finite precision transformers as decision and execution modules, a memory module, and linear-time computable tools, such that for any Turing machine $\TM$, there exists an agent that can memorize $\TM$ and is computationally the same as $\TM$.
\end{theorem}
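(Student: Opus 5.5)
The plan is to build the agent explicitly so that the fixed-size transformers only ever handle a local, constant-size view of the computation. All unbounded data lives in the memory module, and all unbounded bookkeeping is done by simple tools. This way the obstruction behind Theorem \ref{th-m1}, that a single fixed transformer must attend over an input of unbounded length, never arises.

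Fix a Turing machine $\TM=(Q,\Gamma,\delta,q_0,q_{\rm halt})$. The memory module stores three things: the tape as a list of symbols, the head position as a pointer, and the current state. The agent then runs a loop; one iteration simulates exactly one step of $\TM$, as follows.
\begin{enumerate}
\item A read tool, which is linear-time computable, fetches the symbol $a$ under the head and the current state $q$ from memory.
\item The decision module is a finite precision transformer $\F_D$ that receives the constant-length token sequence $(q,a)$. It outputs either \emph{halt} (when $q=q_{\rm halt}$) or \emph{continue}.
\item The execution module is a finite precision transformer $\F_E$ that receives $(q,a)$ and outputs $\delta(q,a)=(q',a',d)$.
\item A write tool updates memory: it writes $a'$ at the head position, moves the pointer by $d\in\{-1,+1\}$, extends the tape with a blank if the pointer runs off the end, and stores $q'$.
\end{enumerate}
On \emph{halt}, an output tool returns the tape content, read between the conventional markers, as $\TM(x)$.

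The only learned or neural components are $\F_D$ and $\F_E$. Each computes a function on a finite domain $Q\times\Gamma$, so it suffices to show that a fixed finite precision transformer can realize any lookup table on constant-length inputs drawn from a finite alphabet. I would do this directly. Embed each token as a one-hot vector. Use attention that attends uniformly, or to the last position, to gather the pair $(q,a)$ into a single position; with two tokens this is exact even in finite precision. Then use a ReLU feedforward network to realize an arbitrary map from the finite set of one-hot pairs to one-hot outputs. This is possible because a two-layer ReLU network can compute the indicator of each pair exactly, for example $\Relu(e_q\cdot u+e_a\cdot v-1)$, and then sum the weighted outputs. All weights are small integers, so finite precision causes no rounding error. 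The output layer decodes by argmax.

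The remaining step is correctness and the complexity claim, which I would prove by induction on the number of steps. The invariant is that after $t$ loop iterations the memory encodes exactly the configuration of $\TM$ after $t$ steps on input $x$. The base case is the initialization tool, which writes $x$, sets the pointer to $0$, and sets the state to $q_0$. The inductive step follows from the exactness of $\F_E$ and the tool semantics. Consequently, if $\TM$ halts on $x$, the agent halts with output $\TM(x)$, so the agent memorizes $\TM$. If $\TM$ does not halt, the agent loops forever. This matches $\TM$'s behavior and gives the ``computationally the same'' part. Each iteration costs $O(1)$ transformer work plus tool work. To keep the tools within the linear-time requirement, I would represent the tape as an array with a pointer, so reading and writing cost $O(1)$, with amortized $O(1)$ extension.

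I expect the main obstacle to be definitional rather than computational. The agent formalism must be stated so that the tools are genuinely simple (linear-time, fixed, and independent of $\TM$) and the memory is unbounded, while ensuring the tools do not trivially perform the whole computation. My approach is to require that the tools be $\TM$-independent primitives (read at pointer, write at pointer, move pointer, append blank), with every $\TM$-specific behavior confined to the fixed finite precision transformers. A secondary subtle point is exact argmax decoding with ties in finite precision. I would handle it by ensuring a constant margin between the correct logit and all others.
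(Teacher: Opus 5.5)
Your simulation idea is the paper's. One loop iteration simulates one transition of $\TM$, $\delta$ is computed on a bounded local window by ReLU pair-indicators (essentially the paper's $T_{k,s}=\Relu(h_k(k_{\rm current})+b_s-1)$ in Part~5), and tools do all the unbounded bookkeeping. The gap is that your loop is not an instance of the agent model of Definition~\ref{def:agent_frame}, and the difference removes exactly the step that needs work. In (b2) the decision module runs on the entire memory, $v_1=\widehat{\F}_d(M)$. There is no read tool in front of it, and the only localizing tool, $v_d$, is applied afterwards and takes $v_1$ as an argument. So your step~2 (``$\F_D$ receives the constant-length sequence $(q,a)$'') is not available. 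Some fixed finite-precision transformer must read a string of unbounded length, which is the regime where Theorem~\ref{zzdd1} shows such transformers lose information (periodic RoPE, softmax underflow and overflow). What you are missing is an encoding under which the halting decision is still exact at every length. The paper codes the halt state as $B_{|Q|}(|Q|)=-\mathbf{1}$, so the token $-1$ occurs in $M$ iff $\TM$ has halted, and then at most $\lceil\log_2|Q|\rceil$ times. Lemma~\ref{lm1} turns this into an exact presence test: matches get score $q^2$, non-matching exponentials round to $0$, and the normalizer equals the number of matches. The output $\sigma_T\oplus\sigma_b\oplus\sigma_e$ of $\F_d$ then doubles as the search key for $v_d=f_{\rm Search}$. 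Likewise, there is no initialization tool: $M=x$ at the start. The paper handles this through the no-match branch of $f_{\rm Search}$ (returning $B\oplus x[1]$, which $\F_c$ reads as the initial state) and the first-step case of $v_r$.

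Two further points. First, one token per state makes the vocabulary grow with $|Q|$, whereas Theorem~\ref{agent} fixes $|\Sigma|\ge|\Sigma_2|+6$ and chooses the tools before $\TM$. This is why the paper writes states in binary between the markers $\sigma_b,\sigma_e$. That makes the execution window $\Theta(\log|Q|)$ long, so it needs the finite-precision RoPE copying of Lemma~\ref{lm2}, which is where the $O(\ln|Q|)$ heads and precision come from. Second, decoding is autoregressive, so $\F_E$ does not see only $(q,a)$. When $\delta(q,a)=(q,a,d)$, the contexts $(q,a)$ and $(q,a,q,a)$ give the same uniform average and the same last token, yet must produce $q$ and $d$ respectively. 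So ``attend uniformly or to the last position'' does not determine what to emit. You need position-selective attention or a phase signal, as the paper does with the delimiter $\sigma_T$ and the phase indicators $E_t$. If one accepts your more permissive model (read tool before the decision, initialization tool, $|Q|$ state tokens) and fixes the decoding point, your induction goes through. It then proves a weaker statement, in which no transformer ever reads unbounded input.
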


Theorem \ref{th-m2} demonstrates that combining simple tool calls with a loop execution pattern is significantly more effective than general CoT for solving long problems. 

Finally, we experimentally verify our theoretical conclusions using several widely used LLMs. We should emphasize that this is primarily a theory paper, and the experiments serve as proof-of-concept demonstrations rather than comprehensive large-scale benchmarks.

In summary, the main contributions of this paper are threefold. 
%
\begin{enumerate}
\item 
We show that a fixed finite precision transformer cannot memorize certain Turing machines, such as arithmetic, and thus is not Turing complete.  
Infinite precision transformers are also not Turing complete with positive confidence or with  probability one under certain conditions.  
\item
We show that an agent built on finite-precision transformers can memorize Turing machines with inputs of unbounded length and thus is Turing complete. 
\item
We validate our theoretical results experimentally on some common LLMs and agents.
\end{enumerate}

\section{Related Work}

\noindent{\bf Expressive power and Turing completeness.}
A substantial line of work studies the computational expressiveness of transformers through Turing-machine simulation and circuit complexity. Early results established Turing completeness for encoder–decoder transformers and analyzed the roles of positional encodings, masking, and architectural components \citep{Attention-TC,bhattamishra2020computational,yx}. Subsequent work extended these results to decoder-only architectures and transformers with CoT reasoning \citep{roberts2023powerful,merrill2023expressive}. More recent constructions establish Turing completeness for constant-bit-size transformers and softmax CoT transformers under suitable assumptions \citep{li2026constant,li2025efficientturingmachinesimulation,jiang2025softmax}. Circuit-complexity analyzes further characterize how transformer expressiveness depends on numerical precision, depth, and the number of intermediate reasoning steps \citep{merrill2022saturated,merrill2023parallelism,chiang2025uniform,merrill2025logdepth,ZhiyuanLi2024CoT}. Related work also studies limitations induced by attention normalization, including the selective capacity and token separation of softmax attention \citep{mudarisov2025limitations}.
%
A line of related research studies whether  the ability learned on bounded-length inputs extends to larger lengths 
\citep{kazemnejad2023impactpositionalencodinglength,ICLR2024_45ed1a72,wang-etal-2024-length,length-generalization-transformers,chen2026position,yang2026length}. 

\noindent{\bf Agents and external computation.}
LLM-based agents combine repeated model invocation with external state, context management, and auxiliary tool usage \citep{yao2023iclr-react,NEURIPS2023_d842425e}. External memory and context-management mechanisms have been used as additional computational resources for transformer systems \citep{chen2026neural}. The theoretical work further analyzes how the computational power of such systems depends on the external memory and context-management mechanisms coupled with a fixed language model \citep{Schuurmans2023,cui2026position}.
A detailed comparison of our work with \citep{chen2026neural} is given in Remark \ref{r34}.

\section{Notation, Model, and Basic Assumptions}
\label{nmb}
In this section, we provide the basic notations. 
More details can be found in Appendix \ref{app-N}.


\subsection{Autoregressive Transformer}

Let $\Sigma=\{\sigma_i\}_{i=1}^T$ be the vocabulary or alphabet set, and let $\Sigma^*$ denote the set of all finite sequences over $\Sigma$. The symbol $\sigma_0$ is used only as an end-of-output token and does not appear in intermediate positions of an input sequence. For any sequence $x\in\Sigma^*$, let $\len(x)$ denote its length, let $x[i]$ denote its $i$-th symbol, and let $\last(x)=x[\len(x)]$ denote its last symbol. For any two sequences $x_1,x_2\in\Sigma^*$, write $x_1\oplus x_2$ for their concatenation. If $S\subset\Sigma^*$ is finite, write $\len(S)=\max_{x\in S}\len(x)$.


This article considers a transformer $\F=E_{\rm{out}}\circ \F_L\circ\F_{L-1}\circ\cdots\circ\F_1\circ E_{\rm{emb}}:\Sigma^*\to\R^{T+1}$, where $E_{\rm{emb}}$ is the embedding layer, $E_{\rm{out}}$ is the output layer, $\F_l$ is the hidden layer with the RoPE position embedding $R$, position mask $M$, and multi-head attention, as follows:
\vspace{-2mm}
\begin{align*}
&\F_{\ell,1}(x)=\sum_{h=1}^H\softmax(R(xQ_{\ell, h}K_{\ell, h}x^\top)+M)\,xV_{\ell, h},\\
&\F_{\ell}(x)=x+\F_{\ell,1}(x)+\FNN(x+\F_{\ell,1}(x)).   
\end{align*}
This  is a commonly used transformer architecture. 
We focus on token-by-token autoregressive generation as the output mode, and we denote the transformer’s output for an input $x$ by $\widehat\F(x)$. 

\subsection{Memorization of Turing machine by Transformer}

In this section, we define how a transformer could memorize a Turing machine. 

We represent a Turing machine as $\mathbb{T} = (Q,\Sigma_1,\Sigma_2,\delta)$, where $Q$ is the state set, $\Sigma_1$ is the set of symbols that can be input, $\Sigma_2$ is the set of symbols that can appear in the calculation process and output, and $\delta$ is the transfer function.  The empty-symbol $B$ is in $\Sigma_2\setminus\Sigma_1$. 
For a Turing machine $\TM$, we call an input $x \in \Sigma_1^*$ {\em stoppable} for $\TM$ if $\TM(x)$ halts after a finite number of steps. 
For any stoppable input $x$, we denote $t_\TM(x)$ to be the number of steps that $\TM$ performs on input $x$ before halting, $t_\TM(x)$ is also referred to as the {\em running time} of the Turing machine $\TM$ for solving $x$. 

To guarantee that a transformer can reproduce the entire output of a Turing machine, we require its alphabet $\Sigma$ to contain a special symbol, denoted by $\sigma_T$. The symbol $\sigma_T$ serves as a delimiter that separates the CoT from the final output. For any generated sequence $\widehat{\F}(x)$ containing $\sigma_T$, we use the last occurrence of $\sigma_T$ as the delimiter and write $\widehat{\F}(x)=\CoT(\F(x))\oplus\sigma_T\oplus\widehat{\F}_{\out}(x)$, where $\CoT(\F(x))$ is the prefix preceding the last occurrence of $\sigma_T$, and $\widehat{\F}_{\out}(x)$ is the suffix following it. Denote $\Sigma_0=\Sigma\setminus\{\sigma_T\}$.

\begin{definition}
We say that a transformer $\F$ memorizes a Turing machine $\TM=(Q,\Sigma_1,\Sigma_2,\delta)$ with a CoT if $\Sigma_2\subset\Sigma_0$ and for every stoppable input $x\in\Sigma_1^*$ for $\TM$, $\widehat{\F}(x)$ contains $\sigma_T$ and $\widehat{\F}_{\out}(x)=\TM(x)$. The CoT length is defined as $C(x)=|\CoT(\F(x))|=|\widehat{\F}(x)|-|\widehat{\F}_{\out}(x)|-1$.
\end{definition}





For simplicity, when we later say the transformer memorizes a Turing machine, we mean it does so using CoT and $\Sigma_2\subset \Sigma_0$.
Let us use an example to illustrate these notions. 
\begin{example}\label{eg:mem_turing}
Let 
$\Sigma_1=\{0,1,2,3,4,5,6,7,8,9,+,-,*,/,(,),|\}$, 
$\Sigma_2=\Sigma_1\cup\{=,B\}$, and
$\TM_{\Ar}=(Q,\Sigma_1,\Sigma_2,\delta)$ be the Turing machine for solving the arithmetic of rational numbers, with $n|m$ to represent a rational number $\frac{n}{m}$. 
For instance, if $x=2+2*(5-1)+3$, then $\TM_{\Ar}(x)=13$. 
Let $\sigma_T$ be `$=$'. For a transformer $\F$, if $\widehat{\F}(x)$ is `$=2+2*4+3=2+8+3=10+3=13$' with final output $\sigma_0$, then $\widehat{\F}_{\out}(x)$ is $13$, which is the sequence  between the last $=$ and $\sigma_0$. 
%
\end{example}

%


\section{Limitations of Transformer in Memorizing Turing machine}
\label{sec-Tans}

Prior studies have demonstrated that a CoT transformer is capable of memorizing a Turing machine for inputs with \emph{bounded length}~\citep{Attention-TC,merrill2023expressive,yu2025analyzing,yx}, as detailed in the Appendix \ref{old}. 
In this section, we show that a single transformer cannot memorize most Turing machines for input with \emph{unbounded length}. 


\subsection{A Finite-precision Transformer is not Turing Complete}
\label{ffa}
In this section, we demonstrate that a single fixed transformer with finite precision is unable to solve simple algorithmic tasks such as $\Ar$, which admit inputs of unbounded length. Therefore, such a model is not Turing complete.  Firstly, we define the transformer with finite precision.
\begin{definition}
By saying that a transformer $\F$ has $q$-precision, we mean that all its parameters, as well as the matrix calculation, Softmax calculation, and the position embedding calculation in the hidden-layer output, are represented with $q$ decimal real numbers. We denote a transformer with $q$-precision by $\F_q$. Unless otherwise specified, it is assumed that $\F$ has un-limited precision in this paper.
More details about the finite precision transformer can be found in Appendix \ref{precision}. 
\end{definition}

Given any sequence $x$ and $n\in\N_+$, we define $x^n$ to be the concatenation of $n$ copies of $x$.
Below,  we define a class of Turing machines that cannot be memorized by fixed-precision transformers.
\begin{definition}
A Turing machine $\TM=(Q,\Sigma_1,\Sigma_2,\delta)$ is called {\em non-converging} if there exist $x,y,z\in\Sigma_1^*$ such that every $x\oplus y^n\oplus z$, $n\in\mathbb N_+$, is a stoppable input for $T$, and $\TM(x\oplus y^n\oplus z)\ne\TM(x\oplus y^m\oplus z)$ for any $m,n\in\Z_+,m\ne n$.
\end{definition}

Based on this definition, we establish the following result about the Turing incompleteness for finite precision transformers:

\begin{theorem}
\label{zzdd1}
Let $\TM$ be a non-converging Turing machine. Then, for any $q\in\N_+$ and any fixed precision transformer $\F_q$, $\F_q$ cannot memorize $\TM$. Consequently, a fixed-precision transformer cannot be Turing complete.
\end{theorem}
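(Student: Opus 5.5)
The plan is a pigeonhole argument on the finitely many internal states a $q$-precision transformer can reach. Fix $q$, the transformer $\F_q$, and the witnesses $x,y,z$ from the definition of a non-converging machine. For $n\in\N_+$ set $w_n=x\oplus y^n\oplus z$. Suppose, for contradiction, that $\F_q$ memorizes $\TM$. Then $\widehat{\F}_{q,\out}(w_n)=\TM(w_n)$ for every $n$, and these outputs are pairwise distinct. So it suffices to exhibit $m\neq n$ with $\widehat{\F}_q(w_m)=\widehat{\F}_q(w_n)$.

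\textbf{Step 1: finitely many values.} With $q$-precision, every parameter, every intermediate quantity (attention scores, softmax outputs, RoPE entries, FNN activations) and every hidden vector lies in the finite set of $q$-decimal numbers in a bounded range. I will take this from the finite-precision model of Appendix~\ref{precision}, with overflow saturating or rounding. In particular, RoPE angles computed with $q$-precision take only finitely many values. Hence the hidden representation at any position ranges over a finite set $S$, independent of the input length.

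\textbf{Step 2: a finite-state description.} I will show that the next generated token is a function of a bounded summary of the prefix. Causal softmax attention with rounded exponentials and a rounded normalization computes, at each query, a rounded average of finitely many possible value vectors weighted by finitely many possible weights. The output is determined by the \emph{counts} of each (key,value) type among the previous positions, passed through the rounding. Once the total mass exceeds what $q$-precision can resolve, these counts saturate: the rounded weighted average depends only on the ratios between counts, truncated to $q$ digits. If that is too weak, I will argue more bluntly: the tuple of all possible query-dependent rounded attention outputs is an element of a finite set. So along the sequences $w_n$, the state of each layer at the last position, together with the relevant summary of the prefix, lies in a finite set.

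\textbf{Step 3: periodicity in $n$ and conclusion.} Appending each copy of $y$ updates this finite summary by a deterministic map. Hence the summary after $x\oplus y^n$ is eventually periodic in $n$, and so two values $m\neq n$ give the same summary. Generation on $w_m$ and $w_n$ then proceeds identically: appending $z$ and each newly generated token acts on the summary in the same way. Therefore $\widehat{\F}_q(w_m)=\widehat{\F}_q(w_n)$, and in particular the outputs after the last $\sigma_T$ coincide. This contradicts non-convergence.

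For the consequence, I will note that $\TM_{\Ar}$ of Example~\ref{eg:mem_turing} is non-converging: take $x=\emptyset$, $y=1+$, and $z=1$, so that $w_n$ evaluates to $n+1$. Hence no fixed-precision transformer memorizes every Turing machine.

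\textbf{Main obstacle.} Step 2 is the hard part: showing rigorously that the rounded softmax at a position depends on the prefix only through a finite-state summary that updates deterministically. Positional dependence through RoPE and the causal mask complicates this. My first attempt is to use finite precision of the angles $i\theta$, so that positional rotations take finitely many values and are eventually periodic in $i$. I would then work with the sum $\sum_j \exp(\cdot)$ and $\sum_j \exp(\cdot)v_j$ computed as sequential $q$-precision accumulations. Each such accumulator is a finite-state automaton, which makes the whole computation a finite automaton reading the token sequence.
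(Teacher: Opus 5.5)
Your outer argument is sound and differs from the paper's route. It has four parts: a finite-state summary of the prefix, pigeonhole on the summaries after $x\oplus y^n$, identical generation afterwards, and the witness $y=1+$, $z=1$ for $\TM_{\Ar}$. But Step~2 carries the whole theorem, and it is not established; none of the mechanisms you propose works.

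\begin{itemize}
\item If a head's output depended on ratios of counts rounded to $q$ digits, the summary would in general \emph{not} be finite-state. The rounded ratio after appending a token is not a function of the rounded ratio before it, and thresholds like $\#a\ge\#b$ are not regular.
\item Your ``blunt'' fallback has the same defect. That the tuple of attention outputs ranges over a finite set does not make the next tuple a function of the current tuple and the new token. Step~3 needs exactly that deterministic update.
\item Sequential rounded accumulation of $\sum_j\exp(\cdot)$ and $\sum_j\exp(\cdot)v_j$ is not the arithmetic of Definition~\ref{def:finite-precision}. There, weights are normalized before they meet the values. Both $Z=[\sum_jE_j]_q$ and the weighted sum are exact sums rounded once.
\item The overflow convention is not interchangeable either. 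Suppose $Z$ saturated at $10^q$ instead of becoming $\mathrm{Inf}$. Then a uniform head with values $\pm1$ on a long input would output exactly $10^{-q}$ times a count difference, and no finite summary can track that.
\end{itemize}

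The missing idea is the pair of facts the paper's proof rests on.
\begin{itemize}
\item The RoPE phases are reduced recursively, which makes $A_n=A_{n+C}$ hold exactly (Lemma~\ref{xhj}).
\item Overflow kills attention. Every positive $E_j$ is at least $10^{-q}$. So if more than $10^{2q}$ keys have positive rounded exponential, then $\sum_jE_j>10^q$, hence $Z=\mathrm{Inf}$ and every weight $[E_i/Z]_q$ is $0$.
\end{itemize}

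With these two facts you can supply the state you need. Take $N\bmod C$ and the last position's hidden values. Add, for each layer $m$, each residue $r\in\{0,\dots,C-1\}$ and each hidden value $h\in S$, the number of positions $j\equiv r$ whose layer-$m$ state is $h$, capped at $10^{2q}+1$.

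A head's output at the last position is then a function of this state and the query. Which types are present fixes the maximal logit. If some type with positive exponential has reached the cap, the head outputs $0$. Otherwise every contributing count is exact, so $Z$ and the weighted sum are computed exactly. Capped counts update deterministically, so Step~3 then goes through.

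The paper instead avoids building an automaton. By induction on layers (Parts One--Two), it shows that $x\oplus u^{k-C}\oplus w$ and $x\oplus u^k\oplus w$ have equal last-position states for every continuation $w$ once $k\ge K_L$. It uses the first fact for the shifted relative positions. It uses the second to make the $C$ extra copies of $u$ either contribute nothing or force overflow in both sequences. Once the capped-count state is in place, your route is the more modular of the two. It also yields eventual periodicity in $n$ of the entire generated output.
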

\begin{proof}
    In processing long inputs, a Turing machine is limited only by memory capacity. In contrast, Transformers must also contend with precision limits. This is particularly evident in position embeddings, where precision loss in trigonometric calculations with sequence length leads to cyclic phenomena in RoPE encoding. Additionally, the attention mechanism is susceptible to numerical instability; the exponential operations can produce values so minute that they underflow to zero, effectively truncating the computation.
    The complete proof is in Appendix~\ref{app:proof_thm3}.
\end{proof}

In practice, a large portion of practical Turing machines is non-converging. For instance, $\TM_\Ar$ is obviously non-converging, though any finite precision transformer $\F_q$ cannot memorize $\TM_{\Ar}$.



\subsection{Infinite-precision Transformer is not Turing Complete under some conditions}
\label{adfdfgsef}
In this section, we discuss infinite precision transformers, which are more powerful than finite precision transformers. 
We first present a key fact in Section \ref{sss1} and use that fact to obtain the main result about Turing incompleteness in Sections \ref{in-pr} and \ref{mtmfae}. 
The application scopes of the theorems in this paper are clarified in Appendix \ref{asct}.


\subsubsection{A tight bound of value propagation for two distinct tokens}
\label{sss1}
In this section, we derive a tight bound on the transformer outputs for the embeddings of two tokens, which will be used to prove the Turing incompleteness of  infinite-precision transformers in the subsequent section.
We first define a local sensitivity measure for transformers, which quantifies the effect of modifying a single input token at a non-output position on the final representation.

\begin{definition}
\label{def1}
Let $\mathbb{B}(x,k)$ be the set of  $x'$ such that $\len(x)=\len(x') = n$ and $x[j]\ne x'[j]$ for no more than $k$ values of $j \in [n-1]$, and $x[n] = x'[n]$. That is, \(x'\) differs from \(x\) by no more than \(k\) non-output positions. 
For a sequence  $x$ and a transformer $\F$, the {\em sensitivity} of $\F$ with respect to $x$ is defined as
$$\sen(\F,x)=\max_{x'\in \mathbb{B}(x,1)}||\F(x)-\F(x')||_2.$$ 
\end{definition}

Now we present the following key technical result.
\begin{theorem}
\label{th-1}
For any input $x$ and any transformer $\F$ previously defined with depth $L$, we have $$\sen(\F,x)\le C_\F\frac{\ln^{L-1}(\len(x))}{\len(x)},$$ 
where $C_\F$ is a constant that depends only on $\F$.
\end{theorem}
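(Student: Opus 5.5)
The plan is to track, layer by layer, how much the hidden representation at each position can move when one non-output token is changed. Fix $x$ and $x'\in\mathbb{B}(x,1)$, with $n=\len(x)$ and the differing position $j<n$. Let $h^{(\ell)}_i$ and $h'^{(\ell)}_i$ denote the hidden vectors at position $i$ after layer $\ell$, and set $\delta^{(\ell)}_i=\|h^{(\ell)}_i-h'^{(\ell)}_i\|_2$. Since $x[n]=x'[n]$ and $E_{\rm out}$ is Lipschitz on bounded sets, it suffices to bound $\delta^{(L)}_n$.

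\textbf{Step 1: uniform boundedness.} Because $\Sigma$ is finite, $\|h^{(0)}_i\|\le c_0$ for all $i$. Each attention output is a convex combination of value vectors, and RoPE is a rotation, so it preserves norms. The FNN is continuous, hence bounded on bounded sets. By induction over the fixed depth $L$, $\|h^{(\ell)}_i\|\le c_\ell$, uniformly in $n$ and $i$. Consequently every attention score satisfies $|s_{ik}|\le B$ for a constant $B$ depending only on $\F$. Under the causal mask, position $i$ attends to $i$ positions, so every attention weight satisfies $\alpha_{ik}\le e^{2B}/i$. The key feature is that no single key can take more than an $O(1/i)$ share of the attention. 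Also, all maps involved (the $Q,K,V$ projections, softmax on bounded scores, the FNN) are Lipschitz on the relevant bounded sets with constants depending only on $\F$.

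\textbf{Step 2: one-layer perturbation recursion.} Split the change in the attention output at position $i$ into two parts. The first comes from changed values, $\sum_k \alpha_{ik}(v_k-v'_k)$, which is at most $\frac{C}{i}\sum_{k\le i}\delta^{(\ell)}_k$. The second comes from changed weights. By the softmax Jacobian on bounded scores, $|\alpha_{ik}-\alpha'_{ik}|\le \frac{C}{i}\big(|s_{ik}-s'_{ik}|+\sum_m\alpha_{im}|s_{im}-s'_{im}|\big)$, and $|s_{ik}-s'_{ik}|\le C(\delta^{(\ell)}_i+\delta^{(\ell)}_k)$. Summing over $k$ against bounded values, and adding the residual and FNN terms, gives
\[
\delta^{(\ell+1)}_i\le C\Big(\delta^{(\ell)}_i+\frac1i\sum_{k\le i}\delta^{(\ell)}_k\Big).
\]
By causality, $\delta^{(\ell)}_i=0$ for $i<j$.

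\textbf{Step 3: induction.} Initially $\delta^{(0)}_j=O(1)$ and $\delta^{(0)}_i=0$ for $i\ne j$. Since the recursion keeps $\delta^{(\ell)}_j=O(1)$, the claim to prove is $\delta^{(\ell)}_i\le C_\ell\frac{\ln^{\ell-1}(i+1)}{i}$ for $i> j$ and $\ell\ge1$. At layer $1$, positions $i>j$ see the changed token only through the averaged sum, which gives $O(1/i)$. For the inductive step, plug into the recursion:
\[
\frac1i\sum_{k\le i}\delta^{(\ell)}_k\le \frac1i\Big(O(1)+\sum_{k\le i}C_\ell\frac{\ln^{\ell-1}(k+1)}{k}\Big)\le \frac{C}{i}\ln^{\ell}(i+1).
\]
Here the $O(1)$ term is the contribution of position $j$ itself, and the bound on the sum uses $\sum_{k\le i}\ln^{\ell-1}k/k\approx \ln^\ell i/\ell$. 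The first term $\delta^{(\ell)}_i$ in the recursion is of lower order. Taking $\ell=L$ and $i=n$ gives $\delta^{(L)}_n\le C_\F\ln^{L-1}(n)/n$, which is the statement.

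\textbf{Main obstacle.} I expect the hardest part to be Step 1 together with the softmax-weight perturbation in Step 2. One must check that the score bound $B$ is uniform in $n$; this holds because RoPE preserves norms and the hidden states remain bounded. One must also control the weight-change term so that the single heavily perturbed position $j$ contributes only $O(1/i)$, not $O(1)$. This second point is exactly where the uniform lower and upper bounds on $\alpha_{ik}$ of order $1/i$ are essential. If the mask $M$ is not causal, the same argument goes through with $1/n$ in place of $1/i$, giving the same final bound.
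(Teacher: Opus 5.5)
Your proposal is correct and follows essentially the same route as the paper's proof. Both arguments establish uniform boundedness of hidden states and logits to get attention weights $O(1/i)$, split the attention perturbation into a value-change term and a weight-change term controlled by the softmax Jacobian along the segment of bounded scores (yielding $\delta^{(\ell+1)}_i \le C\bigl(\delta^{(\ell)}_i + \frac{1}{i}\sum_{k\le i}\delta^{(\ell)}_k\bigr)$), and run an induction in which each layer adds one logarithmic factor via $\sum_{k\le i}\ln^{\ell-1}(k)/k = O(\ln^{\ell} i)$. The only difference is cosmetic: you package the base case and inductive step into a single recursion, whereas the paper treats $\ell=1$ separately.
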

\begin{proof}
    Consider two inputs differing only at position \(i\). Since the hidden states and attention logits of a fixed transformer are uniformly bounded, each causal attention weight at position \(j\) is at most \(O(1/j)\). 
    Hence, a local perturbation is diluted to \(O(1/j)\) after the first layer. Inductively, if the perturbation at layer \(\ell-1\) is \(O((\ln j)^{\ell-2}/j)\), then the attention aggregation contributes one additional logarithmic factor.
    Using the Lipschitz continuity of the remaining components and taking \(j=\operatorname{len}(x)\) yields $\operatorname{sen}(F,x)\le C_F\frac{(\ln \operatorname{len}(x))^{L-1}}{\operatorname{len}(x)}.$
    The complete proof is in Appendix~\ref{app:proof_thm4}.
\end{proof}

By the triangle inequality, Theorem \ref{th-1} can be extended to $\mathbb{B}(x,K)$.

\begin{corollary}
\label{cor:k-perturb}
There exists a constant \(C_{\F}>0\), depending only on \(\F\), such that for every input \(x\), every \(K\in\N_+\), and every \(x'\in \mathbb{B}(x,K)\),
\[
\|\F(x)-\F(x')\|
\le
K C_{\F}\frac{\ln^{L-1}(\len(x))}{\len(x)}.
\]
\end{corollary}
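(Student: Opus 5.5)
The plan is to reduce the $K$-position case to $K$ applications of Theorem \ref{th-1} through a chain of intermediate inputs. Fix $x$ with $\len(x)=n$ and $x'\in\mathbb{B}(x,K)$. Let $j_1<\dots<j_k$ be the non-output positions in $[n-1]$ where $x$ and $x'$ differ. By Definition \ref{def1}, $k\le K$, and $x[n]=x'[n]$.

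Define $x^{(0)}=x$. For $t=1,\dots,k$, let $x^{(t)}$ be the sequence that agrees with $x'$ at positions $j_1,\dots,j_t$ and with $x$ everywhere else. Then $x^{(k)}=x'$. Every $x^{(t)}$ has length $n$ and the same last token $x[n]$. Consecutive sequences $x^{(t-1)}$ and $x^{(t)}$ differ only at the non-output position $j_t$, so $x^{(t)}\in\mathbb{B}(x^{(t-1)},1)$.

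Apply Theorem \ref{th-1} to the input $x^{(t-1)}$. Since $\len(x^{(t-1)})=n$, this gives
\[
\|\F(x^{(t-1)})-\F(x^{(t)})\|_2\le \sen(\F,x^{(t-1)})\le C_\F\frac{\ln^{L-1}(n)}{n}.
\]
The constant $C_\F$ depends only on $\F$ and not on the particular input, so the same constant works at every step of the chain. This uniformity is the one point to check: Theorem \ref{th-1} holds for every input $x$ with $C_\F$ depending only on $\F$, so its bound depends on the input only through the length $n$, which is shared by all the $x^{(t)}$.

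Summing with the triangle inequality gives
\[
\|\F(x)-\F(x')\|_2\le\sum_{t=1}^{k}\|\F(x^{(t-1)})-\F(x^{(t)})\|_2\le kC_\F\frac{\ln^{L-1}(n)}{n}\le KC_\F\frac{\ln^{L-1}(\len(x))}{\len(x)}.
\]
If $k=0$ the left side is $0$ and the bound holds trivially. There is no real obstacle here; the only care needed is that every intermediate sequence stays in the setting of Theorem \ref{th-1}, meaning it has the same length and the same output position, which holds by construction.
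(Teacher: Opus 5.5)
Your proof is correct and follows the same route as the paper. The paper derives the corollary from Theorem~\ref{th-1} by the triangle inequality, and in the proof of Theorem~\ref{ae} it builds the same chain of equal-length intermediate sequences, each differing from the previous one at a single non-final position. Your point that $C_\F$ is uniform along the chain, because the bound depends on the input only through its length, is exactly what the argument needs.
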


The bound in Theorem \ref{th-1} is tight up to constant factors, as shown below.
\begin{theorem}
\label{th-2}
For any $L$, there exists a transformer $\F$ of depth $L$ and a constant $c_\F>0$ such that, for any $M\in\N_{+}$, there is an input $x$ with $\len(x)\ge M$ satisfying
\[
\sen(\F,x)\ge c_\F\frac{\ln^{L-1}(\len(x))}{\len(x)}.
\]
\end{theorem}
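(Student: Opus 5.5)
We will construct, for each $L$, an explicit transformer in which the perturbation caused by changing one early token is amplified by exactly one logarithmic factor per layer. The upper-bound argument in Theorem~\ref{th-1} already shows where the slack lies. In layer $\ell$, the perturbation at position $j$ is an average over $k\le j$ of the perturbations at layer $\ell-1$. If those perturbations behave like $(\ln k)^{\ell-2}/k$, then uniform causal averaging gives
\[
\frac1j\sum_{k\le j}\frac{(\ln k)^{\ell-2}}{k}\asymp\frac{(\ln j)^{\ell-1}}{j}.
\]
So the plan is to make every attention layer implement \emph{exact uniform causal averaging}: all query and key matrices are zero, so the softmax weights are $1/j$ at position $j$ and RoPE is irrelevant. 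We then choose value matrices and FNNs that carry a single scalar coordinate forward without cancellation.

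\textbf{Construction.} Take vocabulary symbols $a,b$ with embeddings $E(a)=0$ and $E(b)=e_1$. The perturbed pair will be $x=a^{n}$ and $x'=b\,a^{n-1}$, which differ only at the non-output position $1$; the last symbol is preserved when $n\ge2$. In layer $\ell$ we use one head with $Q=K=0$ and a value matrix $V_\ell$ that copies coordinate $\ell$ into coordinate $\ell+1$. Then
\[
h^{(\ell)}_{j,\ell+1}=\frac1j\sum_{k\le j}h^{(\ell-1)}_{k,\ell},
\]
and the FNN is set to zero. Coordinate $1$ at layer $0$ is the indicator of position $1$. After layer $1$, coordinate $2$ equals $\frac1j$ at position $j$ for $x'$ and $0$ for $x$. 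Inductively, after layer $\ell$ the difference in coordinate $\ell+1$ at position $j$ is
\[
D_\ell(j)=\frac1j\sum_{k=1}^j D_{\ell-1}(k), \qquad D_0(k)=\mathbf 1[k=1].
\]
All entries are nonnegative, so there is no cancellation. The residual connection leaves the other coordinates untouched because each layer writes into a fresh coordinate. This needs hidden width $L+1$.

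\textbf{Lower bound on the recursion.} We will prove by induction that $D_\ell(j)\ge c_\ell(\ln j)^{\ell-1}/j$ for $j\ge 2$ and $\ell\ge1$. The base case is $D_1(j)=1/j$. For the inductive step,
\[
D_{\ell+1}(j)\ge \frac1j\sum_{k=2}^j c_\ell\frac{(\ln k)^{\ell-1}}{k}\ge \frac{c_\ell}{j}\int_2^j\frac{(\ln t)^{\ell-1}}{t}\,dt=\frac{c_\ell}{\ell\, j}\bigl((\ln j)^{\ell}-(\ln 2)^{\ell}\bigr).
\]
Here we compare the sum with the integral, using that $(\ln t)^{\ell-1}/t$ is eventually decreasing; the finitely many small $k$ can be absorbed into the constant. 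This gives $\ge c_{\ell+1}(\ln j)^\ell/j$ for $j$ large. It then suffices to take $E_{\rm out}$ to be a linear map reading coordinate $L+1$, and $n=\len(x)$ large. Since $\sen(\F,x)\ge\|\F(x)-\F(x')\|$, the claim follows with $c_\F$ equal to $c_L$ times the norm of that readout.

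\textbf{Main obstacle.} The main concern is whether this degenerate construction fits the paper's architecture class. The points to check are: $Q=K=0$ is allowed, RoPE applied to zero logits is harmless, the mask is causal, and the FNN may be zero. If the paper requires the output layer to include a softmax, monotonicity still transfers the gap, because softmax is locally bi-Lipschitz on bounded sets with a nondegenerate direction. I would also handle the restriction to non-output positions and confirm that $\len(x)\ge M$ is achieved simply by taking $n\ge M$.
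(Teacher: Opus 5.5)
Your proposal is correct and follows essentially the same route as the paper: $Q=K=0$ forces uniform causal attention, the FNNs are zero, and a nonnegative scalar perturbation starting at position $1$ is propagated forward. The bound then comes from an inductive lower bound via integral comparison and a linear readout. The only differences are cosmetic: you write each layer's average into a fresh coordinate (width $L+1$, recursion $D_\ell(j)=\frac1j\sum_{k\le j}D_{\ell-1}(k)$), whereas the paper uses width $2$ with $V=I_2$ and discards the residual term $m_j^{(\ell-1)}\ge 0$ by nonnegativity; and your softmax caveat is unnecessary, since the paper's output layer is linear.
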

\vspace{-2mm}
\begin{proof}
We construct a fixed transformer with uniform causal attention so that each layer averages the perturbation over all preceding positions. Starting from a perturbation at the first position, this repeated averaging accumulates one logarithmic factor per layer, and after \(L\) layers, the perturbation at the final position matches the upper bound in Theorem ~\ref{th-1} up to a constant factor. 
Refer to Appendix~\ref{app:proof_thm5} for the whole proof.
\end{proof}

\subsubsection{Infinite-precision transformer is not Turing Complete under  Confidence}
\label{in-pr}

In this section, we show that if the output of the transformer is required to be under a certain threshold, then even  infinite-precision transformers are not Turing complete.
Firstly, we define that:
\begin{definition}
$\F(x)$ is said to  output $y$ with {\em confidence} $\epsilon>0$, if $y=\widehat{\F}(x)$,
$$\F_{m_y}(x\oplus y[1:j-1])-\max_{k\ne {m_y}}F_k(x\oplus y[1:j-1])>\epsilon$$ 
for any $j\in[\len(y)]$, where $m_y=\argmax _{k\in[T+1]}\F_k(x\oplus y[1:j-1])$.    

We say that a transformer $\F$ can memorize a Turing machine $\TM$ with confidence $c$ if $\widehat{\F}_{\out}(x) = \TM(x)$ with confidence $c$  for any stoppable input $x$ of $\TM$.
\end{definition}


We define a class of Turing machines that cannot be memorized by unlimited precision transformers with positive confidence.
\begin{definition}
An input $x\in\Sigma_1^*$ is called $(C,\alpha)$-sensitive if $x\in\Sigma_1^*$ is  stoppable and there is an $x_1\in\Sigma_1^*\cap \mathbb{B}(x,\lceil C\len(x)^\alpha \rceil)$ that is a stoppable input such that $\TM(x)\ne \TM(x_1)$.

A Turing machine $\TM=(Q,\Sigma_1,\Sigma_2,\delta)$ is said to be \emph{non-tame} if there exist constants $C>0$ and $0\le\alpha<1$ such that $\TM$ has infinitely many $(C,\alpha)$-sensitive inputs.
\end{definition}

Based on this kind of Turing machine, we establish the following result about Turing incompleteness for
infinite precision transformers:
\begin{theorem}
\label{th-confd}
A non-tame Turing machine $\TM$ cannot be memorized by any fixed transformer $\F$ with any given confidence $c>0$.
\end{theorem}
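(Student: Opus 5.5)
We argue by contradiction. Suppose a fixed transformer $\F$ of depth $L$ memorizes the non-tame Turing machine $\TM$ with confidence $c>0$. Fix the constants $C>0$ and $0\le\alpha<1$ from the definition of non-tame. Take a $(C,\alpha)$-sensitive input $x$ with $n=\len(x)$ large, and let $x_1$ be the witness. Then $x_1\in\mathbb{B}(x,K)$ with $K=\lceil Cn^\alpha\rceil$, $x_1$ is stoppable, and $\TM(x)\neq\TM(x_1)$. Since there are infinitely many such $x$ over a finite alphabet, we may choose $n$ as large as we like.

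The plan is to follow the two generated sequences $\widehat{\F}(x)$ and $\widehat{\F}(x_1)$ token by token. Both are decoded with margin exceeding $c$ at every step, and their outputs after the last $\sigma_T$ differ, since $\widehat{\F}_{\out}(x)=\TM(x)\neq\TM(x_1)=\widehat{\F}_{\out}(x_1)$. Hence the full generated sequences differ. Let $j$ be the first step at which they disagree, and set $y=\widehat{\F}(x)[1:j-1]=\widehat{\F}(x_1)[1:j-1]$.

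The inputs $x\oplus y$ and $x_1\oplus y$ have the same length $N=n+j-1\ge n$. They differ in at most $K$ positions, all lying among the first $n$ positions. If the last position $N$ is the common last token of $y$ (or $x[n]=x_1[n]$ when $j=1$), then $x_1\oplus y\in\mathbb{B}(x\oplus y,K)$. The case where $j=1$ and the two inputs differ at position $n$ adds at most one extra term from the last token and needs separate handling.

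Corollary \ref{cor:k-perturb} then gives
\[
\|\F(x\oplus y)-\F(x_1\oplus y)\|_2\le K C_\F\frac{\ln^{L-1}N}{N}\le (Cn^\alpha+1)C_\F\frac{\ln^{L-1}n}{n},
\]
using that $\ln^{L-1}t/t$ is eventually decreasing and $N\ge n$. The right-hand side tends to $0$ because $\alpha<1$.

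Confidence $c$ gives a contradiction for small perturbations. At step $j$ the argmax tokens differ, say $a$ for $x$ and $b$ for $x_1$, with $\F_a(x\oplus y)-\F_b(x\oplus y)>c$ and $\F_b(x_1\oplus y)-\F_a(x_1\oplus y)>c$. Adding these two inequalities yields $2c<2\sqrt2\,\|\F(x\oplus y)-\F(x_1\oplus y)\|_2$. This is impossible once $n$ is large enough that the bound above falls below $c/\sqrt2$.

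The main obstacle is the last-position issue. The positions in $\mathbb{B}$ must agree at the final token, and the sensitivity bound only controls non-output positions. To handle the $j=1$ case where the inputs differ at position $n$, I would first try appending a common prefix step. Alternatively, I would reduce it by noting that the witness modification can be taken away from the last symbol, or I would accept a slightly weaker statement by adding one Lipschitz term. This would require checking that a last-token change only shifts the output by $O(1)$, and that fails to be small. The cleanest fix is to observe that CoT generation starts with at least one step for which the last token is shared. Otherwise, one can restrict to sensitive pairs agreeing at the last position, which can be arranged since $\alpha$-many changes can avoid one position up to adjusting $C$.
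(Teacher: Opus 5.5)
Your argument is essentially the paper's proof. You take the first disagreement of the two full generations, append the common prefix to both inputs, apply Corollary~\ref{cor:k-perturb} with $K=\lceil Cn^\alpha\rceil$ (using $N\ge n$ and the eventual monotonicity of $\ln^{L-1}t/t$), and contradict the confidence margin. Two corrections are needed.

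First, the ``last-position obstacle'' does not exist. The witness lies in $\mathbb{B}(x,K)$, and Definition~\ref{def1} already requires $x_1[n]=x[n]$. So for $j=1$ you have $x_1\oplus y=x_1\in\mathbb{B}(x,K)$ directly, and for $j>1$ the last token is the shared $y[j-1]$. Drop the speculative fixes in your final paragraph: nothing guarantees $j>1$, and the witness is given by $\TM$ and cannot be moved away from a position.

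Second, confidence as defined covers only the steps that produce tokens of $\widehat{\F}(x)$, not the terminating $\sigma_0$ decision. So you cannot assume a margin $>c$ on both sides. Since $a\neq b$, at least one of them is not $\sigma_0$; exchanging the roles of $x$ and $x_1$ if necessary, take it to be $a$. You then have $\F_a(s)-\F_b(s)>c$ on that side, and only $\F_b(s')-\F_a(s')\ge 0$ from the argmax on the other. This still gives $\|\F(s)-\F(s')\|_\infty>c/2$ and the same contradiction, exactly as in the paper.
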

\begin{proof}
Corollary \ref{cor:k-perturb} implies that, for any $\epsilon>0$, there exists $N$ such that for all $x$ for which $\len(x)>N$, $\sen(\F,x)<\epsilon$. 
In other words, once the sequence is sufficiently long, changing tokens at some non-output positions has only an infinitesimal influence on the final output position.
The complete proof is in the Appendix~\ref{app:proof_thm6}
\end{proof}

Most practically relevant Turing machines are non-tame and, therefore, by Theorem \ref{th-confd}, cannot be memorized by a single transformer with any fixed confidence.
In particular, $\TM_\Ar$ is non-tame since it contains infinitely many $(1,0)$-sensitive inputs.

\begin{remark}
The incompleteness with confidence result carries practical implications.
During the actual inference, the outputs of LLMs are probabilistic. 
To control generation diversity, sampling strategies such as Top-$K$ or Top-$p$ are typically employed.
When the input is very long, the transformer produces low-confidence results, which can prevent it from generating a correct final answer with a high probability.
This observation will be theoretically validated in the subsequent section.
\end{remark}

\subsubsection{Memorization of Turing machine  Fails Almost Everywhere}
\label{mtmfae}

The confidence-based result is sensible for a transformer using probabilistic sampling, but it does not hold for a transformer using greedy sampling. Therefore, in this section we drop the confidence-margin assumption. The trade-off is that the conclusion becomes measure-theoretic: the probability that a trained transformer is Turing complete is zero.


First, we define a distribution satisfied by the transformer.
\begin{definition}
For the fixed vocabulary set $\Sigma$, dataset $S\subset \Sigma^*\times\Sigma^*$, depth $D$, width $W$, head $H$,   and a random algorithm $\Alg$, let $\Alg(S,D,W,H,\Sigma)$ be the distribution of parameters of transformers with depth $D$, width $W$, head $H$, and vocabulary $\Sigma$ obtained after training on $S$ with $\Alg$. 
\end{definition}

Then, for a Turing machine, we define the following space:

\begin{definition}
\label{flly}
%
For a $G:\Sigma^*\to2^{\Sigma^*}$  and an $f:\N\to\R$, we say that \(G\) is an \(f\)-number of rule if, for every \(x\), \( |G(x)| \le f(\len(x))\).

Let $G$ be a $f$-number of rule and $\TM$ be a Turing machine. 
We say a transformer $\F\in M_{G}(\TM)$ if $\F$ can memorize  $\TM$ and $\CoT({\F}(x))\in G(x)$ for any $x$. In other words, there exist at most $f(\len(x))$ ways to generate the CoT for ${\F}(x)$ when it memorizes $\TM$.
%
\end{definition}

Then, we have the following result:
\begin{theorem}
\label{ae}
  For arbitrary $S,D,W,H,\Sigma$, assume that the following conditions are satisfied.
    
(1) $\TM=(Q,\Sigma_1,\Sigma_2,\delta)$ is a $(C,\alpha)$ non-tame Turing machine and $\Sigma_2\subset \Sigma_0$.

(2) $\Alg(S,D,W,H,\Sigma)$ is an absolutely continuous distribution.

(3) $G$ is a $f$-number of rules, and $f$ satisfies $\lim_{n\to\infty}\frac{f(n)(\ln n)^{(D-1)/2}}{\sqrt{n^{1-\alpha}}}=0$.

Then, $\Pr_{\F\sim \Alg(S,D,W,H,\Sigma)}(\F\in M_G(\TM))=0$.
That is, if fewer than $\widetilde{o}(\sqrt{n^{1-\alpha}})$ CoTs are allowed for inputs of length $n$, then the probability for $\F$ trained by algorithm $\Alg$ to memorize $\TM$ is zero.
\end{theorem}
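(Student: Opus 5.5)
The plan is to show that $M_G(\TM)$, viewed as a set of parameter vectors, is contained in a Lebesgue-null set. Condition (2) then immediately gives probability zero. The mechanism is the sensitivity bound of Corollary~\ref{cor:k-perturb}. On a $(C,\alpha)$-sensitive pair $x,x_1$, the outputs $\TM(x)\ne\TM(x_1)$ must be produced after the CoT. The two prompts differ in only $K=\lceil C n^\alpha\rceil$ non-output positions, where $n=\len(x)$. So at the first decoding step where the two generations must diverge, the logits differ by at most $K C_\F \ln^{D-1}(n')/n'$, with $n'$ the current prefix length. This means that at some step, for at least one of the two prompts, the top two logits of $\F$ are within $\eta_n:=2KC_\F\ln^{D-1}(n')/n'=\widetilde O(n^{\alpha-1})$ of each other.

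First I must locate such a near-tie step. Fix the CoTs $c\in G(x)$ and $c_1\in G(x_1)$. The continuations $x\oplus c\oplus\sigma_T\oplus\TM(x)$ and $x_1\oplus c_1\oplus\sigma_T\oplus\TM(x_1)$ are different token strings. At their first disagreement index, the prefixes differ in at most $K$ positions if $c$ and $c_1$ agree up to that point. In that case the argmax flips between two prefixes that are $\mathbb B(\cdot,K)$-close, which forces a margin below $\eta_n$ for one of them. If $c$ and $c_1$ differ early, the prefixes are identical up to the disagreement point except for the $K$ input positions, so the same reasoning applies at the first divergence. In either case I get a near-tie event $E(p)=\{\theta:\ \text{top-two gap of }\F_\theta(p)<\eta\}$ for some prefix $p$ determined by $(x,c)$ or $(x_1,c_1)$. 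The number of candidate prefixes is at most about $|G(x)|\,|G(x_1)|\cdot(\text{length})\le f(n)^2\cdot\mathrm{poly}$. This count is where the square root in condition (3) should enter: I will need $f(n)^2\cdot\Pr(E)\to0$, and the hypothesis is equivalent to $f(n)^2 \ln^{D-1}n/n^{1-\alpha}\to0$.

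Next comes the probabilistic step, which I expect to be the main obstacle. I need $\Pr_\theta(E(p))\lesssim \eta$ uniformly in $p$, or at least as $o(1)$ with the right rate. The approach I would try first is to restrict to a compact set of parameters carrying mass $\ge1-\varepsilon$ on which the density is bounded and $C_\F$ is uniformly bounded. There I would exhibit a single parameter direction, for instance the output-layer bias of the relevant token, along which the logit gap moves with unit speed while everything else is fixed. A Fubini argument along that coordinate then gives $\Pr(E(p)\cap K_\varepsilon)\le c_\varepsilon\eta$. The union bound yields
\[
\Pr\big(M_G(\TM)\cap K_\varepsilon\big)\le c_\varepsilon f(n)^2\,\mathrm{poly}\cdot\frac{n^{\alpha}\ln^{D-1} n}{n},
\]
for every one of the infinitely many sensitive lengths $n$. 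Here I must take care that the prefix length $n'\ge n$ only helps, and I will absorb the extra count of step indices into the choice of the first-divergence step rather than a sum, to avoid losing a polynomial factor. Letting $n\to\infty$ along sensitive inputs using (3), and then $\varepsilon\to0$, gives $\Pr(\F\in M_G(\TM))=0$. The delicate points are the uniformity of $C_\F$ over parameters, which is handled by the compact truncation, and ensuring that the near-tie event at the divergence step is a fixed measurable event, which is why I enumerate over the choices in $G$ rather than over the random $\F$.
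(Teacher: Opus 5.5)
Your proposal is correct and is essentially the paper's argument. You take the first disagreement of the two admissible complete generations on a sensitive pair, with the divergence step fixed by the pair exactly as you note. You then use Corollary~\ref{cor:k-perturb} to confine the relevant output-bias difference $c_a-c_b$ to an interval of width $O\bigl(n^{\alpha-1}\ln^{D-1}n\bigr)$, and take a union over the at most $f(n)^2$ CoT pairs.

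The only difference is the measure-theoretic wrapper. The paper fixes all non-bias parameters $\vartheta$, so $C_{(\vartheta,0)}$ is a genuine constant and Theorem~\ref{th-1} suffices as stated. It shows every bias-section of $M_G(\TM)$ is Lebesgue-null and concludes by Tonelli (using Borel measurability of $M_G(\TM)$) and absolute continuity. Your truncation to a bounded high-mass set with bounded density additionally needs $C_\F$ to be uniformly bounded over bounded parameter sets. That holds on inspecting the proof of Theorem~\ref{th-1}, but it is not part of the theorem's statement.
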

\begin{proof}
    Theorem ~\ref{th-1} bounds the change of the logits between two nearby inputs, but this alone does not determine whether their predicted tokens are different. We therefore fix all parameters except the output bias so that the token-selection condition can be expressed directly as a constraint on the bias difference. For each pair of admissible generation sequences, consider their first disagreement. Theorem ~\ref{th-1} then restricts the corresponding bias difference to a thin interval, giving a thin slab in the bias space. Since there are at most $f(n)^2$ such pairs, the feasible parameter set is covered by $f(n)^2$ slabs, whose total measure tends to zero under the assumed growth condition. Hence, the memorization parameter set has probability zero.
    Refer to Appendix~\ref{app:proof_thm7} for the whole proof.
\end{proof}

The three conditions of Theorem \ref{ae} are reasonable.
The first two conditions are clearly reasonable.
For condition (3), the number of CoT could be considered the number of ways to solve the problem, which need not be large, such as the constructed CoT in \citep{yx}.

For $\TM_\Ar$, we have the following stronger result (the proof is given in Appendix \ref{app-Tarithae}).
\begin{corollary}
\label{Tarithae}
For $\TM_\Ar$, we restrict the generation rules $G$ to a fixed deterministic one-step calculation rule. Then, if $\Alg(S,D,W,H,\Sigma)$ is an absolutely continuous distribution, then $\Pr_{\F\sim \Alg(S,D,W,H,\Sigma)}(\F\in M_G(\TM_\Ar))=0$.
\end{corollary}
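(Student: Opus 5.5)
The plan is to reduce the corollary to Theorem \ref{ae}. For that we check its three hypotheses with $D$ the depth of the transformer. Condition (2) is given directly. For condition (1), $\Sigma_2\subset\Sigma_0$ holds because $\sigma_T$ is taken to be the symbol `$=$'. We must then check that $\TM_\Ar$ is $(C,\alpha)$ non-tame with $\alpha=0$. As already noted after Theorem \ref{th-confd}, $\TM_\Ar$ has infinitely many $(1,0)$-sensitive inputs. Concretely, take $x_n=1+1+\cdots+1+1$ with $n$ summands (length $2n-1$), and let $x_n'$ be $x_n$ with its first digit changed to $2$. Both inputs are stoppable, they differ in exactly one non-output position (not the last), and they have different values $n$ and $n+1$. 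So $\TM_\Ar$ is $(1,0)$ non-tame.

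For condition (3), a fixed deterministic one-step calculation rule assigns to every input $x$ exactly one admissible CoT. That CoT is the sequence of intermediate expressions produced by performing one elementary reduction at a time in a prescribed order, as in Example \ref{eg:mem_turing}. Hence $G$ is an $f$-number of rule with $f\equiv 1$, and we need
\[
\lim_{n\to\infty}\frac{(\ln n)^{(D-1)/2}}{\sqrt{n}}=0 ,
\]
which holds for every fixed $D$. Theorem \ref{ae} then gives $\Pr_{\F\sim \Alg(S,D,W,H,\Sigma)}(\F\in M_G(\TM_\Ar))=0$.

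The step most likely to need care is the non-tame witness in condition (1). The definition of $(C,\alpha)$-sensitivity needs the perturbed input to lie in $\Sigma_1^*\cap\mathbb{B}(x,1)$, so it must keep the same length and the same last symbol. It must also remain a well-formed (stoppable) arithmetic expression. The choice above of changing a leading digit $1\mapsto 2$, away from the final token, satisfies all of these requirements. I would also check that the proof of Theorem \ref{ae} only uses the sensitive pairs $(x,x_1)$ together with their unique CoTs, so that a single CoT per input ($f\equiv1$) is enough, and that no further uniformity over $G$ is needed.

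If one wants a self-contained argument instead, I would repeat the slab argument of Theorem \ref{ae} directly. Fix all parameters except the output bias. For each sensitive pair $(x_n,x_n')$, follow the two deterministic CoTs to their first disagreement; it must occur, since the final answers differ. At that position the two prefixes differ in $O(1)$ non-output positions, so by Corollary \ref{cor:k-perturb} the logit differences differ by at most $C_\F\ln^{D-1}(n)/n$. This forces the bias difference of the two competing tokens into an interval of that width. Letting $n\to\infty$, the feasible bias set has Lebesgue measure zero, and by absolute continuity of $\Alg$ its probability is zero.
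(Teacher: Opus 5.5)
Your proposal is correct, but it argues differently from the paper.

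\textbf{Your route.} You treat the corollary as the special case $f\equiv 1$, $\alpha=0$ of Theorem \ref{ae}. A deterministic rule gives $|G(x)|\le 1$, so condition (3) reduces to $(\ln n)^{(D-1)/2}/\sqrt{n}\to 0$. The family $1+\cdots+1$ versus $2+1+\cdots+1$ then witnesses $(1,0)$-non-tameness.

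\textbf{The paper's route.} The paper does not invoke Theorem \ref{ae}. It re-runs the output-bias slab argument directly on an explicit nested family $x_n=(A_n+1)$, $x_n'=(A_n*1)$, with $A_1=0$ and $A_{n+1}=(A_n+0)$. At every CoT step exactly one operator is calculable, so the admissible CoT is unique. The complete generations $y_n,y_n'$ are therefore parameter-independent. This lets the paper pass to a subsequence on which the first-disagreement tokens $a,b$ are constant. It then concludes that $c_a-c_b$ is pinned to a single value $r_\vartheta$, i.e.\ each output-bias section lies in an affine hyperplane. Tonelli and absolute continuity finish the proof.

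\textbf{What each buys.} Your route is shorter. It makes clear that the corollary needs nothing beyond Theorem \ref{ae}, so it is really a specialization rather than a ``stronger'' result. The paper's route is self-contained and uses only that the sensitivity tends to zero. It also yields a sharper structural description of the feasible bias set.

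If you relied on your fallback sketch instead, you would need the two ingredients the paper supplies:
\begin{itemize}
\item handle the fact that the competing pair $(a_n,b_n)$ may vary with $n$, either by a subsequence or by the slab-measure bound;
\item pass from measure-zero sections to the full parameter set via Tonelli, using Lemma \ref{lem:MG-borel}.
\end{itemize}

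\textbf{A small slip.} Choosing $\sigma_T$ to be `$=$' does not give $\Sigma_2\subset\Sigma_0$. Since Example \ref{eg:mem_turing} puts `$=$' in $\Sigma_2$, it does the opposite. This is harmless: either drop `$=$' from the tape alphabet (the machine never needs it), or note that if $\Sigma_2\not\subset\Sigma_0$ then $M_G(\TM_\Ar)=\varnothing$ by definition and the probability is trivially zero. But the justification as written is backwards.
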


\vspace{-2mm}
\section{Memorizing Turing machine by Agent}
\label{sec-inf}
\vspace{-2mm}
In this section, we formally define a class of agents that incorporates multiple transformers along with simple tools and show that such agents are Turing complete.

\subsection{Definition of Agent}
\label{sec-agent}

First, we introduce the agent $A$ and how it operates.
\begin{definition}
\label{def:agent_frame}
An agent $A=(\F_d,\F_c,\mathbf{M},v_{\rm{end}},v_d,v_r,v_{\out})$ defined over the vocabulary set $\Sigma$ is specified as follows:

 (a1) $\widehat{\F}_d: \Sigma^*\to\Sigma^*$ is a transformer that serves as the decision module of the agent $A$;

(a2) $\widehat{\F}_c: \Sigma^*\to\Sigma^*$ is a transformer that serves as the execution module of the agent $A$;

(a3) $\mathbf{M}$ is the memory module of the agent $A$, which is a space to store sequences in $\Sigma^*$;

(a4) $v_{\rm{end}}:    \Sigma^*\to\{0,1\}$, $v_{\out}: \Sigma^*\to\Sigma^*$, $v_d:\Sigma^*\times\Sigma^*\to\Sigma^*$, and $v_r:\Sigma^*\times\Sigma^*\to\Sigma^*$ represent the tools used by the agent for stop checking, output, module connection, and memory writing.
 
   Given an input $x\in\Sigma^*$ to $A$, such an agent operates in the following manner:

(b1) Let $M$ be the sequence in the $\mathbf{M}$, and let $M=x$ at first; 

(b2) Calculate $v_1=\widehat{\F}_d(M)$. If $v_{\rm{end}}(v_1)=0$, go to (b3); otherwise, go to (b5);

(b3) Calculate $v_2=\widehat{\F}_{c}(v_d(M,v_1))$;

(b4) $M=v_r(M,v_2)$ and go to (b2);

(b5) Output $A(x)=v_{\out}(M)$.
\end{definition}

This agent is defined such that its input space, memory space, and output space all lie within $\Sigma^*$. Both the decision and execution models are defined by the corresponding transformer, and $v_d, v_{\rm{end}}, v_{\out}$ and $v_r$ represent external tools that the agent can call. The five steps (b1)-(b5) represent the five common parts in the operation of an agent: initialization, decision-making module operation, stop determination, execution module operation, writing in the memory module, and final output.
For $q\in\Z_+\cup\{\infty\}$, $A_q$ means that each transformer in the agent has a precision of $q$. We further define that:

\begin{definition}
We say that an agent $A_q$ can memorize a Turing machine $\TM=(Q,\Sigma_1,\Sigma_2,\delta)$ if it holds that $\Sigma_2\subset\Sigma_0$ and $A_q(x) = \TM(x)$ for every stoppable input $x$ of $\TM$.
\end{definition}

\subsection{Agent is Turing Complete.}
\label{atc}

We show that the agent defined in Section \ref{sec-agent} is Turing complete.
\begin{theorem}
\label{agent}
Assume that $|\Sigma|\ge|\Sigma_2|+6$, there are functions $v_d$, $v_{\out}$, $v_{\rm{end}}$, and $v_r$ computable in linear time, such that for any Turing machine $\TM=(Q,\Sigma_1,\Sigma_2,\delta)$, one can construct an agent $A_q$ which is able to memorize $\TM$. Moreover, $A_q$ satisfies the following:

(1) The transformers $\F_d,\F_c$ have constant depth $O(1)$, width $O(|Q|\cdot|\Sigma_2|)$, number of heads $O(\ln|Q|)$, and numerical precision $q=O(\ln|Q|)$.

(2) The agent executes step (b2) at most $t_\TM(x)+1$ times; the memory module requires at most $O(t_{\TM}(x)+\len(x)+\log_2
(|Q|))$ space; the length of output for $\F_d,\F_c$ in each reasoning is $O(\ln|Q|)$.

As a consequence, the agent $A_q$ is computationally the same as $\TM$.
\end{theorem}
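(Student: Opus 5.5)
The approach is to store the full Turing machine configuration in the memory module and have the two transformers compute only the \emph{local} transition. Each transformer sees a constant-length window around the head, so it never faces unbounded-length inputs, which is exactly the regime where Theorems \ref{zzdd1} and \ref{th-1} cause trouble.

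\textbf{Memory encoding.} The memory $M$ stores the tape as a string over $\Sigma_2$. It contains a distinguished head-marker token placed immediately before the scanned cell, followed by a binary encoding of the current state $q\in Q$ of length $\lceil\log_2|Q|\rceil$, written with two extra tokens $0',1'$. A separator token and a halting flag are also used. This accounts for the $|\Sigma_2|+6$ tokens: marker, $0'$, $1'$, separator, halt token, and $\sigma_T$ (whose role is played by $\sigma_0$ as needed).

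Initially $M=x$, and the first call of $\widehat{\F}_d$ is handled by letting $v_d$ (or equivalently a normalization inside $v_r$) insert the marker and the start-state code when no marker is present. All of these are simple string scans, hence linear time.

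\textbf{Tool definitions.} I define $v_d(M,v_1)$ to extract from $M$ the window consisting of the state code and the scanned symbol; this is a linear scan for the marker. Then:
\begin{itemize}
\item $\widehat{\F}_d$ reads the last $O(\ln|Q|)$ tokens of this window, or, with $M$ itself as input, I let the state code always be stored at the end of $M$. It outputs either the halt token, if the state is halting, or a ``continue'' token. Accordingly, $v_{\rm end}(v_1)=1$ exactly when $v_1$ is the halt token.
\item $\widehat{\F}_c$ receives the window (state bits and symbol $s$) and outputs $\delta(q,s)=(q',s',d)$ encoded as $O(\ln|Q|)$ tokens: the new symbol, the move direction, and the new state bits.
\item $v_r$ rewrites the scanned cell with $s'$, moves the marker left or right (appending a blank $B$ if the marker falls off the end, which is what gives the $O(t_\TM(x)+\len(x))$ memory bound), and replaces the state code.
\item $v_{\out}$ strips the marker, the state code and the blanks, returning $\TM(x)$.
\end{itemize}
Each (b2)--(b4) loop thus simulates exactly one step of $\TM$, which gives at most $t_\TM(x)+1$ executions of (b2) and yields $A_q(x)=\TM(x)$ by induction on steps.

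\textbf{Transformer construction.} This is the main technical step, and it reduces to realizing a finite lookup table $(q,s)\mapsto\delta(q,s)$ autoregressively with a constant-depth, finite-precision transformer whose input has length $O(\ln|Q|)$. I would proceed as follows.
\begin{itemize}
\item Use $O(\ln|Q|)$ attention heads, each attending with a RoPE/positional pattern to a specific bit position of the state code relative to the last token, so that the final position aggregates the state bits and the symbol $s$ into its residual stream. Because the window length is bounded by $O(\ln|Q|)$, precision $O(\ln|Q|)$ suffices to make the attention sharp enough; softmax weights separated by a margin round correctly.
\item Have an FNN of width $O(|Q||\Sigma_2|)$ compute the one-hot indicator of the pair $(q,s)$ from the bits, using one ReLU unit per pair that fires iff all bits match.
\item Read off, via a linear layer, the output token whose index is determined by $(q,s)$ together with how many output tokens have already been generated. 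Attending to the count of generated tokens is again bounded by $O(\ln|Q|)$, so the same bit-extraction trick applies.
\end{itemize}

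\textbf{Main obstacle.} The hard part is checking the finite-precision rounding. I would handle it by making all logits integer multiples with gaps of $\Theta(\ln|Q|)$, so that the rounded softmax still concentrates on the correct position. With that in place, $\F_d$ is a trivial special case of the same construction. The computational equivalence with $\TM$ then follows because, conversely, every tool and each bounded-size transformer can be simulated by a Turing machine.
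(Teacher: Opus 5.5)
\textbf{The $\F_d$ construction has a genuine gap.} Your treatment of $\F_c$ matches the paper: RoPE heads copy the state bits of a bounded window, one ReLU indicator fires per pair $(q,s)$, and the output is serialized by generation phase. The argument breaks for $\F_d$. In step (b2) the decision module runs on the entire memory, $v_1=\widehat{\F}_d(M)$. The window $v_d(M,v_1)$ is formed only afterwards. So $\F_d$ necessarily receives inputs of length about $\len(x)+t_\TM(x)$, which contradicts your premise that no transformer sees unbounded inputs. It is therefore not a ``trivial special case'' of the window construction, whose precision argument you justified by bounded length.

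Reading the state code at fixed offsets from the end of a long $M$ with positional attention fails under $q$-precision. By Lemma~\ref{xhj} the RoPE matrices are periodic with some period $C$. Hence positions $n-h,\,n-h-C,\,n-h-2C,\dots$ all receive the same maximal logit, and the head averages their tokens. Once there are more than $10^q$ of them, the normalizer becomes $\mathrm{Inf}$ and the head outputs $0$; this is the mechanism of Theorem~\ref{zzdd1}. It is also why Lemma~\ref{lm2} guarantees exact copying only for lengths at most $L$.

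What is missing is a length-robust halting test. The paper encodes the halting state with a dedicated token: $B_{|Q|}(|Q|)=-\mathbf{1}$, and $-1$ appears nowhere else in memory. Then $\F_d$ only has to detect whether $-1$ occurs in $M$. Lemma~\ref{lm1} does this at every length, because content-based logits separate matching from non-matching positions by $q^2$. All non-matching exponentials then round to $0$ however many there are, and the number of matches is at most $\lceil\log_2|Q|\rceil+1\le e^q$. You could instead content-gate your attention, with keys favouring the tokens $0',1'$ that occur only in the state code, including on the raw first input $M=x$. That would have to be argued separately; it does not follow from the window construction.

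\textbf{Your tools depend on $\TM$.} In the statement, $v_d,v_r,v_{\rm end},v_{\out}$ are fixed before $\TM$ is chosen. Having $v_d$ or $v_r$ insert ``the start-state code'' makes the tools depend on $\TM$, since that code has length $\lceil\log_2|Q|\rceil$. The paper avoids this in two ways. First, $f_{\mathrm{Search}}$ returns $B\oplus x[1]$ when no markers are present, and the $\TM$-dependent $\F_c$ interprets that as the initial configuration. Second, $\F_c$ emits the already rearranged local block $R_m$, so $v_r$ only splices strings and never interprets states or directions. If you keep a $v_r$ that moves the marker itself, you must fix a $\TM$-independent encoding of the three directions that $v_r$ can parse. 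Your list of six extra symbols does not yet provide one.
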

\begin{proof}
Each cycle of the agent memorizes one step of a Turing machine. 
The decision module decides whether to halt by checking for the existence of a state $q_s$.
The execution module repeatedly memorizes the Turing machine’s transfer function and saves the outcomes in the memory module.
Finally, the output is produced according to the content in the memory module. 
We show the complete proof in Appendix~\ref{app:proof_thm8}.
\end{proof}

In terms of time and space complexity, for any input $x$, such a agent needs $O(\Poly(|Q|\cdot|\Sigma_2|)(t_\TM(x)+\len(x)))$-time and $O(t_\TM(x)+\len(x)+\ln|Q|)$ space to memory $\TM(x)$. This implies that the agent can memorize $\TM$ with polynomial overhead about $|Q|,|\Sigma_2|$, but not $t_\TM(x)$. Therefore, such an agent is computationally the same as the Turing machine if $|Q|$ and $|\Sigma_2|$ are considered constants. From this, it can be seen that by using simple but appropriate tools that don't depend on the target Turing machine, agents can achieve Turing completeness. 

Moreover, the two transformers $\F_c,\F_d$ in the agent can be the same one. 
We can consolidate   $\F_c$ and $\F_d$ into a single transformer, $\F_s$, and include a suitable prompt so that $\F_s$ can distinguish whether it is performing decision-making or execution.
The detail is given in the Appendix \ref{thcap}.





In summary, this section states that for agents, by choosing appropriate tools (that do not depend on the specific target Turing machine), we can construct the decision model and the execution model to memorize any Turing machine.

\begin{remark}
\label{r34}
In comparison with \citep{chen2026neural}, our results exhibit the following differences.
Our agent maps directly to real-world agent architectures: the decision transformer $\F_d$  corresponds to the ``Planner" that orchestrates workflows; the execution transformer $\F_c$  mirrors the ``Worker" models; and the memory $M$ represents the context management mechanism. 
We use finite-precision transformers, and the agent is computationally the same as $\TM$.
\end{remark}
\section{Experiment}
\label{sec:experiment}
Our theoretical analysis characterizes a computational separation between a single transformer and an agent that can repeatedly invoke transformers while maintaining a memory module. 
To complement our theoretical analysis, we empirically examine whether iterative LLM execution with external memory and tools improves reliability in increasingly long computations.  
We compare one-shot inference with an agent that performs the computation step by step, focusing on how their accuracy changes as the number of required operations grows.

\subsection{Experiment Settings}
We use arithmetic $\Ar$ because it provides unambiguous answers and direct control over computation length. 
The benchmark contains 1,300 expressions spanning 1 to 30 arithmetic operators and covers addition, subtraction, multiplication, division, and mixed operations. 
For each evaluated language model, we compare two inference architectures.
Refer to Appendix~\ref{app:experimental-details} for further details.


\textbf{Single model.}
The model receives the complete expression in one request and must return its final answer before that request terminates. It may produce textual reasoning within the response, but it cannot call a tool, update external state, or invoke the model again.

\textbf{Agent.}
Each component of the implementation directly instantiates one element of our theoretical agent framework in Definition~\ref{def:agent_frame}. The memory module $\mathbf{M}_t$ stores the expression in iteration $t$, where $\mathbf{M}_0$ keeps the original input expression. 
The decision component $\F_d$ is an LLM that selects the identifier of the next operator.
The linear-time read function $v_d$ retrieves that operator and its two adjacent operands from $\mathbf{M}_t$.
The execution module $\F_c$ is an LLM that computes this local binary operation.  
Finally, the linear-time replacement function $v_r$ writes the value provided by $\F_c$ back to memory to construct $\mathbf{M}_{t+1}$.  
The process repeats until one value remains.  
Neither $v_d$ nor $v_r$ performs arithmetic, chooses an operator, or provides correctness feedback.
We show a real example of the agent workflow in Appendix~\ref{app:agent-example}.

\begin{table}[t]
    \footnotesize
    \centering
    \begin{minipage}[b]{.49\textwidth}
    \footnotesize
    \centering
    \begin{tikzpicture}
        \begin{axis}[
            width=7.2cm,
            height=4.2cm,
            axis lines=left,
            xlabel={\# Arithmetic Operators},
            ylabel={Accuracy},
            tick align=inside,
            ymajorgrids=true,
            grid style={dashed, gray!30},
            xmin=0, xmax=32,
            ymin=0, ymax=1,
            legend style={
                draw=none,
                at={(.9,1.25)},
                font=\scriptsize, 
                legend columns=3,
                column sep=2pt,
            },
            legend image post style={xscale=.5}
        ];
    
        \addplot[blue, dashed, thick, no marks, forget plot] coordinates {
            (1,0.91)
            (2,0.6)
            (3,0.35)
            (4,0.2)
            (5,0.15)
            (6,0.07)
            (8,0.14)
            (10,0.07)
            (12,0.08)
            (15,0.14)
            (20,0.03)
            (25,0.02)
            (30,0)
        };
        \addplot[blue, thick, no marks] coordinates {
            (1,0.86)
            (2,0.75)
            (3,0.63)
            (4,0.54)
            (5,0.49)
            (6,0.52)
            (8,0.27)
            (10,0.22)
            (12,0.23)
            (15,0.21)
            (20,0.2)
            (25,0.21)
            (30,0.23)
        };
        \addplot[red, dashed, thick, no marks, forget plot] coordinates {
            (1,1)
            (2,0.98)
            (3,0.97)
            (4,0.94)
            (5,0.89)
            (6,0.8)
            (8,0.77)
            (10,0.46)
            (12,0.35)
            (15,0.28)
            (20,0.28)
            (25,0.2)
            (30,0.24)
        };
        \addplot[red, thick, no marks] coordinates {
            (1,1)
            (2,0.95)
            (3,0.89)
            (4,0.88)
            (5,0.88)
            (6,0.81)
            (8,0.81)
            (10,0.75)
            (12,0.71)
            (15,0.57)
            (20,0.47)
            (25,0.5)
            (30,0.42)
        };
        \addplot[violet, dashed, thick, no marks, forget plot] coordinates {
            (1,1)
            (2,0.98)
            (3,0.99)
            (4,0.98)
            (5,0.95)
            (6,0.95)
            (8,0.83)
            (10,0.79)
            (12,0.67)
            (15,0.51)
            (20,0.44)
            (25,0.22)
            (30,0.17)
        };
        \addplot[violet, thick, no marks] coordinates {
            (1,1)
            (2,0.99)
            (3,0.95)
            (4,0.95)
            (5,0.87)
            (6,0.93)
            (8,0.87)
            (10,0.85)
            (12,0.75)
            (15,0.77)
            (20,0.73)
            (25,0.66)
            (30,0.52)
        };

        \legend{Llama3.1-8B, Qwen3-8B, GLM4.5-air};
        
        \end{axis}
    \end{tikzpicture}
    \captionof{figure}{Accuracy rate of single model and agent as the number of required operations grows. Solid: agent; dashed: single model.}
    \label{fig:acc}
    \end{minipage}
    \hfill
    \begin{minipage}[b]{.49\textwidth}
    \centering
    \begin{tabular*}{.99\linewidth}{@{\hspace{5pt}\extracolsep{\fill}} l c c @{\hspace{5pt}}}
        \toprule
        Failure Reason & Single & Agent\\
        \midrule
        Protocol & 11 & 90 \\
        Mathematics & 341 & 126 \\
        \quad Ordering & -- & 60\\
        \quad Arithmetic & -- & 66\\
        \midrule
        Total errors & 352 & 216 \\
        \midrule
        Accuracy & 72.92\% & 83.38\% \\
        \bottomrule
    \end{tabular*}
    \captionof{table}{Failure breakdown for GLM4.5-air in two settings. Protocol failures result from output truncation or invalid format, whereas mathematics failures reflect errors in operator ordering or arithmetic execution.}
    \label{tab:failure}
    \end{minipage}
\end{table}

\subsection{Results}


Figure~\ref{fig:acc} compares performance across expression lengths. 
For the three models, the agent increasingly and substantially outperforms its single-model counterpart as the number of operators grows, demonstrating that, compared to standard CoT alone, leveraging simple tool calls within a loop execution  can effectively enhance the ability to solve long-horizon problems.
At 30 operators, for example, agent accuracy exceeds single-model accuracy by 23, 18, and 35 percentage points for Llama3.1-8B, Qwen3-8B, and GLM4.5-air, respectively.
For short expressions, however, the agent may match or underperform the single model.
The reason is that agent introduces multiple decision and execution steps, and an error at any one of these steps is sufficient to invalidate the final answer.  

Table~\ref{tab:failure} reports the failure analysis specifically for GLM4.5-air.  
Protocol failures denote outputs that reach the generation limit or are in an invalid format, whereas mathematical failures capture either an incorrect operator ordering or erroneous arithmetic execution.
Most importantly, the agent reduces mathematical failures markedly, from 341 under the single-model setting to 126, of which 60 are ordering errors and 66 are arithmetic errors.
Although the agent incurs more protocol failures at the task level, all 90 arise from limitations in experimental conditions or the insufficient ability of the model to strictly follow instructions.
\vspace{-2mm}
\section{Conclusions}
\vspace{-2mm}
This paper investigates the Turing‑completeness of transformers and agents under unbounded‑length inputs. Prior results show that transformers can memorize Turing machines for bounded‑length inputs. We show that finite‑precision transformers suffer from precision errors and cannot memorize non‑converging Turing machines such as Arithmetic. Even infinite-precision transformers exhibit sensitivity decay: input perturbations are diluted for long sequences, making them unable to memorize non‑tame Turing machines under positive output confidence. Random training further yields a zero probability of obtaining such a transformer with limited chain‑of‑thought. By contrast, we show that agents with decision, execution, and memory modules are Turing Complete even with finite precision transformers. At last, we empirically validate these theoretical results.

{\bf Limitations.} 
One of the unsolved problems in this paper is whether the infinite precision transformer is Turing complete without any assumptions. Another future research method is how to design the optimal agent structure so that it has the strongest computing power when the scales of the transformers in it are fixed.






\bibliography{main}
\bibliographystyle{iclr2027_conference}

\newpage
\appendix

\section{Detailed Notations}
\label{app-N}
Appendix A clarifies the main definitions, assumptions, and concepts used in the paper. 
\subsection{Complete Turing machine}
Firstly, we define a Turing machine as follows:

\begin{definition}
 A Turing machine $\mathbb{T}$ is a tuple
\[
\mathbb{T} = (Q,\Sigma_1,\Sigma_2,\delta)
\quad\hbox{or more detailed}\quad
\mathbb{T} = (Q,\Sigma_1,\Sigma_2,\delta,q_0,q_s,B)
\]
defined subject to the following conditions:

(a1): $Q$ is a finite set of states containing two distinguished states $q_0,q_s\in Q$.
The symbol $q_0$ is the initial state, and $q_s$ is the unique halt state.

(a2): $\Sigma_1$ is a finite input alphabet, and $B\notin\Sigma_1$, where $B$ denotes the blank tape symbol.

(a3): $\Sigma_2$ is the tape alphabet satisfying $\Sigma_1\cup\{B\}\subseteq \Sigma_2$.

(a4): The transition function is a transfer function
    \[
    \delta \colon (Q\setminus\{q_s\}) \times \Sigma_2 \to Q \times \Sigma_2 \times \{-1,0,1\}.
    \]

A \textit{configuration} of $\mathbb{T}$ is a triple $(q_p,\sigma,w_p)$ where $q_p\in Q$ is the current state,
$\sigma\colon \mathbb{Z}\to \Sigma_2$ is the tape function, assigning to each integer tape position a symbol from $\Sigma_2$,
and  $w_p\in\mathbb{Z}$ is the current position of the read/write head.

The \emph{computation rules for $\mathbb{T}$:}

(b1): \textit{Initial configuration.}
    Given an input string $x=x_0x_1\cdots x_{n-1}\in\Sigma_1^*$, the initial tape satisfies
    \[
    \sigma(k)=
    \begin{cases}
        x_k & 0\le k\le n-1,\\
        B & \text{otherwise}.
    \end{cases}
    \]
    The initial configuration sets $q_p=q_0$, $w_p=0$.

(b2):  \textit{Single transition step.}
    Suppose the current configuration satisfies $q_p\neq q_s$.
    Let $\sigma(w_p)$ be the symbol at head position $w_p$. Compute
    \[
    (q_\delta,\sigma_\delta,s_\delta) = \delta\big(q_p,\sigma(w_p)\big).
    \]
    The machine transitions to a new configuration by updating:
    \[
    q_p := q_\delta,\quad \sigma(w_p) := \sigma_\delta,\quad w_p := w_p + s_\delta.
    \]

(b3):  \textit{Halt condition and output.}
    If $q_p = q_s$, the machine halts.
    Let $L = \min\{k\in\mathbb{Z}\mid \sigma(k)\neq B\}$ and $R = \max\{k\in\mathbb{Z}\mid \sigma(k)\neq B\}$.
    The output string is $\sigma(L)\sigma(L+1)\cdots\sigma(R)$.
    If no such $L,R$ exists (all tape symbols are blank), the output is the empty string.
\end{definition}

\begin{remark}
Without loss of generality, in this paper, we assume that the final output of the Turing machine does not contain the symbol $B$.
\end{remark}
\subsection{Complete transformer structure}
We consider a fixed-architecture soft-attention transformer. The model consists of an embedding layer, several transformer blocks, and an output layer.

\textbf{Embedding layer $E_{\rm{emb}}$.}
The transformer first maps each symbol $\sigma_i\in\Sigma$ to a vector $v_i\in\mathbb R^d$, where $d$ is the embedding dimension. For an input sequence 
$x=(\sigma_{i_1},\sigma_{i_2},\dots,\sigma_{i_n})\in\Sigma^*$ with length $n$, its embedded representation is written as $x^{(0)}=(v_{i_1},v_{i_2},\dots,v_{i_n})^\top\in\mathbb R^{n\times d}$. 
When no confusion arises, we do not distinguish between the symbolic sequence $x$ and its embedded matrix representation and use the same notation for both.

\textbf{Multi-head attention.}
For an input $x\in\mathbb R^{n\times d}$, the output of the $h$-th attention head is defined by
\[
\ATT_h(x)=\softmax(R(xQ_hK_hx^\top)+M)\,xV_h,
\]
where $Q_h\in\mathbb R^{d\times W}$, 
$K_h\in\mathbb R^{W\times d}$, $V_h\in\mathbb R^{d\times d}$. Without loss of generality, we  take $W=d$ in this paper. 
$M\in\{-\infty,0\}^{n\times n}$ 
is the causal mask satisfying $M_{i,j}=-\infty\quad\Longleftrightarrow\quad j>i$. 

Relative position embedding $R(xQ_hK_hx^\tau)\in\R^{n\times n}$ is used,  where the $(u,v)$-th weight of $R(xQ_hK_hx^\tau)$ is $x[u]Q_hA_{u-v}K_hx^\tau[v]$, $x[i]$ is the $i$-th row of $x$, and $A_{u-v}\in[-1,1]^{W\times W}$ is the matrix for the relative position embedding. Note that $A_{u-v}$ does not depend on which head or layer is used. 
Summing over all $H$ heads gives the total multi-head attention output:
\[
\ATT(x)=\sum_{h=1}^H \ATT_h(x).
\]

\textbf{Position Embedding.}

For position encoding, 
according to the commonly used form of RoPE, for the transformer with width $W$, the position encoding with parameters $\{C_i\}_{i=1}^{[(W+1)/2]}$ is defined as: 
$$(A_{n-m})_{i,i}=\cos((n-m)C_{[(i+1)/2]}),\ i\in[W]$$ 
$$(A_{n-m})_{2i-1,2i}=\sin((n-m)C_{i}),\ i\in[[\frac{W+1}{2}]]$$ 
$$(A_{n-m})_{2i,2i-1}=-\sin((n-m)C_{i}),\ i\in [[\frac{W+1}{2}]]$$
$$(A_{n-m})_{i,j}=0,\ |i-j|\ge 2.$$

This ensures that the position encoding does not tend to infinity as the position increases. And we see $\{C_i\}_{i=1}^{[(W+1)/2]}$ as adjustable parameters in this paper. 

\textbf{Feedforward network.}
For $x\in\mathbb R^{n\times d}$, the feedforward layer is defined as
\[
\FNN(x)=\Relu(xE_1\oplus b)E_2,
\]
where $E_1\in\mathbb R^{d\times W}$, $b\in\mathbb R^{1\times W}$, $E_2\in\mathbb R^{W\times d}$ are parameters, and $xE_1\oplus b$ means that the bias vector $b$ is added to every row of $xE_1$.

\textbf{Transformer block.}
Let $x^{\ell-1}$
be the input to layer $\ell$, and let $x^0=x$. Then we have that: 

\[x^{\ell}=x^{\ell-1}+\ATT_{\ell}(x^{\ell-1})+\FNN_{\ell}(x^{\ell-1}+\ATT_{\ell}(x^{\ell-1})).\]

\textbf{Output layer $E_{\rm{out}}$.}
Assume the transformer has $L$ hidden layers. The output layer applies a linear transformation only to the final position of the last hidden layer:
\[
\mathcal F(x)=x^{L}[k]W_0+c\in\mathbb R^{T+1},
\]
where $k=\len(x^L)$, $W_0\in\mathbb R^{d\times (T+1)}, c\in\mathbb R^{1\times (T+1)}$.

\textbf{The Output of the autoregressive transformer.}
The final output $y=\widehat{\F}(x)$ is obtained as follows, with the initial value $y=()$:


(1) The next token is $\sigma_s\in\Sigma$, where $s=\argmax_{i\in[T+1]}\F_i(x\oplus y)$ and $\F_i(x)$ is the $i$-th weight of $\F(x)$. Let $\sigma_{T+1}=\sigma_0$. 

(2) If $\sigma_s\ne \sigma_0$, then let $y=y\oplus\{\sigma_s\}$ and return to step (1).

(3) If $\sigma_s=\sigma_0$, stop and return the output $y$. For convenience,  we say $y=\widehat{\F}(x)\in\Sigma^*$.

\subsection{Memorization, Simulation, and Turing Completeness}

\paragraph{Two forms of correspondence between transformers and Turing machines.}

Some previous work has attempted to study the power of Transformers in simulating Turing machines, whereas we consider the power of Transformers in memorizing Turing machines. Simulating Turing machines aims to reproduce the calculation process of Turing machines step-by-step, hence establishing computational equivalence. In contrast, memory Turing machines only require the final output to match that of Turing machines, without requiring the process to completely reproduce the internal computation. In computability theory, Turing completeness can be achieved by either simulating or memorizing a general Turing machine. Both approaches establish that the model is equivalent to a Turing machine in computing power.

Statements about the ``Turing completeness'' of transformers in the existing literature can correspond to two different levels of construction.

\medskip
\noindent
\textbf{(1) Family-level correspondence.}
Given a Turing machine $T$, for each input length $n$, one constructs a transformer
\[
\mathcal F_n
\]
that depends on $n$ and can memorize the computation of $T$ on all inputs of length $n$. This gives a model family
\[
\{\mathcal F_n\}_{n\ge1}.
\]
In this sense, the correspondence is
\[
\text{input length } n \quad \longmapsto \quad \text{model } \mathcal F_n,
\]
meaning that different input lengths are allowed to use different transformers.

\medskip
\noindent
\textbf{(2) Single-model correspondence.}
A stronger requirement asks whether there exists one fixed transformer
\(
\F
\)
such that, for inputs of all lengths to the Turing machine $T$, the same $\mathcal F$ can uniformly memorize the computation. In this sense, the correspondence is no longer between a length $n$ and a model $\mathcal F_n$, but rather
\[
\text{Turing machine } T \quad \longmapsto \quad \text{one fixed model } \mathcal F.
\]
The same model must cover the entire input-size range of $T$.

\medskip
\noindent
The key difference is, therefore, that the first formulation allows the model to vary with input length and studies a family of transformer models. The second keeps the model fixed and studies the uniform processing ability of one transformer over arbitrary-length inputs. The first captures length-by-length simulation, whereas the second captures uniform simulation by a single model.

\subsection{The Arithmetic Task}

{\bf Details for Example 1.} Let 
$\Sigma_1=\{0,1,2,3,4,5,6,7,8,9,+,-,*,/,(,),|\}$, 
$\Sigma_2=\Sigma_1\cup\{=,B\}$, and
$\TM_{\Ar}=(Q,\Sigma_1,\Sigma_2,\delta)$ be the Turing machines for evaluating arithmetic expressions over rational number, with $n|m$ to represent a rational number $\frac{n}{m}$. 
The arithmetic problem $\Ar$ is the set $S_\Ar=\{(x, y)\}\subset\Sigma_1^*\times\Sigma_1^*$, where $x$ is an arithmetic formula such as $2+2*(5-1)+3$, and $y$ is the answer to $x$, such as $13$.

We say a Turing machine $\TM_{\Ar}=(Q,\Sigma_1,\Sigma_2,\delta)$ is the Turing machine to solve four arithmetic problems if $\TM_{\Ar}(x)=y$ for any $(x,y)\in S_\Ar$, and $\TM_{\Ar}(x)$ does not halt if $x\in\Sigma_1^*$ is not an arithmetic formula. 

Let $\sigma_T$ be `$=$'. For a transformer $\F$, if $\widehat{\F}(x)$ is `$=2+2*4+3=2+8+3=10+3=13$' and then output $\sigma_0$ to halt. Then $\widehat{\F}_{\out}(x)$ is $13$, which is the sequence  between the last $=$ and $\sigma_0$, other parts are CoT. If $\hat\F_{\rm{out}}(x)=y$ for any $(x,y)\in S_\Ar$, then such a transformer $\F$ can memorize $\TM_{\Ar}$. 

\begin{remark}
It is reasonable to require the Turing machine $\TM_{\Ar}$ not to stop on equations that do not conform to the calculation rules (such as $1+2**6+$); we can also require the Turing machine to output a symbol representing rejection for such equations. The definition of such inputs is not crucial and does not affect the following conclusions.
\end{remark}

\section{Recent results on memorizing Turing machines when inputs are restricted to a fixed length}
\label{old}

A Turing machine $\TM=(Q,\Sigma_1,\Sigma_2,\delta)$ is said to have input length $n$ if every input it receives has a length of at most $n$.
For example, arithmetic over $\Z_p$ on inputs of length $n$ (denoted $\Arpn$) can be implemented by a Turing machine with input length $n$.

Since there are finitely many inputs for a Turing machine $\TM$ with input length $n$, we may define its maximal running time:
$$t_\TM=\max_{x {\rm\, is\ stoppable}}\{|t_{\TM}(x)|\}.$$

A number of prior studies~\citep{Attention-TC,merrill2023expressive,yu2025analyzing,yx} have demonstrated that a CoT transformer is capable of simulating such a Turing machine.
The size of the memorization transformer and the length of the CoT can be divided into two separate groups:

\textbf{Variant-precision transformer \citep{yx}:} 
Let $\TM=(Q,\Sigma_1,\Sigma_2,\delta)$ be a Turing machine with input length $n$. 
Then there exists a transformer $\F$ with a token set $\Sigma_0$ such that
$\Sigma_0\supset\Sigma_2$ and $|\Sigma_0|>|\Sigma_2|+O(|Q|\cdot|\Sigma_2|)$,
with depth $O(\ln t_{\TM})$, width $O(\max\{|Q|\cdot|\Sigma_2|,\ln t_{\TM}\})$, $O(1)$ heads, and precision $O(\ln \ln T_\TM)$,
which is capable of memorizing $\TM$ using a CoT whose length is $O(t_{\TM}(x))$.

\textbf{Fix-precision transformer \citep{yu2025analyzing}:} 
Let $\TM=(Q,\Sigma_1,\Sigma_2,\delta)$ be a Turing machine with input length $n$. 
Then, for any $q \ge 2$, there exists a transformer $\F_q$ with a token set $\Sigma_0$ satisfying $\Sigma_0 \supset \Sigma_2$ and $|\Sigma_0| \ge |\Sigma_2| + 1$, with depth $O(T^{O(t_{\TM})})$ and $O(T)$ heads, which is capable of memorizing $\TM$ using a CoT of length $O(1)$.


Taken together, these two results indicate that a Turing machine with inputs of bounded length can be memorized by a transformer, even when only a short CoT is used. 
In practice, most Turing machines are not subject to such bounds. Consequently, in the main paper, we focus mainly on the more common setting of general Turing machines.

\section{Incompleteness of Finite-Precision Transformers}
\label{app:proof_thm3}
In this section, we will prove Theorem~\ref{zzdd1}.
\subsection{Detailed Definition for Finite-Precision Transformers}
\label{precision}

We first provide a precise definition of finite-precision computation in a
Transformer.

\begin{definition}
For $x\in\mathbb{R}$ with $|x|\leq 10^q$, let $[x]_q$ denote the value
obtained by retaining $q$ decimal places of $x$. If $|x|>10^q$, define
\[
[x]_q=\mathrm{Inf}.
\]

For a vector or matrix $M$, define $[M]_q$ entrywise by
\[
([M]_q)_{i,j}=[M_{i,j}]_q.
\]
\end{definition}

We now define the $q$-precision version of a Transformer.

\begin{definition}
\label{def:finite-precision}
Let $\mathcal{F}$ be a Transformer. Its $q$-precision version, denoted by
$\mathcal{F}_q$, is defined as follows.

First, $\mathcal{F}$ and $\mathcal{F}_q$ have the same depth $D$, width
$W$, number of heads $H$, and vocabulary $\Sigma$. Every parameter $p$
of $\mathcal{F}$ is replaced by $[p]_q$ in $\mathcal{F}_q$.

Second, all arithmetic operations in the positional embedding, hidden
layers, and output layer are carried out under $q$-precision.



For the trigonometric functions in the positional encoding, we maintain
the reduced phase recursively so that an unbounded positional product
need not be formed. Let
\[
M_q:=2[\pi]_q.
\]
For each $q$-precision frequency $\omega$, let
\[
\bar{\omega}
=
\omega-M_q\left\lfloor\frac{\omega}{M_q}\right\rfloor
\in[0,M_q).
\]
Define $r_0=0$ and, for $n\geq 0$,
\[
r_{n+1}
=
\begin{cases}
r_n+\bar{\omega},
& r_n<M_q-\bar{\omega},\\[1mm]
r_n-(M_q-\bar{\omega}),
& r_n\geq M_q-\bar{\omega}.
\end{cases}
\]
Thus $0\leq r_n<M_q$ for every $n$, and the possibly unbounded
product $n\omega$ is never formed during the positional-encoding
calculation.

We then define
\[
\sin_q(n\omega)=[\sin(r_n)]_q,
\qquad
\cos_q(n\omega)=[\cos(r_n)]_q.
\]

For vector addition, define
\begin{equation}
[x+z]_q
=
\bigl([x_i+z_i]_q\bigr)_i.
\label{eq:q-precision-addition}
\end{equation}

For the inner product, every intermediate multiplication is also rounded:
\begin{equation}
\langle x,z\rangle_q
=
\left[
\sum_{i=1}^{M}
[x_i z_i]_q
\right]_q.
\label{eq:q-precision-inner-product}
\end{equation}

For softmax, let
\begin{equation}
x_{\max}
=
\max_{1\leq j\leq M}x_j.
\label{eq:q-precision-softmax-max}
\end{equation}

Define
\begin{equation}
E_i
=
\left[
e^{[x_i-x_{\max}]_q}
\right]_q
\label{eq:q-precision-softmax-exp}
\end{equation}
and
\begin{equation}
Z
=
\left[
\sum_{j=1}^{M}
E_j
\right]_q.
\label{eq:q-precision-softmax-normalizer}
\end{equation}

Then
\begin{equation}
\operatorname{softmax}_q(x)_i
=
\left[
\frac{E_i}{Z}
\right]_q.
\label{eq:q-precision-softmax}
\end{equation}

Finally, values outside the precision range are handled according to
\begin{equation}
\frac{c}{\mathrm{Inf}}
=
0,
\qquad
c+\mathrm{Inf}
=
\mathrm{Inf}
\qquad
when\ c\neq\mathrm{Inf},
\label{eq:q-precision-inf-rule-1}
\end{equation}
\begin{equation}
c_{\ne0}\cdot\mathrm{Inf}
=
\mathrm{Inf},
\qquad
\frac{\mathrm{Inf}}{0}
=
\mathrm{Inf},
\label{eq:q-precision-inf-rule-2}
\end{equation}
and
\begin{equation}
\frac{c}{0}
=
\frac{\mathrm{Inf}}{\mathrm{Inf}}
=
\mathrm{Inf}+\mathrm{Inf}
=
\mathrm{Inf}-\mathrm{Inf}
=
0\cdot\mathrm{Inf}
=
\mathrm{NaN}.
\label{eq:q-precision-nan-rule}
\end{equation}

If $\mathrm{NaN}$ appears at any intermediate step, the final output is
also defined to be $\mathrm{NaN}$. Without loss of generality, we require
that neither $\mathrm{Inf}$ nor $\mathrm{NaN}$ appears in the final output of the transformer in any results in this paper.
\end{definition}

\subsection{Proof of Theorem~\ref{zzdd1}}

We begin with a periodicity property of finite-precision positional
encoding.

\begin{lemma}
\label{xhj}
For every $q$-precision Transformer $\mathcal{F}_q$, there exists a
positive integer $C$ such that
\begin{equation}
A_i
=
A_{i+C}
\label{eq:position-periodicity}
\end{equation}
for every position $i$. Thus, under finite precision, the positional
encoding is periodic.
\end{lemma}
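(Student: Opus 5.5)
The plan is to reduce the statement to the periodicity of the phase sequence $(r_n)_{n\ge0}$ in Definition~\ref{def:finite-precision}, one sequence per frequency $C_i$. Once every such sequence is eventually periodic, and in fact purely periodic, the matrices $A_{n-m}$ are periodic as well. Each entry of $A_{n-m}$ is either $0$, $\cos_q((n-m)C_j)=[\cos(r_{n-m})]_q$, or $\pm\sin_q((n-m)C_j)=\pm[\sin(r_{n-m})]_q$, so it is a fixed function of $r_{n-m}$ for the relevant frequency.

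\textbf{Finite state space.} Fix a $q$-precision frequency $\omega$. Since $M_q=2[\pi]_q$ and $\bar\omega$ are both $q$-decimal numbers, every $r_n$ is a $q$-decimal number in $[0,M_q)$. This follows by induction, because the recursion only adds $\bar\omega$ or subtracts $M_q-\bar\omega$, and both are multiples of $10^{-q}$; we should check that $[\cdot]_q$ applied to such sums changes nothing. Hence $r_n$ ranges over the finite set $S=\{k10^{-q}:0\le k<M_q10^{q}\}$.

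\textbf{Pure periodicity.} The recursion is a deterministic map $\phi:S\to S$ with $r_{n+1}=\phi(r_n)$. Writing everything in units of $10^{-q}$ and setting $N=M_q10^q$ and $a=\bar\omega 10^q$, we get $\phi(k)=k+a \bmod N$, i.e.\ a rotation of $\Z_N$, which is a bijection. Therefore the orbit of $r_0=0$ is purely periodic: there is $C_\omega\ge1$ (namely $N/\gcd(a,N)$) with $r_{n+C_\omega}=r_n$ for all $n\ge0$. Even without bijectivity, pigeonhole would give eventual periodicity; pure periodicity is what we need for all $i$. For negative relative positions $i=n-m<0$, either the causal mask makes them irrelevant, or one uses $\sin(-x)=-\sin x$ and $\cos(-x)=\cos x$, i.e.\ the convention $\sin_q(-n\omega)=-\sin_q(n\omega)$; this inherits the same period.

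\textbf{Common period.} Take $C=\operatorname{lcm}\{C_{\omega_j}\}$ over the finitely many frequencies $\omega_j=[C_j]_q$, $j\le\lceil W/2\rceil$. Then every entry of $A_i$ equals the corresponding entry of $A_{i+C}$, which gives \eqref{eq:position-periodicity}.

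The main obstacle I expect is the bookkeeping showing that rounding never pushes $r_n$ off the lattice $10^{-q}\Z$, and handling the degenerate case $\bar\omega=0$, which gives period $1$. The rest is the standard finite-orbit argument.
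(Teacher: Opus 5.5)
Your proposal is correct and takes the same route as the paper. The paper also writes $\bar\omega_r=a_r/10^q$ and $M_q=b/10^q$, identifies the recursive phase as $r_n^{(r)}=(na_r \bmod b)/10^q$ with period $b/\gcd(a_r,b)$, and takes the lcm over the finitely many frequencies. Your extra checks (that the $q$-decimal sums are exact, the case $\bar\omega=0$, and negative relative positions, which the causal mask makes irrelevant) are bookkeeping the paper leaves implicit.
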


\begin{proof}
After $q$-precision truncation, every frequency parameter appearing in
the positional encoding is rational. Let the corresponding reduced
frequencies be
\[
\bar{\omega}_1,\ldots,\bar{\omega}_R,
\]
and let
\[
M_q=2[\pi]_q.
\]
Since all these numbers have at most $q$ decimal places, for each $r$
there exist integers $a_r$ and $b>0$ such that
\[
\bar{\omega}_r=\frac{a_r}{10^q},
\qquad
M_q=\frac{b}{10^q}.
\]

Under the recursive phase update in Definition~\ref{def:finite-precision},
the phase corresponding to $\bar{\omega}_r$ is therefore
\[
r_n^{(r)}
=
\frac{n a_r \bmod b}{10^q}.
\]
Hence this phase is periodic, with a period
\[
C_r=\frac{b}{\gcd(a_r,b)}.
\]
Let
\[
C=\operatorname{lcm}(C_1,\ldots,C_R).
\]
Then, for every $n$ and every $r$,
\[
r_{n+C}^{(r)}=r_n^{(r)}.
\]
Therefore all sine and cosine components of the RoPE positional
encoding agree at positions $n$ and $n+C$, and hence
\[
A_n=A_{n+C}.
\]
\end{proof}

We can now prove Theorem~\ref{zzdd1}.

\begin{proof}
For a $q$-precision Transformer $\mathcal{F}_q$, let
\[
\mathcal{F}_q(s)_{m,n}
\]
denote the hidden state at position $n$ in the $m$-th hidden layer.

Let $\mathcal{T}$ be a non-converging Turing machine.

For $x,u,z\in\Sigma^*$ and $k\in\mathbb{N}_+$, define
\begin{equation}
s_{x,u,k,z}
=
x\oplus u^k\oplus z.
\label{eq:periodic-input-definition}
\end{equation}
We shall repeatedly use causality: if two input sequences have the same
prefix up to position $r$, then their hidden states at every position
$i\le r$ are identical at every hidden layer.
Assume, toward a contradiction, that there exist $q\in\mathbb{Z}_+$ and
a $q$-precision Transformer $\mathcal{F}_q$ that memorizes
$\mathcal{T}$. We study the values of $\mathcal{F}_q$ on the family
$s_{x,u,k,z}$.

\medskip
\noindent\textbf{Part One.}

Fix $x,u\in\Sigma^*$ with $u\neq\varnothing$, an integer $k\ge C+1$,
a continuation $w\in\Sigma^*$, and a hidden-layer index $m$. Let
\[
P=C|u|,
\]
and define
\[
s^-:=s_{x,u,k-C,w},
\qquad
s^+:=s_{x,u,k,w},
\]
with
\[
N_-:=|s^-|,
\qquad
N_+:=|s^+|=N_-+P.
\]

Suppose that the following three conditions hold.

First, the final hidden states agree:
\begin{equation}
\mathcal{F}_q(s^-)_{m,N_-}
=
\mathcal{F}_q(s^+)_{m,N_+}.
\tag{c1}
\label{eq:c1}
\end{equation}

Second, the hidden states corresponding to the common suffix $w$ agree:
\begin{equation}
\mathcal{F}_q(s^-)_{m,|x|+(k-C)|u|+t}
=
\mathcal{F}_q(s^+)_{m,|x|+k|u|+t}
\tag{c2}
\label{eq:c2}
\end{equation}
for every
\[
1\le t\le |w|.
\]

Third, for every additional position
\[
i\in
\{
|x|+(k-C)|u|+1,\ldots,|x|+k|u|
\},
\]
there exist more than $10^{2q}$ distinct indices
\[
j\le |x|+(k-C)|u|
\]
such that
\begin{equation}
\mathcal{F}_q(s^+)_{m,i}
=
\mathcal{F}_q(s^-)_{m,j},
\qquad
A_{N_+-i}
=
A_{N_--j}.
\tag{c3}
\label{eq:c3}
\end{equation}

We claim that
\begin{equation}
\mathcal{F}_q(s^-)_{m+1,N_-}
=
\mathcal{F}_q(s^+)_{m+1,N_+}.
\label{eq:part-one-conclusion}
\end{equation}

For any input $s$, the update at its final position has the form
\begin{align}
\mathcal{F}_q(s)_{m+1,|s|}
&=
\mathcal{F}_q(s)_{m,|s|}
+
\operatorname{Att}(\mathcal{F}_q(s)_m)[|s|]
\nonumber\\
&\quad+
\operatorname{FNN}\!\left(
\mathcal{F}_q(s)_{m,|s|}
+
\operatorname{Att}(\mathcal{F}_q(s)_m)[|s|]
\right).
\label{eq:transformer-layer-update}
\end{align}

By condition~\eqref{eq:c1}, the residual inputs at the two final positions
are identical. It therefore remains to compare the two attention outputs.

Fix an attention head $h$. Write
\[
h_i^-=\mathcal{F}_q(s^-)_{m,i},
\qquad
h_i^+=\mathcal{F}_q(s^+)_{m,i},
\]
and define
\begin{equation}
\lambda_i^{(\pm,h)}
=
\left[
h_{N_\pm}^{\pm}
Q_h
A_{N_\pm-i}
K_h
(h_i^\pm)^\top
\right]_q.
\label{eq:attention-logit-notation}
\end{equation}

Every position of $s^-$ has a corresponding position in $s^+$. For
positions in the common prefix $x\oplus u^{k-C}$, the hidden states agree
by causality, while the two relative position indices differ by
\[
P=C|u|.
\]
Since $A_r=A_{r+C}$, we also have
\[
A_{r+P}=A_r.
\]
For positions in the suffix $w$, the hidden states agree by
condition~\eqref{eq:c2}, and the relative position indices are identical.
Together with~\eqref{eq:c1}, the corresponding logits and value vectors
are therefore equal.

Moreover, by condition~\eqref{eq:c3}, every additional position in $s^+$
has more than $10^{2q}$ matching positions in $s^-$ with the same hidden
state and the same relative positional matrix. Hence its attention logit
also agrees with the logits at those matching positions.

Thus the maximum logit is the same in the two computations. Denote it by
$W_h$, and define
\[
E_i^{(\pm,h)}
=
\left[
e^{[\lambda_i^{(\pm,h)}-W_h]_q}
\right]_q,
\]
and
\[
Z_\pm^{(h)}
=
\left[
\sum_{i=1}^{N_\pm}
E_i^{(\pm,h)}
\right]_q.
\]

We now consider the additional positions in $s^+$.

If
\[
E_i^{(+,h)}=0
\]
for every additional position $i$, then these positions contribute zero.
All remaining corresponding logits and value vectors agree, and hence the
two attention-head outputs are equal.

Otherwise, there exists an additional position $i$ such that
\[
E_i^{(+,h)}>0.
\]
By condition~\eqref{eq:c3}, there are more than $10^{2q}$ positions in
$s^-$ with the same positive exponential value. Since every positive
$q$-precision number is at least $10^{-q}$,
\[
\sum_{j=1}^{N_-}E_j^{(-,h)}
>
10^{2q}10^{-q}
=
10^q.
\]
Therefore,
\[
Z_-^{(h)}=\mathrm{Inf}.
\]
The same matching positions also occur in $s^+$, so
\[
Z_+^{(h)}=\mathrm{Inf}.
\]
By the finite-precision convention
\[
\frac{c}{\mathrm{Inf}}=0,
\]
all attention weights of this head vanish in both computations. Hence the
two attention-head outputs are again equal.

The same argument applies to every attention head. Combining this with
condition~\eqref{eq:c1} and
\eqref{eq:transformer-layer-update} proves
\eqref{eq:part-one-conclusion}.

\medskip
\noindent\textbf{Part Two.}

We next prove a layerwise relation that is uniform over all continuations.
Let
\[
R_q:=10^{2q}+1.
\]

We prove by induction on $m$ that there exists an integer $K_m\ge C+1$
such that, for every $k\ge K_m$ and every $w\in\Sigma^*$,
\begin{equation}
\mathcal{F}_q(s_{x,u,k-C,w})_{m,|s_{x,u,k-C,w}|}
=
\mathcal{F}_q(s_{x,u,k,w})_{m,|s_{x,u,k,w}|}.
\tag{$\mathcal I_m$}
\label{eq:layerwise-invariant}
\end{equation}

For $m=0$, the final token is identical in the two sequences. If
$w\neq\varnothing$, it is the last token of $w$; if $w=\varnothing$,
it is the last token of $u$. Hence~\eqref{eq:layerwise-invariant} holds,
for example with
\[
K_0=C+1.
\]

Suppose that~\eqref{eq:layerwise-invariant} holds at the $m$-th hidden
layer. Define
\begin{equation}
K_{m+1}
=
K_m+C(R_q+1).
\label{eq:next-threshold}
\end{equation}

Fix $k\ge K_{m+1}$ and $w\in\Sigma^*$. We verify the three conditions in
Part One.

Condition~\eqref{eq:c1} is exactly
\eqref{eq:layerwise-invariant} applied to the continuation $w$.

For each $1\le t\le |w|$, apply
\eqref{eq:layerwise-invariant} to the continuation $w[1:t]$.
By causality, this gives condition~\eqref{eq:c2}.

It remains to verify condition~\eqref{eq:c3}. Let
\[
i\in
\{
|x|+(k-C)|u|+1,\ldots,|x|+k|u|
\}.
\]
There are unique
\[
b\in\{k-C+1,\ldots,k\},
\qquad
t\in\{1,\ldots,|u|\}
\]
such that
\[
i
=
|x|+(b-1)|u|+t.
\]

Let
\[
P=C|u|.
\]
For
\[
r=1,\ldots,R_q,
\]
define
\[
j_r=i-rP.
\]
Because
\[
k\ge K_m+C(R_q+1),
\]
all the indices $j_r$ lie inside the common repeated $u$-block, and all
applications of~\eqref{eq:layerwise-invariant} below involve an exponent
at least $K_m$.

Applying~\eqref{eq:layerwise-invariant} successively to the continuation
$u[1:t]$, and using causality, gives
\[
\mathcal{F}_q(s_{x,u,k,w})_{m,i}
=
\mathcal{F}_q(s_{x,u,k-C,w})_{m,j_r},
\qquad
1\le r\le R_q.
\]

Moreover,
\[
|s_{x,u,k-C,w}|-j_r
=
|s_{x,u,k,w}|-i+(r-1)P.
\]
Since
\[
P=C|u|
\]
is a multiple of the RoPE period $C$, Lemma~\ref{xhj} gives
\[
A_{|s_{x,u,k-C,w}|-j_r}
=
A_{|s_{x,u,k,w}|-i}.
\]

The indices $j_r$ are distinct and
\[
R_q=10^{2q}+1>10^{2q}.
\]
Hence condition~\eqref{eq:c3} holds.

Part One therefore gives
\[
\mathcal{F}_q(s_{x,u,k-C,w})_{m+1,|s_{x,u,k-C,w}|}
=
\mathcal{F}_q(s_{x,u,k,w})_{m+1,|s_{x,u,k,w}|},
\]
which is exactly~\eqref{eq:layerwise-invariant} at layer $m+1$.

This completes the induction.

\medskip
\noindent\textbf{Part Three.}

We now derive the contradiction.

Since $\mathcal{T}$ is non-converging, choose witnesses
$x,u,z\in\Sigma_1^*$ from Definition~3. Necessarily,
\[
u\neq\varnothing.
\]

Let $L$ be the depth of $\mathcal{F}_q$, and choose any
\[
k\ge K_L.
\]

By Part Two, for every continuation $v\in\Sigma^*$,
\begin{align}
&
\mathcal{F}_q(s_{x,u,k-C,z\oplus v})_
{L,|s_{x,u,k-C,z\oplus v}|}
\nonumber\\
&\qquad=
\mathcal{F}_q(s_{x,u,k,z\oplus v})_
{L,|s_{x,u,k,z\oplus v}|}.
\label{eq:final-layer-continuation}
\end{align}

Applying the common output layer gives
\begin{equation}
\mathcal{F}_q(s_{x,u,k-C,z\oplus v})
=
\mathcal{F}_q(s_{x,u,k,z\oplus v})
\label{eq:autoregressive-step-equality}
\end{equation}
for every $v\in\Sigma^*$.

Taking $v=\varnothing$, the first generated tokens are identical.
Suppose inductively that the two generations have produced the same prefix
$v$. Then~\eqref{eq:autoregressive-step-equality}, applied to this same
$v$, shows that the next generated tokens are also identical. The same
argument applies to the EOS decision. Hence
\begin{equation}
\widehat{\mathcal{F}}_q(s_{x,u,k-C,z})
=
\widehat{\mathcal{F}}_q(s_{x,u,k,z}).
\label{eq:complete-output-equality}
\end{equation}

Therefore, their extracted final outputs are identical.

However, by Definition~3, both
\[
s_{x,u,k-C,z}
\qquad\text{and}\qquad
s_{x,u,k,z}
\]
are stoppable inputs and
\[
\mathcal{T}(s_{x,u,k-C,z})
\neq
\mathcal{T}(s_{x,u,k,z}).
\]
This contradicts the assumption that $\mathcal{F}_q$ memorizes
$\mathcal{T}$. Therefore, no $q$-precision Transformer can memorize a
non-converging Turing machine, and the theorem follows.
\end{proof}

\section{Sensitivity Bounds for Infinite-Precision Transformers}
\label{app:proof_thm4}
In this section, we will prove Theorems \ref{th-1} and \ref{th-2}.

\subsection{Lipschitz Continuity of Basic Maps}

This section proves the Lipschitz continuity, on relevant bounded regions, of several basic component maps in a standard pre-norm transformer, which will be used in Section \ref{app-th4}.

\paragraph{Notation.}
Unless otherwise stated, vector norms are Euclidean norms $\|\cdot\|_2$, and matrix norms are induced operator norms, still denoted by $\|\cdot\|$. For a vector-valued map $f$, if there exists a constant $L>0$ such that
\[
\|f(x)-f(y)\|_2 \le L\|x-y\|_2
\]
for all $x,y$ in its domain, then $f$ is said to be $L$-Lipschitz on that region.

\subsubsection{Lipschitz Continuity under Finite Composition}

\begin{lemma}[Composition and addition preserve Lipschitz continuity]
\label{composition}
\leavevmode
\begin{itemize}
    \item If $f:X\to Y$ is $L_f$-Lipschitz and $g:Y\to Z$ is $L_g$-Lipschitz, then $g\circ f$ is $L_gL_f$-Lipschitz.
    \item If $f,g:X\to \mathbb R^d$ are $L_f$- and $L_g$-Lipschitz respectively, then $f+g$ is $(L_f+L_g)$-Lipschitz.
\end{itemize}
\end{lemma}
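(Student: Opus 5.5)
Both bullets follow directly from the definition of Lipschitz continuity together with the triangle inequality for the Euclidean norm, so the proof is a two-line verification for each item.

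For the composition claim, we fix arbitrary $x,y\in X$. We apply the Lipschitz property of $g$ to the points $f(x),f(y)\in Y$, which gives $\|g(f(x))-g(f(y))\|_2\le L_g\|f(x)-f(y)\|_2$. We then apply the Lipschitz property of $f$ to bound $\|f(x)-f(y)\|_2\le L_f\|x-y\|_2$. Chaining the two inequalities yields the constant $L_gL_f$.

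For the sum claim, we write
\[
(f+g)(x)-(f+g)(y)=\bigl(f(x)-f(y)\bigr)+\bigl(g(x)-g(y)\bigr).
\]
The triangle inequality bounds the norm of the left side by $\|f(x)-f(y)\|_2+\|g(x)-g(y)\|_2$. Each term is then bounded by its own Lipschitz constant, giving $(L_f+L_g)\|x-y\|_2$.

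The only point needing care is domains. In the composition, the range of $f$ must lie in the region where $g$ is Lipschitz. In later applications these maps are only Lipschitz on bounded regions (softmax logits, ReLU networks on bounded hidden states). So we will state explicitly that $Y$ is the region on which $g$'s Lipschitz bound holds, and that $f(X)\subseteq Y$. With that bookkeeping in place, no genuine obstacle remains, and induction extends both statements to any finite composition or sum.
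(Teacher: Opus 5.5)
Your proof is correct and matches the paper's argument: the composition bound comes from chaining the Lipschitz inequalities for $g$ and $f$, and the sum bound from the triangle inequality followed by each map's own Lipschitz bound. Your remark about domains is harmless but already built into the hypotheses, since $f:X\to Y$ with $g$ Lipschitz on $Y$ gives $f(X)\subseteq Y$ automatically.
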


\begin{proof}
The first statement follows from
\[
\|g(f(x))-g(f(y))\|
\le
L_g\|f(x)-f(y)\|
\le
L_gL_f\|x-y\|.
\]
The second follows from
\[
\|(f+g)(x)-(f+g)(y)\|
\le
\|f(x)-f(y)\|+\|g(x)-g(y)\|
\le
(L_f+L_g)\|x-y\|.
\]
\end{proof}

\subsubsection{Affine Maps, ReLU, and Addition}

\begin{lemma}[Lipschitz continuity of affine maps]
Let $A\in\mathbb R^{m\times d}$, $b\in\mathbb R^m$, and $T(u)=Au+b$.

Then $T$ is globally Lipschitz, with a Lipschitz constant of at most $\|A\|$.
\end{lemma}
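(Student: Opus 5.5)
The plan is to use the operator-norm bound together with Cauchy--Schwarz. Write $T(u)-T(v)=A(u-v)$; the bias $b$ cancels, which is why the affine part imposes no restriction on the domain. By the definition of the induced operator norm,
\[
\|A w\|_2\le \|A\|\,\|w\|_2 \quad\text{for every } w\in\mathbb R^d .
\]
Setting $w=u-v$ then gives
\[
\|T(u)-T(v)\|_2\le \|A\|\,\|u-v\|_2
\]
for all $u,v$. This yields global Lipschitz continuity with constant $\|A\|$.

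The only point requiring care is that $\|A\|$ must be finite. This holds because $A$ is a finite matrix: $\|A\|\le \|A\|_F=\bigl(\sum_{i,j}A_{ij}^2\bigr)^{1/2}<\infty$. To see this, bound each coordinate of $Aw$ by Cauchy--Schwarz, $|(Aw)_i|\le \|A_{i,:}\|_2\|w\|_2$, and then sum the squares over $i$. I would include this remark so that the constant is explicit; the later estimates need such explicit constants when they multiply Lipschitz constants across layers via Lemma~\ref{composition}.

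I do not expect a real obstacle here. The one convention to fix is that the transformer applies parameter matrices to row vectors on the right, for instance $xE_1$ and $x^{L}[k]W_0$. For these I would note that $u\mapsto uA$ has Lipschitz constant $\|A^\top\|=\|A\|$, because the spectral norm is invariant under transposition. With this, the lemma applies directly to the maps $xQ_h$, $xK_h$, $xV_h$, $xE_1\oplus b$, and the output layer.
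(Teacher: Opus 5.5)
Your proposal is correct and follows the paper's proof: the bias cancels so $T(u)-T(v)=A(u-v)$, and the induced operator-norm bound $\|A(u-v)\|_2\le\|A\|\,\|u-v\|_2$ gives the constant $\|A\|$. Your extra remarks, on the finiteness of $\|A\|$ via the Frobenius norm and on the right-multiplication convention via $\|A^\top\|=\|A\|$, are valid but go beyond what the paper's proof states.
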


\begin{proof}
For any $u,v\in\mathbb R^d$, $T(u)-T(v)=A(u-v)$.
Therefore
\[
\|T(u)-T(v)\|_2
=\|A(u-v)\|_2
\le \|A\|\,\|u-v\|_2.
\]
Thus $T$ is $\|A\|$-Lipschitz.
\end{proof}

\begin{lemma}[Lipschitz continuity of ReLU]
The coordinatewise ReLU map
\[
\Relu(z)=(\max\{z_1,0\},\dots,\max\{z_d,0\})
\]
is globally $1$-Lipschitz.
\end{lemma}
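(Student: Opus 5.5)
The claim is that coordinatewise $\Relu$ is globally $1$-Lipschitz. I will reduce it to a scalar inequality and then sum over coordinates. The plan is to show, for all real $a,b$,
\[
|\max\{a,0\}-\max\{b,0\}|\le |a-b|,
\]
and then square and sum over the $d$ coordinates. Because the bound holds coordinatewise, summing the squares gives $\|\Relu(z)-\Relu(w)\|_2^2\le \|z-w\|_2^2$ for all $z,w\in\R^d$, and taking square roots yields the statement.

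For the scalar inequality I will split into cases on the signs of $a$ and $b$.
\begin{itemize}
\item If $a,b\ge 0$, both sides equal $|a-b|$.
\item If $a,b\le 0$, the left side is $0$.
\item If $a\ge 0> b$, the left side is $a\le a-b=|a-b|$.
\item The case $b\ge 0>a$ is symmetric.
\end{itemize}
An alternative route uses $\max\{t,0\}=(t+|t|)/2$ together with the reverse triangle inequality $\bigl||a|-|b|\bigr|\le|a-b|$. This gives $|\max\{a,0\}-\max\{b,0\}|\le \tfrac12(|a-b|+|a-b|)$ directly, with no case analysis. I would use whichever reads more cleanly.

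There is no real obstacle here. The only point requiring care is that the constant must not depend on $d$: it must be $1$, not $\sqrt d$. That is guaranteed by comparing squared coordinates termwise rather than going through the $\ell_\infty$ norm.
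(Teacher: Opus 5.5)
Your proposal is correct and matches the paper's proof: the paper also applies the scalar bound $|\max\{a,0\}-\max\{b,0\}|\le|a-b|$ coordinatewise and sums the squared differences to get $\|\Relu(u)-\Relu(v)\|_2^2\le\|u-v\|_2^2$. The only difference is that you justify the scalar inequality, either by cases or via $\max\{t,0\}=(t+|t|)/2$, whereas the paper simply asserts it.
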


\begin{proof}
For the one-dimensional function $\rho(t)=\max\{t,0\}$, we have $|\rho(a)-\rho(b)|\le |a-b|$
for any $a,b\in\mathbb R$. Summing over coordinates gives
\[
\|\Relu(u)-\Relu(v)\|_2^2
=
\sum_{k=1}^d |\rho(u_k)-\rho(v_k)|^2
\le
\sum_{k=1}^d |u_k-v_k|^2
=
\|u-v\|_2^2.
\]
Hence $\Relu$ is globally $1$-Lipschitz.
\end{proof}

\begin{lemma}[Lipschitz continuity of addition]
Define
\[
S(u_1,\dots,u_m)=u_1+\cdots+u_m,
\qquad u_i\in\mathbb R^d.
\]
Then $S$ is Lipschitz with respect to the product-space variables. More precisely, if the product space is equipped with
\[
\|(u_1,\dots,u_m)\|_{\oplus}:=\sum_{i=1}^m \|u_i\|_2,
\]
then $S$ is $1$-Lipschitz. If it is equipped with the Euclidean product norm
\[
\|(u_1,\dots,u_m)\|_{2,\oplus}:=\Bigl(\sum_{i=1}^m \|u_i\|_2^2\Bigr)^{1/2},
\]
then $S$ is $\sqrt m$-Lipschitz.
\end{lemma}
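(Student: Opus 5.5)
The statement to prove is the Lipschitz continuity of the addition map $S(u_1,\dots,u_m)=u_1+\cdots+u_m$ under the two product norms. The plan is to bound $\|S(u)-S(v)\|_2$ directly. Linearity of $S$ gives
\[
S(u_1,\dots,u_m)-S(v_1,\dots,v_m)=\sum_{i=1}^m (u_i-v_i).
\]
Setting $w_i:=u_i-v_i$ reduces both claims to bounding $\|\sum_i w_i\|_2$ by the relevant norm of $(w_1,\dots,w_m)$.

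\emph{The $\|\cdot\|_{\oplus}$ norm.} The triangle inequality in $\R^d$ gives
\[
\Bigl\|\sum_{i=1}^m w_i\Bigr\|_2\le \sum_{i=1}^m\|w_i\|_2=\|(w_1,\dots,w_m)\|_{\oplus}.
\]
This is exactly $1$-Lipschitz continuity. The same bound can also be read off from the second bullet of Lemma~\ref{composition}, applied inductively to the coordinate projections $(u_1,\dots,u_m)\mapsto u_i$. Each projection is $1$-Lipschitz with respect to $\|\cdot\|_{\oplus}$.

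\emph{The Euclidean product norm $\|\cdot\|_{2,\oplus}$.} We start from the same triangle-inequality bound $\sum_i\|w_i\|_2$. We then apply Cauchy--Schwarz in $\R^m$ to the vectors $(1,\dots,1)$ and $(\|w_1\|_2,\dots,\|w_m\|_2)$:
\[
\sum_{i=1}^m \|w_i\|_2\le \sqrt m\Bigl(\sum_{i=1}^m\|w_i\|_2^2\Bigr)^{1/2}=\sqrt m\,\|(w_1,\dots,w_m)\|_{2,\oplus}.
\]
This yields the constant $\sqrt m$.

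No step is a genuine obstacle; the argument is elementary. The only point needing care is to apply the triangle inequality in $\R^d$ first, and Cauchy--Schwarz in $\R^m$ only afterwards, so that the two norms are not conflated. If sharpness is wanted, one can take $w_1=\cdots=w_m$ with $\|w_i\|_2=1$. Then equality holds in both bounds, which shows the constants $1$ and $\sqrt m$ cannot be improved.
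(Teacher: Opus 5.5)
Your proof is correct and is the same argument as the paper's: linearity of $S$, then the triangle inequality in $\mathbb R^d$ for the constant $1$, then Cauchy--Schwarz in $\mathbb R^m$ for the constant $\sqrt m$. Drop the aside about Lemma~\ref{composition}, since it is wrong and not needed: summing $m$ coordinate projections, each $1$-Lipschitz, via its second bullet only gives the constant $m$, not $1$.
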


\begin{proof}
For any $(u_1,\dots,u_m)$ and $(v_1,\dots,v_m)$,
\[
S(u_1,\dots,u_m)-S(v_1,\dots,v_m)=\sum_{i=1}^m (u_i-v_i).
\]
By the triangle inequality,
\[
\|S(u_1,\dots,u_m)-S(v_1,\dots,v_m)\|_2
\le
\sum_{i=1}^m \|u_i-v_i\|_2
=
\|(u_1-v_1,\dots,u_m-v_m)\|_{\oplus}.
\]
Thus $S$ is $1$-Lipschitz under $\|\cdot\|_{\oplus}$.

By Cauchy--Schwarz,
\[
\sum_{i=1}^m \|u_i-v_i\|_2
\le
\sqrt m\Bigl(\sum_{i=1}^m \|u_i-v_i\|_2^2\Bigr)^{1/2}.
\]
Thus $S$ is $\sqrt m$-Lipschitz under $\|\cdot\|_{2,\oplus}$.
\end{proof}

Obviously, the following conclusion can be drawn from the above lemma.
\begin{lemma}[Lipschitz continuity of $\FNN$]
Let $\FNN(x)=\Relu(xE_1+b)E_2$, where $E_1\in R^{d\times w}, E_2\in\mathbb R^{w\times d}, b\in\mathbb R^w$.

Then $\FNN$ is globally Lipschitz, with Lipschitz constant $L_{\FNN}\le \|E_1\|\,\|E_2\|$.
\end{lemma}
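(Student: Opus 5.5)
The plan is to write $\FNN$ as a composition of three maps and then apply Lemma~\ref{composition}. The three maps are
\[
T_1(x)=xE_1\oplus b,\qquad \Relu,\qquad T_2(y)=yE_2,
\]
so that $\FNN=T_2\circ\Relu\circ T_1$. Throughout, I treat vectors as rows. For the function itself, this is just a rewriting of the definition; the only care needed is to confirm that the bias $b$ enters as a translation, which cannot change any Lipschitz constant.

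First I bound each factor. For $T_1$, the difference of two outputs is
\[
T_1(x)-T_1(y)=(x-y)E_1 ,
\]
because the bias cancels. Hence $\|T_1(x)-T_1(y)\|_2\le\|E_1\|\,\|x-y\|_2$, with $\|\cdot\|$ the induced operator norm; this is the affine-map lemma applied with the transpose, and the transpose has the same operator norm. The same argument gives the constant $\|E_2\|$ for $T_2$. The ReLU lemma gives constant $1$ for the middle map.

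Next I chain the bounds. Applying the first bullet of Lemma~\ref{composition} twice gives
\[
L_{\FNN}\le \|E_2\|\cdot 1\cdot\|E_1\| .
\]
This holds globally, since each factor is globally Lipschitz. If $\FNN$ is applied row-wise to a matrix $x\in\R^{n\times d}$, the same constant holds for each row, and therefore also in the Frobenius norm, by summing the squared row bounds.

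There is no real obstacle. The only point needing attention is the row-vector convention, which requires the operator norm of $E_1$ acting on the right. That norm equals $\|E_1^\top\|$, so the stated bound is unaffected.
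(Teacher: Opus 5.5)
Your proposal is correct and follows the same route as the paper, which obtains the bound $\|E_1\|\,\|E_2\|$ by composing the preceding affine-map, ReLU, and composition lemmas. Your remarks that the bias cancels in differences and that, under the row-vector convention, right multiplication by $E_1$ has the same operator norm as $E_1^\top$ are correct; the paper leaves both implicit.
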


Combining the preceding Lipschitz properties of addition and the feed-forward network, we obtain the following result for the position-level update map.

\begin{proposition}[Lipschitz continuity of the position-level update on relevant bounded regions]
Consider the position-level update map of a standard transformer:
\[
f^{(\ell)}(u,b_1,\dots,b_H)
=
u+\sum_{h=1}^H b_h
+
\FNN_\ell\!\left(
u+\sum_{h=1}^H b_h
\right).
\]
If all parameters are uniformly bounded and the input $(u,b_1,\dots,b_H)$ lies in a bounded region, then for every layer $\ell$, $f^{(\ell)}:\mathbb R^d\times(\mathbb R^d)^H\to\mathbb R^d$ is Lipschitz with respect to all input variables.
\end{proposition}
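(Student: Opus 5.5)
The map $f^{(\ell)}$ is a composition of the maps already shown to be Lipschitz. We write it as
\[
f^{(\ell)} = \iota + \FNN_\ell\circ S ,
\]
where $S(u,b_1,\dots,b_H)=u+\sum_{h=1}^H b_h$ is the addition map and $\iota=S$ is used for the residual term. The plan is to bound each piece and then combine them with Lemma~\ref{composition}.

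First, by the lemma on Lipschitz continuity of addition, $S$ is $1$-Lipschitz under the norm $\|\cdot\|_{\oplus}$ on $\mathbb R^d\times(\mathbb R^d)^H$. Under the Euclidean product norm it is $\sqrt{H+1}$-Lipschitz instead. We fix the Euclidean product norm throughout, since it is the natural norm on the joint input.

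Second, the lemma on $\FNN$ shows that $\FNN_\ell$ is globally Lipschitz with constant at most $\|E_1^{(\ell)}\|\,\|E_2^{(\ell)}\|$. This uses the lemmas on affine maps and ReLU. Since the parameters are uniformly bounded, this constant is finite and uniform over layers.

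Third, Lemma~\ref{composition} gives that $\FNN_\ell\circ S$ is $\sqrt{H+1}\,\|E_1^{(\ell)}\|\|E_2^{(\ell)}\|$-Lipschitz. Adding the residual term, the same lemma yields
\[
L_{f^{(\ell)}}\le \sqrt{H+1}\bigl(1+\|E_1^{(\ell)}\|\,\|E_2^{(\ell)}\|\bigr).
\]
Taking the maximum over the finitely many layers $\ell\in[L]$ gives a single constant that depends only on $\F$.

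There is no real obstacle. The bound is in fact global, so the bounded-region hypothesis is not needed for this map. We keep the hypothesis because later uses combine $f^{(\ell)}$ with the attention outputs $b_h$, and the softmax and bilinear logit maps are only locally Lipschitz. The one point to check carefully is that the bias $b$ drops out of differences, so the constant does not depend on it. We also record explicitly that the constant is independent of sequence length, since Theorem~\ref{th-1} relies on this.
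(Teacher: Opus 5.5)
Your proposal is correct and follows essentially the same route as the paper, which obtains the proposition simply by combining the Lipschitz lemmas for addition and for $\FNN$ through Lemma~\ref{composition}. Your explicit constant $\sqrt{H+1}\bigl(1+\|E_1^{(\ell)}\|\,\|E_2^{(\ell)}\|\bigr)$ is right. So is your observation that the bound is global, meaning the bounded-region hypothesis is not actually needed for this particular map.
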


\subsection{Proof of Theorem~\ref{th-1}}
\label{app-th4}
Now we prove Theorem~\ref{th-1}.

\begin{proof}
Let $n=\len(x)$, and take any $x'\in B(x,1)$. If $x'=x$, then
$\|\F(x)-\F(x')\|_2=0$. Otherwise, by the definition of $B(x,1)$, the two
sequences $x$ and $x'$ differ at exactly one position $i<n$. Let $y_j^{(\ell)}$ and $\tilde y_j^{(\ell)}$ denote the hidden states at
position $j$ after layer $\ell$ corresponding to $x$ and $x'$, respectively,
and define $\Delta y_j^{(\ell)}:=y_j^{(\ell)}-\tilde y_j^{(\ell)}$.

Since $\Sigma$ is finite, the embedding vectors are uniformly bounded. Set
$M_0:=\max_{\sigma\in\Sigma}\|v_\sigma\|_2$ and
$D_0:=\max_{\sigma,\tau\in\Sigma}\|v_\sigma-v_\tau\|_2$. Then
$\|\Delta y_i^{(0)}\|_2\le D_0$ and $\Delta y_j^{(0)}=0$ for $j\ne i$.
Moreover, by causality, $\Delta y_j^{(\ell)}=0$ for all $j<i$ and
$0\le\ell\le L$.

We will prove that, for every $1\le\ell\le L$, there exists a constant
$C_\ell>0$, depending only on the fixed transformer $\F$, such that
\[
\|\Delta y_j^{(\ell)}\|_2
\le
C_\ell\frac{(\ln j)^{\ell-1}}{j},
\qquad j>i.
\tag{20}
\]

\medskip
\noindent\textbf{Step 1: Uniform boundedness of hidden states and attention weights.}

For the $h$-th attention head in layer $\ell$, let
\[
b_{j,\ell,h}
=
\sum_{w\le j}\alpha_{j,w}^{(\ell,h)}
y_w^{(\ell-1)}V_{\ell,h},
\qquad
\alpha_{j,w}^{(\ell,h)}
=
\frac{\exp(a_{j,w}^{(\ell,h)})}
{\sum_{t\le j}\exp(a_{j,t}^{(\ell,h)})},
\]
where
\[
a_{j,w}^{(\ell,h)}
=
y_j^{(\ell-1)}
Q_{\ell,h}A_{j-w}K_{\ell,h}
\bigl(y_w^{(\ell-1)}\bigr)^\top.
\]

We first show that the hidden states are uniformly bounded. Since the
attention weights are nonnegative and sum to one,
\[
\|b_{j,\ell,h}\|_2
\le
\|V_{\ell,h}\|
\max_{w\le j}\|y_w^{(\ell-1)}\|_2.
\]
By the Lipschitz continuity of $\FNN_\ell$, if
$\|y_w^{(\ell-1)}\|_2\le M_{\ell-1}$ for every $w$, then there exists
$M_\ell>0$, depending only on $M_{\ell-1}$ and the parameters of layer
$\ell$, such that $\|y_j^{(\ell)}\|_2\le M_\ell$ for every $j$. The same
bound holds for $\tilde y_j^{(\ell)}$. Since the depth $L$ is fixed, there
exists a constant $M>0$, depending only on $\F$, such that
\[
\|y_j^{(\ell)}\|_2\le M,
\qquad
\|\tilde y_j^{(\ell)}\|_2\le M
\tag{21}
\]
for every position $j$ and every $0\le\ell\le L$. In particular,
\[
\|\Delta y_i^{(\ell)}\|_2
\le
\|y_i^{(\ell)}\|_2+\|\tilde y_i^{(\ell)}\|_2
\le 2M.
\tag{22}
\]

Since every entry of $A_r$ is uniformly bounded and the width of the
transformer is fixed, $\sup_{r\in\mathbb Z}\|A_r\|<\infty$. Together with
(21), this implies that there exists $S>0$, depending only on $\F$, such that
\[
|a_{j,w}^{(\ell,h)}|\le S,
\qquad
|\tilde a_{j,w}^{(\ell,h)}|\le S
\tag{23}
\]
for every $j,w,\ell,h$. Therefore,
\[
\alpha_{j,w}^{(\ell,h)}
\le
\frac{e^S}{je^{-S}}
=
\frac{e^{2S}}{j}.
\]
The same bound holds for $\tilde\alpha_{j,w}^{(\ell,h)}$. Hence there exists
$C_0>0$ such that
\[
\alpha_{j,w}^{(\ell,h)}
\le\frac{C_0}{j},
\qquad
\tilde\alpha_{j,w}^{(\ell,h)}
\le\frac{C_0}{j}.
\tag{24}
\]

We next record two perturbation estimates that will be repeatedly used.
By adding and subtracting the corresponding mixed term and using (21), there
exists $L_a>0$, depending only on $\F$, such that
\[
\left|
a_{j,w}^{(\ell,h)}
-
\tilde a_{j,w}^{(\ell,h)}
\right|
\le
L_a
\left(
\|\Delta y_j^{(\ell-1)}\|_2
+
\|\Delta y_w^{(\ell-1)}\|_2
\right).
\tag{25}
\]

For the softmax map, write $\alpha=\softmax(a)$ and
$\tilde\alpha=\softmax(\tilde a)$, where $a,\tilde a\in\mathbb R^j$.
Its Jacobian satisfies
\[
\frac{\partial\alpha_u}{\partial a_t}
=
\alpha_u(\delta_{ut}-\alpha_t).
\]
Hence, for every fixed $t$,
\[
\begin{aligned}
\sum_{u\le j}
\left|
\frac{\partial\alpha_u}{\partial a_t}
\right|
&=
\alpha_t(1-\alpha_t)
+
\sum_{u\ne t}\alpha_u\alpha_t \\
&=
2\alpha_t(1-\alpha_t)
\le
2\alpha_t.
\end{aligned}
\]
Every point on the line segment joining $a$ and $\tilde a$ also has all its
coordinates in $[-S,S]$. Thus (24) holds along this segment. By the integral
form of the mean-value theorem, there exists a constant $C_s>0$ such that
\[
\sum_{u\le j}|\alpha_u-\tilde\alpha_u|
\le
\frac{C_s}{j}
\sum_{t\le j}|a_t-\tilde a_t|.
\tag{26}
\]

\medskip
\noindent\textbf{Step 2: Base case $\ell=1$.}

Fix $j>i$ and an attention head $h$. Write
\[
b_{j,1,h}-\tilde b_{j,1,h}
=
I_{j,1,h}+II_{j,1,h},
\]
where
\[
I_{j,1,h}
:=
\sum_{w\le j}
\alpha_{j,w}^{(1,h)}
\Delta y_w^{(0)}V_{1,h},
\qquad
II_{j,1,h}
:=
\sum_{w\le j}
\bigl(
\alpha_{j,w}^{(1,h)}
-
\tilde\alpha_{j,w}^{(1,h)}
\bigr)
\tilde y_w^{(0)}V_{1,h}.
\]

Since the layer-$0$ perturbation occurs only at position $i$, (24) gives
\[
\begin{aligned}
\|I_{j,1,h}\|_2
&\le
\alpha_{j,i}^{(1,h)}
\|\Delta y_i^{(0)}\|_2
\|V_{1,h}\| \\
&\le
\frac{C_0D_0\|V_{1,h}\|}{j}.
\end{aligned}
\tag{27}
\]

Since $j>i$, we have $\Delta y_j^{(0)}=0$, and only $w=i$ has a nonzero
perturbation. Thus (25) implies
\[
\sum_{w\le j}
\left|
a_{j,w}^{(1,h)}
-
\tilde a_{j,w}^{(1,h)}
\right|
\le
L_aD_0.
\]
By (26),
\[
\sum_{w\le j}
\left|
\alpha_{j,w}^{(1,h)}
-
\tilde\alpha_{j,w}^{(1,h)}
\right|
\le
\frac{C_sL_aD_0}{j}.
\]
Using (21),
\[
\begin{aligned}
\|II_{j,1,h}\|_2
&\le
\sum_{w\le j}
\left|
\alpha_{j,w}^{(1,h)}
-
\tilde\alpha_{j,w}^{(1,h)}
\right|
\|\tilde y_w^{(0)}\|_2
\|V_{1,h}\| \\
&\le
\frac{C}{j}
\end{aligned}
\tag{28}
\]
for some constant $C>0$ depending only on $\F$.

Combining (27) and (28), there exists $C'>0$ such that
\[
\|b_{j,1,h}-\tilde b_{j,1,h}\|_2
\le
\frac{C'}{j}.
\tag{29}
\]

For each layer $\ell$, define the position-wise transformer update
\[
f^{(\ell)}(u,b_1,\ldots,b_H)
:=
u+\sum_{h=1}^H b_h
+
\FNN_\ell
\left(
u+\sum_{h=1}^H b_h
\right).
\]
By the Lipschitz continuity of $\FNN_\ell$, $f^{(\ell)}$ is Lipschitz. Let
$L_{\mathrm{act},\ell}$ be one of its Lipschitz constants. Then
\[
\|\Delta y_j^{(1)}\|_2
\le
L_{\mathrm{act},1}
\left(
\|\Delta y_j^{(0)}\|_2
+
\sum_{h=1}^H
\|b_{j,1,h}-\tilde b_{j,1,h}\|_2
\right).
\]
Since $\Delta y_j^{(0)}=0$ for $j>i$, (29) yields
\[
\|\Delta y_j^{(1)}\|_2
\le
\frac{C_1}{j}
=
C_1\frac{(\ln j)^0}{j}
\]
for some $C_1>0$. Hence (20) holds for $\ell=1$.

\medskip
\noindent\textbf{Step 3: Induction step.}

Suppose $\ell\ge2$ and assume that (20) holds at layer $\ell-1$, namely,
\[
\|\Delta y_w^{(\ell-1)}\|_2
\le
C_{\ell-1}
\frac{(\ln w)^{\ell-2}}{w},
\qquad
w>i.
\tag{30}
\]

Fix $j>i$ and an attention head $h$. Again write
\[
b_{j,\ell,h}-\tilde b_{j,\ell,h}
=
I_{j,\ell,h}+II_{j,\ell,h},
\]
where
\[
I_{j,\ell,h}
:=
\sum_{w\le j}
\alpha_{j,w}^{(\ell,h)}
\Delta y_w^{(\ell-1)}V_{\ell,h},
\qquad
II_{j,\ell,h}
:=
\sum_{w\le j}
\bigl(
\alpha_{j,w}^{(\ell,h)}
-
\tilde\alpha_{j,w}^{(\ell,h)}
\bigr)
\tilde y_w^{(\ell-1)}V_{\ell,h}.
\]

By causality, $\Delta y_w^{(\ell-1)}=0$ for $w<i$. Using (22), (24), and
(30), we obtain
\[
\begin{aligned}
\|I_{j,\ell,h}\|_2
&\le
\frac{C_0\|V_{\ell,h}\|}{j}
\left(
\|\Delta y_i^{(\ell-1)}\|_2
+
\sum_{w=i+1}^{j}
\|\Delta y_w^{(\ell-1)}\|_2
\right) \\
&\le
\frac{C}{j}
\left(
1+
\sum_{w=i+1}^{j}
\frac{(\ln w)^{\ell-2}}{w}
\right).
\end{aligned}
\]
By integral comparison,
\[
\sum_{w=i+1}^{j}
\frac{(\ln w)^{\ell-2}}{w}
=
O\!\left((\ln j)^{\ell-1}\right).
\]
Therefore,
\[
\|I_{j,\ell,h}\|_2
\le
C\frac{(\ln j)^{\ell-1}}{j}.
\tag{31}
\]

For the second term, (25) gives
\[
\sum_{w\le j}
\left|
a_{j,w}^{(\ell,h)}
-
\tilde a_{j,w}^{(\ell,h)}
\right|
\le
L_a
\left(
j\|\Delta y_j^{(\ell-1)}\|_2
+
\sum_{w\le j}
\|\Delta y_w^{(\ell-1)}\|_2
\right).
\tag{32}
\]
The induction hypothesis gives
$j\|\Delta y_j^{(\ell-1)}\|_2
\le
C_{\ell-1}(\ln j)^{\ell-2}$. Moreover, by causality, (22), and the
induction hypothesis,
\[
\begin{aligned}
\sum_{w\le j}\|\Delta y_w^{(\ell-1)}\|_2
&\le
2M+
C_{\ell-1}
\sum_{w=i+1}^{j}
\frac{(\ln w)^{\ell-2}}{w} \\
&=
O\!\left((\ln j)^{\ell-1}\right).
\end{aligned}
\]
Since $j\ge2$, the first term in (32) can also be absorbed into the same
bound. Hence
\[
\sum_{w\le j}
\left|
a_{j,w}^{(\ell,h)}
-
\tilde a_{j,w}^{(\ell,h)}
\right|
=
O\!\left((\ln j)^{\ell-1}\right).
\]
Equation~(26) then implies
\[
\sum_{w\le j}
\left|
\alpha_{j,w}^{(\ell,h)}
-
\tilde\alpha_{j,w}^{(\ell,h)}
\right|
\le
C\frac{(\ln j)^{\ell-1}}{j}.
\]
Using the uniform bound (21),
\[
\begin{aligned}
\|II_{j,\ell,h}\|_2
&\le
\sum_{w\le j}
\left|
\alpha_{j,w}^{(\ell,h)}
-
\tilde\alpha_{j,w}^{(\ell,h)}
\right|
\|\tilde y_w^{(\ell-1)}\|_2
\|V_{\ell,h}\| \\
&\le
C\frac{(\ln j)^{\ell-1}}{j}.
\end{aligned}
\tag{33}
\]

Combining (31) and (33), we obtain
\[
\|b_{j,\ell,h}-\tilde b_{j,\ell,h}\|_2
\le
C\frac{(\ln j)^{\ell-1}}{j}.
\tag{34}
\]

Finally, by the Lipschitz continuity of $f^{(\ell)}$,
\[
\|\Delta y_j^{(\ell)}\|_2
\le
L_{\mathrm{act},\ell}
\left(
\|\Delta y_j^{(\ell-1)}\|_2
+
\sum_{h=1}^H
\|b_{j,\ell,h}-\tilde b_{j,\ell,h}\|_2
\right).
\]
Using (30) and (34),
\[
\|\Delta y_j^{(\ell)}\|_2
\le
L_{\mathrm{act},\ell}
\left(
C_{\ell-1}\frac{(\ln j)^{\ell-2}}{j}
+
HC\frac{(\ln j)^{\ell-1}}{j}
\right).
\]
Since $j\ge2$, both terms can be absorbed into a new constant $C_\ell>0$.
Therefore,
\[
\|\Delta y_j^{(\ell)}\|_2
\le
C_\ell\frac{(\ln j)^{\ell-1}}{j}.
\]
This completes the induction and proves (20).

\medskip
\noindent\textbf{Step 4: Output-layer bound.}

Taking $\ell=L$ and $j=n$ in (20), we obtain
\[
\|\Delta y_n^{(L)}\|_2
\le
C_L\frac{(\ln n)^{L-1}}{n}.
\]
By the definition of the output layer,
$\F(x)=y_n^{(L)}W_0+c$ and
$\F(x')=\tilde y_n^{(L)}W_0+c$. Hence
\[
\begin{aligned}
\|\F(x)-\F(x')\|_2
&=
\left\|
\bigl(y_n^{(L)}-\tilde y_n^{(L)}\bigr)W_0
\right\|_2 \\
&\le
\|W_0\|
\|\Delta y_n^{(L)}\|_2 \\
&\le
\|W_0\|C_L
\frac{(\ln n)^{L-1}}{n}.
\end{aligned}
\]
All constants appearing above depend only on the parameters and architecture
of the fixed transformer $\F$. Thus, setting
$C_{\F}:=\|W_0\|C_L$, we obtain, for every $x'\in B(x,1)$,
\[
\|\F(x)-\F(x')\|_2
\le
C_{\F}
\frac{(\ln n)^{L-1}}{n}.
\]
Taking the maximum over $x'\in B(x,1)$ gives
\[
\sen(\F,x)
=
\max_{x'\in B(x,1)}
\|\F(x)-\F(x')\|_2
\le
C_{\F}
\frac{(\ln n)^{L-1}}{n}.
\]
Since $n=\len(x)$,
\[
\sen(\F,x)
\le
C_{\F}
\frac{\ln^{L-1}(\len(x))}{\len(x)}.
\]
This completes the proof.
\end{proof}

\subsection{Proof of Theorem~\ref{th-2}}
\label{app:proof_thm5}

\begin{proof}[Proof of Theorem~\ref{th-2}]
We prove the theorem by explicitly constructing a Transformer whose
sensitivity matches the upper bound in Theorem~1 up to a constant factor.

To avoid confusion between the depth of the Transformer and the input
length, denote the prescribed depth by $D$. We construct a fixed
Transformer $F$ of depth $D$, with one attention head in each layer and
hidden dimension $2$. The feedforward network in every layer is set to
the zero map.

Let
\[
e=(1,-1)\in\mathbb{R}^2.
\]
Choose two distinct symbols $a,b\in\Sigma$, and define their embeddings by
\[
v_a=\eta e,
\qquad
v_b=0,
\]
where $\eta>0$ is a fixed constant.

In every attention layer, choose
\[
Q=0,
\qquad
K=0.
\]
Therefore, at every position $j$, all causal attention logits are equal.
The causal softmax weights are consequently uniform over all visible
positions:
\begin{equation}
\alpha_{j,w}
=
\frac{1}{j},
\qquad
1\leq w\leq j.
\label{eq:tight-uniform-attention}
\end{equation}

Choose the value matrix to be the identity:
\[
V=I_2.
\]
Since the feedforward network is zero, each layer consists only of the
causal attention operation together with the residual connection.

For every $n\geq2$, consider the pair of inputs
\[
x_n=(a,b,b,\ldots,b)\in\Sigma^n
\]
and
\[
x_n'=(b,b,b,\ldots,b)\in\Sigma^n.
\]
They differ only at the first position. Hence
\begin{equation}
x_n'\in B(x_n,1).
\label{eq:tight-input-pair}
\end{equation}

For the input $x_n'$, all initial embeddings are zero. Since the value
map is linear, the attention output of zero hidden states is again zero,
and the residual connection preserves zero. Thus every hidden state
remains zero at every layer. In particular,
\begin{equation}
F(x_n')=0.
\label{eq:tight-zero-input-output}
\end{equation}

We now consider $x_n$. Since the only nonzero initial embedding is a
multiple of $e$, and all operations in the constructed Transformer are
linear combinations of previous hidden states, every hidden state remains
in the one-dimensional subspace spanned by $e$.

Hence, for the hidden state at position $j$ after layer $\ell$, we may write
\[
y_j^{(\ell)}
=
m_j^{(\ell)}e,
\]
for some scalar $m_j^{(\ell)}$.

At layer $0$, the embeddings give
\begin{equation}
m_1^{(0)}=\eta,
\qquad
m_j^{(0)}=0
\quad
\text{for }j>1.
\label{eq:tight-initial-coefficients}
\end{equation}

By \eqref{eq:tight-uniform-attention}, the attention output at position
$j$ in layer $\ell$ is
\[
\frac{1}{j}
\sum_{w=1}^j
y_w^{(\ell-1)}
=
\left(
\frac{1}{j}
\sum_{w=1}^j
m_w^{(\ell-1)}
\right)e.
\]
Together with the residual connection, this gives the recurrence
\begin{equation}
m_j^{(\ell)}
=
m_j^{(\ell-1)}
+
\frac{1}{j}
\sum_{w=1}^j
m_w^{(\ell-1)},
\qquad
1\leq\ell\leq D.
\label{eq:tight-coefficient-recurrence}
\end{equation}

Since all initial coefficients are nonnegative, it follows immediately
from \eqref{eq:tight-coefficient-recurrence} that
\[
m_j^{(\ell)}\geq0
\]
for every $j$ and every $0\leq\ell\leq D$.

We next prove by induction on $\ell$ that, for every
$1\leq\ell\leq D$, there exist constants $c_\ell>0$ and
$N_\ell\in\mathbb{N}$ such that
\begin{equation}
m_j^{(\ell)}
\geq
c_\ell
\frac{(\ln j)^{\ell-1}}{j}
\qquad
\text{for every }j\geq N_\ell.
\label{eq:tight-induction-claim}
\end{equation}

For $\ell=1$, using \eqref{eq:tight-initial-coefficients} and
\eqref{eq:tight-coefficient-recurrence}, for every $j>1$ we have
\[
m_j^{(1)}
=
m_j^{(0)}
+
\frac{1}{j}
\sum_{w=1}^j
m_w^{(0)}
=
\frac{\eta}{j}.
\]
Therefore \eqref{eq:tight-induction-claim} holds for $\ell=1$ with,
for example,
\[
c_1=\eta.
\]

Now suppose that \eqref{eq:tight-induction-claim} holds for layer
$\ell-1$, where $2\leq\ell\leq D$. Since all coefficients are
nonnegative, \eqref{eq:tight-coefficient-recurrence} implies
\[
m_j^{(\ell)}
\geq
\frac{1}{j}
\sum_{w=1}^j
m_w^{(\ell-1)}.
\]
By the induction hypothesis, for all sufficiently large $j$,
\[
m_j^{(\ell)}
\geq
\frac{c_{\ell-1}}{j}
\sum_{w=N_{\ell-1}}^j
\frac{(\ln w)^{\ell-2}}{w}.
\]

Using the standard integral comparison,
\begin{align}
\sum_{w=N_{\ell-1}}^j
\frac{(\ln w)^{\ell-2}}{w}
&\geq
\int_{N_{\ell-1}}^j
\frac{(\ln t)^{\ell-2}}{t}\,dt
\nonumber\\
&=
\frac{
(\ln j)^{\ell-1}
-
(\ln N_{\ell-1})^{\ell-1}
}{
\ell-1
}.
\label{eq:tight-harmonic-integral}
\end{align}

Hence, for all sufficiently large $j$, there exists a constant
$C_\ell>0$, independent of $j$, such that
\begin{equation}
\sum_{w=N_{\ell-1}}^j
\frac{(\ln w)^{\ell-2}}{w}
\geq
C_\ell(\ln j)^{\ell-1}.
\label{eq:tight-harmonic-lower-bound}
\end{equation}

Consequently,
\[
m_j^{(\ell)}
\geq
c_{\ell-1}C_\ell
\frac{(\ln j)^{\ell-1}}{j}.
\]
Thus \eqref{eq:tight-induction-claim} holds at layer $\ell$ with
\[
c_\ell
=
c_{\ell-1}C_\ell>0.
\]
This completes the induction.

Taking $\ell=D$ and $j=n$ in
\eqref{eq:tight-induction-claim}, we obtain, for all sufficiently
large $n$,
\begin{equation}
m_n^{(D)}
\geq
c_D
\frac{(\ln n)^{D-1}}{n}.
\label{eq:tight-final-hidden-lower-bound}
\end{equation}

Finally, choose a linear output map that reads the coefficient in the
direction $e$. For example, let
\[
u=\frac{1}{2}(1,-1).
\]
Then
\[
\langle me,u\rangle=m
\]
for every $m\in\mathbb{R}$.

Therefore, at the final position,
\[
F(x_n)=m_n^{(D)},
\qquad
F(x_n')=0.
\]
By \eqref{eq:tight-final-hidden-lower-bound},
\begin{equation}
\left\|
F(x_n)-F(x_n')
\right\|_2
\geq
c_D
\frac{(\ln n)^{D-1}}{n}.
\label{eq:tight-output-difference}
\end{equation}

Since $x_n'\in B(x_n,1)$ by
\eqref{eq:tight-input-pair}, the definition of sensitivity gives
\begin{align}
\operatorname{sen}(F,x_n)
&\geq
\left\|
F(x_n)-F(x_n')
\right\|_2
\nonumber\\
&\geq
c_D
\frac{(\ln n)^{D-1}}{n}.
\label{eq:tight-sensitivity-lower-bound}
\end{align}

Now let $M\in\mathbb{N}_+$ be arbitrary. Choose $n\geq M$ sufficiently
large so that \eqref{eq:tight-sensitivity-lower-bound} holds, and set
$x=x_n$. Then
\[
\len(x)=n\geq M
\]
and
\[
\operatorname{sen}(F,x)
\geq
c_F
\frac{
(\ln\len(x))^{D-1}
}{
\len(x)
},
\]
where
\[
c_F:=c_D>0
\]
depends only on the constructed Transformer $F$.

Therefore, the upper bound in Theorem~1 is tight up to a multiplicative
constant.
\end{proof}

\section{Incompleteness of Infinite-Precision Transformers under Positive Confidence}
\label{app:proof_thm6}
Proof of Theorem \ref{th-confd}.
\begin{proof}
Suppose, toward a contradiction, that there exists a fixed transformer $F$ of depth $L$ that memorizes $T$ with confidence $c>0$.

Since $T$ is non-tame, there exist constants $C>0$ and
$0\leq \alpha<1$ such that $T$ has infinitely many
$(C,\alpha)$-sensitive inputs. Since the input alphabet $\Sigma_1$
is finite, there are only finitely many strings of bounded length.
Hence, we may choose a sequence of sensitive halting inputs
$(x_n)_{n\geq 1}$ such that
\[
    \ell_n := \operatorname{len}(x_n) \longrightarrow \infty.
\]
By the definition of $(C,\alpha)$-sensitivity, for every $n$ there
exists another halting input $x_n'$ satisfying
\[
    x_n'
    \in
    B\!\left(
        x_n,
        \left\lceil C\ell_n^\alpha \right\rceil
    \right)
\]
and
\[
    T(x_n)\neq T(x_n').
\]
Set
\[
    K_n:=\left\lceil C\ell_n^\alpha\right\rceil.
\]
Then, for all $n$,
\[
    K_n\leq (C+1)\ell_n^\alpha.
    \tag{1}
\]

Let
\[
    y_n := \widehat F(x_n)\oplus \sigma_0,
    \qquad
    y_n' := \widehat F(x_n')\oplus \sigma_0,
\]
where $\sigma_0$ is the end-of-output token. Since $F$ memorizes $T$,
\[
    \widehat F_{\mathrm{out}}(x_n)=T(x_n),
    \qquad
    \widehat F_{\mathrm{out}}(x_n')=T(x_n').
\]
Therefore,
\[
    y_n\neq y_n'.
\]

Let
\[
    j_n:=\min\{j:y_n[j]\neq y_n'[j]\}
\]
be the first position at which the two complete autoregressive
generations disagree, and let
\[
    z_n
    :=
    y_n[1:j_n-1]
    =
    y_n'[1:j_n-1]
\]
be their common prefix before this disagreement. Write
\[
    a_n:=y_n[j_n],
    \qquad
    b_n:=y_n'[j_n],
\]
so that $a_n\neq b_n$. At least one of $a_n$ and $b_n$ is not
$\sigma_0$. Exchanging the roles of $x_n$ and $x_n'$ if necessary,
we may assume
\[
    a_n\neq \sigma_0.
\]

Consider the two autoregressive states
\[
    s_n:=x_n\oplus z_n,
    \qquad
    s_n':=x_n'\oplus z_n.
\]
They have the same length
\[
    N_n:=\operatorname{len}(s_n)
        =\operatorname{len}(s_n')
        \geq \ell_n.
\]
Moreover, appending the same prefix $z_n$ creates no new
disagreement. Hence
\[
    s_n'\in B(s_n,K_n).
    \tag{2}
\]

By Corollary~2, with the constant $C_F$ depending only on the fixed
transformer $F$,
\[
    \|F(s_n)-F(s_n')\|_2
    \leq
    K_n C_F
    \frac{(\ln N_n)^{L-1}}{N_n}.
    \tag{3}
\]
For all sufficiently large $t$, the function
\[
    t\longmapsto \frac{(\ln t)^{L-1}}{t}
\]
is decreasing. Since $N_n\geq \ell_n$, equations~(1)--(3) imply,
for all sufficiently large $n$,
\[
\begin{aligned}
    \|F(s_n)-F(s_n')\|_2
    &\leq
    (C+1)C_F\ell_n^\alpha
    \frac{(\ln \ell_n)^{L-1}}{\ell_n} \\
    &=
    (C+1)C_F
    \frac{(\ln \ell_n)^{L-1}}
         {\ell_n^{\,1-\alpha}}.
\end{aligned}
\tag{4}
\]
Since $\alpha<1$,
\[
    \frac{(\ln \ell_n)^{L-1}}
         {\ell_n^{\,1-\alpha}}
    \longrightarrow 0.
\]
Therefore,
\[
    \|F(s_n)-F(s_n')\|_2
    \longrightarrow 0.
    \tag{5}
\]

We now use the positive-confidence assumption. By construction,
$a_n$ is the next token generated by $F$ at the state $s_n$.
Since $a_n\neq\sigma_0$ and $F$ memorizes $T$ with confidence $c$,
we have
\[
    F_{a_n}(s_n)
    -
    \max_{k\neq a_n}F_k(s_n)
    >c.
\]
In particular,
\[
    F_{a_n}(s_n)-F_{b_n}(s_n)>c.
    \tag{6}
\]

On the other hand, $b_n$ is the token selected by the autoregressive
generation at the state $s_n'$. Hence
\[
    F_{b_n}(s_n')\geq F_{a_n}(s_n'),
\]
or equivalently,
\[
    F_{b_n}(s_n')-F_{a_n}(s_n')\geq 0.
    \tag{7}
\]
Adding equations~(6) and~(7) gives
\[
    \bigl(F_{a_n}(s_n)-F_{a_n}(s_n')\bigr)
    +
    \bigl(F_{b_n}(s_n')-F_{b_n}(s_n)\bigr)
    >c.
    \tag{8}
\]
Consequently, at least one of the two terms on the left-hand side
of equation~(8) is larger than $c/2$. Therefore,
\[
    \|F(s_n)-F(s_n')\|_\infty>\frac{c}{2},
\]
and hence
\[
    \|F(s_n)-F(s_n')\|_2>\frac{c}{2}.
    \tag{9}
\]

Equation~(9) contradicts equation~(5) for all sufficiently large
$n$. Therefore, no fixed transformer $F$ can memorize a non-tame
Turing machine $T$ with any fixed confidence $c>0$.
\end{proof}

\section{Almost-Everywhere Incompleteness of Transformers}
In this section, we will prove  Theorem~\ref{ae} and Corollary \ref{Tarithae}.

\subsection{Parameter Space of Fixed-Architecture Transformers}

Let the tuple of all trainable parameters be denoted by $\theta$:

    \[
    \theta =
    \underbrace{\left(V_{\rm{emb}},\{Q_{l,h},K_{l,h},V_{l,h}\}_{l,h}\right)}_{\text{Embedding and attention}},
    \underbrace{\left(\{E_{l,1},E_{l,2},b_l\},W_{\rm{\rm{out}}},c\right)}_{\text{Feedforward Network}},\underbrace{\left(\{C_i\}_{i=1}^{\left\lceil\frac{W}{2}\right\rceil}\right)}_{\text{Position Embedding}}.
    \]
Let $T=|\Sigma|$ denote the vocabulary size. The total number of parameters is determined by
the depth $D$, width $W$, number of attention heads $H$, and vocabulary size
$|\Sigma|$, which we denote by $P(W,H,D,|\Sigma|)$.

For a transformer with a fixed architecture $(W,H,D,\Sigma)$, the total number
of real-valued parameters is
    \[
    \begin{aligned}
    P(W,H,D,T)
    =&
    \underbrace{TW}_{\text{Embedding Layer}}
    +
    \underbrace{3DHW^2}_{\text{attention Layers}}
    +
    \underbrace{D(2W^2+W)}_{\text{Feedforward Network}}
    \\
    &+
    \underbrace{W(T+1)+(T+1)}_{\text{Output Layer}}+
    \underbrace{
    \left\lceil\frac{W}{2}\right\rceil
    }_{\text{RoPE}} .
    \end{aligned}
    \]
Equivalently,
    \[
    P(W,H,D,T)=TW+DW(3HW+2W+1)+(W+1)(T+1)+\lceil\frac{W}{2}\rceil.
    \]
Therefore, the parameter space corresponding to a fixed architecture
$(W,H,D,\Sigma)$ is defined as
    \[
    \Theta_{W,H,D,\Sigma}
    :=
    \left\{\theta:\theta\text{ is a parameter set of a transformer with architecture }(W,H,D,\Sigma)\right\}.
    \]
Without imposing any additional constraints on parameter magnitudes, this
parameter space is naturally isomorphic to a finite-dimensional Euclidean
space $\mathbb{R}^{P(W,H,D,T)}$

\subsection{Proof of Theorem~\ref{ae}}
\label{app:proof_thm7}

Before proceeding to the formal proof of the theorem, we first need to establish that the parameter set of interest is Borel measurable.

\begin{lemma}[Measurability of the memorization parameter set]
\label{lem:MG-borel}

Fix a Turing machine $\TM$, a transformer architecture with depth $D$,
width $W$, number of heads $H$, and finite vocabulary $\Sigma$. Let
$\Theta_{D,W,H,\Sigma}\cong\mathbb{R}^{P}$ be its parameter space.
Let $G:\Sigma^*\to 2^{\Sigma^*}$ be an $f$-bounded CoT rule, i.e.,
$|G(x)|\leq f(\len(x))$ for every halting input $x$ of $\TM$.

Assume that generation is performed using a fixed deterministic
$\arg\max$ tie-breaking rule. Then
\[
M_G(\TM)\subseteq\Theta_{D,W,H,\Sigma}
\]
is a Borel measurable set.
\end{lemma}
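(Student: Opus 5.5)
We will write $M_G(\TM)$ as a countable intersection of countable unions of Borel sets in $\Theta_{D,W,H,\Sigma}\cong\mathbb{R}^P$. The set of stoppable inputs $x\in\Sigma_1^*$ is countable. By definition, $\theta\in M_G(\TM)$ iff for every stoppable $x$ there exist a CoT string $c\in G(x)$ such that
\[
\widehat{\F}_\theta(x)=c\oplus\sigma_T\oplus\TM(x)
\]
and $c$ contains no $\sigma_T$ (so that the last occurrence of $\sigma_T$ is the delimiter). We may restrict to such $c$, or allow any $c$ and use the last occurrence; either way, for each $x$ this is a finite union (since $|G(x)|\le f(\len(x))$, although countable would suffice) over candidate full outputs $y=c\oplus\sigma_T\oplus\TM(x)$. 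Thus
\[
M_G(\TM)=\bigcap_{x\ \mathrm{stoppable}}\ \bigcup_{c\in G(x)} E_{x,y_c},\qquad E_{x,y}:=\{\theta:\widehat{\F}_\theta(x)=y\},
\]
and it suffices to show that each $E_{x,y}$ is Borel.

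For fixed $x$ and $y$ of length $m$, the event $\widehat{\F}_\theta(x)=y$ is the finite conjunction over $j=1,\dots,m+1$ of the conditions that the $\arg\max$ with the fixed tie-breaking rule at input $x\oplus y[1:j-1]$ equals $y[j]$, where $y[m+1]:=\sigma_0$. For a fixed index $s$ and a fixed input string $u$, the set $\{\theta:\argmax_k\F_{\theta,k}(u)=s\}$ under a deterministic tie-breaking rule (say, smallest index) equals
\[
\bigcap_{k<s}\{\F_{\theta,k}(u)<\F_{\theta,s}(u)\}\cap\bigcap_{k>s}\{\F_{\theta,k}(u)\le\F_{\theta,s}(u)\}.
\]
This is Borel as soon as $\theta\mapsto\F_\theta(u)$ is Borel measurable.

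The main step, which is routine, is to check that $\theta\mapsto\F_\theta(u)\in\mathbb{R}^{T+1}$ is continuous for each fixed finite string $u$. The unlimited-precision transformer is a finite composition of maps continuous in $(\theta,\text{hidden states})$: embedding lookup (linear in $\theta$), the RoPE entries $\cos((u-v)C_i)$ and $\sin((u-v)C_i)$ (continuous in $C_i$), bilinear logits, the causal softmax over finitely many positions (the mask $-\infty$ simply removes terms, so it is a smooth function of the remaining logits), value products, ReLU, the residual sums, and the final affine readout. Composition preserves continuity, so each coordinate is continuous and the strict and non-strict inequality sets above are open and closed respectively, hence Borel. The only subtlety to address is that generation may be infinite or may fail to match; but $E_{x,y}$ only inspects finitely many steps, so no limiting argument is needed.

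Combining these facts, each $E_{x,y}$ is a finite intersection of Borel sets. The union over $c\in G(x)$ is at most countable, and the intersection over stoppable $x$ is countable, so $M_G(\TM)$ is Borel. I expect no real obstacle beyond carefully stating continuity of masked softmax and the tie-breaking encoding.
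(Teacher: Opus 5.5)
Your proposal is correct and follows essentially the same route as the paper. Both write $M_G(\TM)$ as a countable intersection over halting inputs of finite unions over admissible complete generations $c\oplus\sigma_T\oplus\TM(x)\oplus\sigma_0$, and each such generation event is a finite intersection of tie-breaking $\arg\max$ events, given by strict and non-strict inequalities between logits that are continuous in $\theta$. One small point: you should not restrict to $c$ free of $\sigma_T$. Since $\TM(x)\in\Sigma_2^*\subset\Sigma_0^*$, the delimiter before $\TM(x)$ is automatically the last occurrence of $\sigma_T$, so your second option (allowing any $c\in G(x)$) is the one that matches the definition of $M_G(\TM)$ exactly.
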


\begin{proof}
Fix a finite autoregressive state $s$ and an output token $a$, and let
$L_{s,a}(\theta):=F_\theta(s)_a$ denote the output logit corresponding to
token $a$. For fixed $s$ and $a$, the map
$\theta\mapsto L_{s,a}(\theta)$ is continuous.

Fix a deterministic tie-breaking rule that chooses the token with the
smallest index among all tokens attaining the maximal logit. Then the set
of parameters for which the next generated token at state $s$ is $a$ is
\[
\mathcal E_{s,a}
:=
\bigcap_{j<a}
\left\{
\theta:
L_{s,a}(\theta)-L_{s,j}(\theta)>0
\right\}
\cap
\bigcap_{j>a}
\left\{
\theta:
L_{s,a}(\theta)-L_{s,j}(\theta)\geq0
\right\}.
\]
Hence $\mathcal E_{s,a}$ is Borel.

For a halting input $x$, every admissible CoT $r\in G(x)$ determines an
admissible complete generation sequence
$y=r\oplus\sigma_T\oplus\TM(x)\oplus\sigma_0$, where $\sigma_T$ is the
CoT delimiter and $\sigma_0$ is the end-of-output symbol. Define
\[
\Gamma_G^\TM(x)
:=
\left\{
r\oplus\sigma_T\oplus\TM(x)\oplus\sigma_0:
r\in G(x)
\right\}.
\]
Then $|\Gamma_G^\TM(x)|=|G(x)|\leq f(\len(x))$, so
$\Gamma_G^\TM(x)$ is finite.

For a fixed complete sequence
$y=(y_1,\ldots,y_m)\in\Gamma_G^\TM(x)$, write
$y_{<j}:=(y_1,\ldots,y_{j-1})$ and define
\[
\mathcal E_{x,y}
:=
\bigcap_{j=1}^{m}
\mathcal E_{x\oplus y_{<j},\,y_j}.
\]
Since $m<\infty$, the set $\mathcal E_{x,y}$ is Borel.

The set of parameters that generate an admissible complete output on
input $x$ is therefore
\[
\mathcal E_x^G
:=
\bigcup_{y\in\Gamma_G^\TM(x)}
\mathcal E_{x,y}.
\]
Since $\Gamma_G^\TM(x)$ is finite, $\mathcal E_x^G$ is a finite union of
Borel sets and hence is Borel.

Finally,
\[
M_G(\TM)
=
\bigcap_{x\in\mathcal H_\TM}
\mathcal E_x^G,
\]
where $\mathcal H_\TM$ denotes the set of halting inputs of $\TM$.
Since $\mathcal H_\TM\subseteq\Sigma_1^*$ is countable, this is a
countable intersection of Borel sets. Therefore $M_G(\TM)$ is Borel
measurable.
\end{proof}

We now prove Theorem~\ref{ae}.

\begin{theorem}

  For arbitrary $S,D,W,H,\Sigma$, assume that the following conditions are satisfied.
    
(1) $\TM=(Q,\Sigma_1,\Sigma_2,\delta)$ is a $(C,\alpha)$ non-tame Turing machine and $\Sigma_2\subset \Sigma_0$.

(2) $\Alg(S,D,W,H,\Sigma)$ is an absolutely continuous distribution.

(3) $G$ is a $f$-number of rules, and $f$ satisfies $\lim_{n\to\infty}\frac{f(n)(\ln n)^{(D-1)/2}}{\sqrt{n^{1-\alpha}}}=0$.

Then, $\Pr_{\F\sim \Alg(S,D,W,H,\Sigma)}(\F\in M_G(\TM))=0$.
That is, if fewer than $\widetilde{o}(\sqrt{n^{1-\alpha}})$ CoTs are allowed for inputs of length $n$, then the probability for $\F$ trained by algorithm $\Alg$ to memorize $\TM$ is zero.
\end{theorem}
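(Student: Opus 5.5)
Since $\Alg(S,D,W,H,\Sigma)$ is absolutely continuous with respect to Lebesgue measure on $\Theta_{D,W,H,\Sigma}\cong\R^{P}$, and $M_G(\TM)$ is Borel by Lemma~\ref{lem:MG-borel}, it suffices to show that $M_G(\TM)$ has Lebesgue measure zero. By Fubini, I will condition on every parameter except the output bias $c\in\R^{T+1}$. I will show that for each fixed value $\theta'$ of the remaining parameters, the slice $\{c:(\theta',c)\in M_G(\TM)\}$ is Lebesgue-null in $\R^{T+1}$. The reason for isolating $c$ is that the next-token rule at any state $s$ becomes linear in $c$: token $a$ beats token $b$ exactly when $c_a-c_b\ge g_b(s)-g_a(s)$, where $g(s):=y^{(L)}_{\len(s)}W_0$ depends only on $\theta'$.

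Fix $\theta'$. For a slice of any radius $R$, I will intersect with a ball $\|c\|\le R$ and show its measure tends to $0$ along a sequence of inputs. By non-tameness, pick $(C,\alpha)$-sensitive inputs $x_n$ with $\ell_n=\len(x_n)\to\infty$, together with partners $x_n'\in\mathbb{B}(x_n,K_n)$, $K_n\le (C+1)\ell_n^\alpha$, such that $\TM(x_n)\neq\TM(x_n')$. If $c$ lies in the slice, then the complete generations on $x_n$ and $x_n'$ must be some $y\in\Gamma^\TM_G(x_n)$ and $y'\in\Gamma^\TM_G(x_n')$, and $y\neq y'$ because the outputs after $\sigma_T$ differ. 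Consider one of the at most $f(\ell_n)^2$ pairs $(y,y')$. Let $j$ be its first disagreement, $z$ the common prefix, $a=y[j]$ and $b=y'[j]$.

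At the states $s=x_n\oplus z$ and $s'=x_n'\oplus z$, the argmax rule forces
\[
g_b(s)-g_a(s)\le c_a-c_b\le g_b(s')-g_a(s').
\]
By Corollary~\ref{cor:k-perturb}, exactly as in the proof of Theorem~\ref{th-confd}, the two endpoints differ by at most $\epsilon_n:=2(C+1)C_\F\,(\ln\ell_n)^{D-1}/\ell_n^{1-\alpha}$, where $C_\F$ depends only on $\theta'$ and $c$ does not enter $g$. So $c_a-c_b$ lies in an interval of length $\epsilon_n$ that is determined by $\theta'$ and the pair $(y,y')$. Inside the ball of radius $R$, this slab has measure $O_R(\epsilon_n)$.

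The slice is therefore covered by the union of these slabs over all pairs, with total measure $O_R(f(\ell_n)^2\epsilon_n)$. Condition (3) says exactly that $f(n)^2(\ln n)^{D-1}/n^{1-\alpha}\to0$, so this bound tends to $0$. Hence the slice has measure zero for every $R$, and Fubini gives measure zero for $M_G(\TM)$.

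The main obstacle is that Corollary~\ref{cor:k-perturb} has to be applied with a constant that depends on $\theta'$ and is uniform in $c$, which is fine since $c$ only shifts the logits. A second subtlety is that the slab for a pair depends only on $(y,y')$, not on the states reached, because the prefix $z$ is determined by the pair; the union bound is therefore valid. Ties are measure-zero hyperplanes and are harmless.
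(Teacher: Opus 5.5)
Your proposal is correct and follows the paper's proof essentially step for step. Like the paper, you freeze all parameters except the output bias and use Corollary~\ref{cor:k-perturb} to trap $c_a-c_b$ in an interval of width $O\bigl((\ln\ell_n)^{D-1}/\ell_n^{1-\alpha}\bigr)$ at the first disagreement of each admissible CoT pair, then union-bound over at most $f(\ell_n)^2$ slabs and conclude via Fubini/Tonelli and absolute continuity; the only differences are cosmetic (a ball instead of the cube $[-R,R]^K$, and explicit remarks on the uniformity of $C_\F$ in $c$ and on ties).
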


\begin{proof}
Let $\Theta_{D,W,H,\Sigma}\cong\mathbb{R}^{P}$ be the parameter space of
the fixed transformer architecture, equipped with the $P$-dimensional
Lebesgue measure $\lambda_P$. By Lemma~\ref{lem:MG-borel},
$M_G(\TM)$ is Borel measurable.

We first prove $\lambda_P(M_G(\TM))=0$, and then use the absolute
continuity of the randomized training distribution.

\medskip
\noindent\textbf{Step 1: Choose a sequence of sensitive input pairs.}

Since $\TM$ is non-tame with coefficients $C>0$ and $0\leq\alpha<1$,
it has infinitely many $(C,\alpha)$-sensitive halting inputs. Since the
input alphabet $\Sigma_1$ is finite, there are only finitely many strings
whose lengths are bounded by any fixed constant. Therefore, we may choose
a sequence of sensitive inputs $(x_n)_{n\geq1}$ such that
\begin{equation}
\ell_n:=\len(x_n)\longrightarrow\infty.
\label{eq:sensitive-input-length-diverges}
\end{equation}

By the definition of $(C,\alpha)$-sensitivity, for every $n$ there exists
another halting input
$x_n'\in B(x_n,\lceil C\ell_n^\alpha\rceil)$ such that
\begin{equation}
\TM(x_n)\neq\TM(x_n').
\label{eq:sensitive-output-difference}
\end{equation}
By the definition of $B(x,k)$,
\begin{equation}
\len(x_n')=\len(x_n)=\ell_n,
\label{eq:sensitive-pair-equal-length}
\end{equation}
and
\begin{equation}
d_{\mathrm{nf}}(x_n,x_n')
\leq
(C+1)\ell_n^\alpha.
\label{eq:sensitive-pair-distance}
\end{equation}

\medskip
\noindent\textbf{Step 2: Fix all parameters except the output bias.}

Let $K$ denote the size of the output vocabulary. Separate the output bias
from all remaining transformer parameters and write
$\theta=(\vartheta,c)\in\mathbb{R}^{P-K}\times\mathbb{R}^{K}$. Then,
for every finite autoregressive state $s$,
$F_{\vartheta,c}(s)=G_\vartheta(s)+c$, where $G_\vartheta(s)$ does not
depend on $c$.

Fix $\vartheta\in\mathbb{R}^{P-K}$ and define the corresponding
output-bias section
\[
M_{G,\vartheta}(\TM)
:=
\left\{
c\in\mathbb{R}^{K}:
(\vartheta,c)\in M_G(\TM)
\right\}.
\]
We shall prove that $\lambda_K(M_{G,\vartheta}(\TM))=0$.

\medskip
\noindent\textbf{Step 3: Every pair of admissible CoTs produces a thin slab.}

Fix $n$ and take arbitrary complete admissible outputs
$y\in\Gamma_G^\TM(x_n)$ and $y'\in\Gamma_G^\TM(x_n')$. By
\eqref{eq:sensitive-output-difference}, the correct Turing-machine outputs
on $x_n$ and $x_n'$ are different. Since every sequence in
$\Gamma_G^\TM(x)$ contains the corresponding correct output, we have
$y\neq y'$.

Let $j(y,y'):=\min\{j:y[j]\neq y'[j]\}$ be their first disagreement
position, and let
$z(y,y'):=y[1:j(y,y')-1]=y'[1:j(y,y')-1]$ be the common prefix preceding
their first disagreement. Write
$a(y,y'):=y[j(y,y')]$ and $b(y,y'):=y'[j(y,y')]$, so that
$a(y,y')\neq b(y,y')$.

Define
\[
s_{n,y,y'}
:=
x_n\oplus z(y,y'),
\qquad
s_{n,y,y'}'
:=
x_n'\oplus z(y,y'),
\]
and let $N_{n,y,y'}:=\len(s_{n,y,y'})$. Then
\begin{equation}
N_{n,y,y'}
=
\len(s_{n,y,y'}')
\geq
\ell_n.
\label{eq:autoregressive-state-length}
\end{equation}

Appending the same prefix to $x_n$ and $x_n'$ does not introduce any new
disagreement. Hence, by \eqref{eq:sensitive-pair-distance},
\[
d_{\mathrm{nf}}
\left(
s_{n,y,y'},
s_{n,y,y'}'
\right)
\leq
(C+1)\ell_n^\alpha.
\]
Let
$k_n:=d_{\mathrm{nf}}(x_n,x_n')\leq(C+1)\ell_n^\alpha$. We may construct
a sequence
$s^{(0)}=s_{n,y,y'},s^{(1)},\ldots,s^{(k_n)}=s_{n,y,y'}'$ such that every
consecutive pair differs at exactly one non-final position.

Applying the single-token sensitivity bound to the transformer with
parameters $(\vartheta,0)$, and then using the triangle inequality, gives
\[
\left\|
G_\vartheta(s_{n,y,y'})
-
G_\vartheta(s_{n,y,y'}')
\right\|_2
\leq
C_{(\vartheta,0)}
k_n
\frac{(\ln N_{n,y,y'})^{D-1}}
{N_{n,y,y'}}.
\]
Since $k_n\leq(C+1)\ell_n^\alpha$ and the function
$t\mapsto(\ln t)^{D-1}/t$ is decreasing for all sufficiently large $t$,
it follows from \eqref{eq:autoregressive-state-length} that, for all
sufficiently large $n$,
\begin{equation}
\left\|
G_\vartheta(s_{n,y,y'})
-
G_\vartheta(s_{n,y,y'}')
\right\|_2
\leq
(C+1)C_{(\vartheta,0)}
\frac{(\ln\ell_n)^{D-1}}
{\ell_n^{1-\alpha}}.
\label{eq:uniform-sensitive-logit-bound}
\end{equation}

Now suppose that an output bias $c$ makes the transformer generate $y$ on
$x_n$ and $y'$ on $x_n'$. For simplicity, write
$a:=a(y,y')$ and $b:=b(y,y')$.

At state $s_{n,y,y'}$, the next generated token must be $a$, so
\[
c_a-c_b
\geq
G_\vartheta(s_{n,y,y'})_b
-
G_\vartheta(s_{n,y,y'})_a.
\]
Similarly, at state $s_{n,y,y'}'$, the next generated token must be $b$, so
\[
c_a-c_b
\leq
G_\vartheta(s_{n,y,y'}')_b
-
G_\vartheta(s_{n,y,y'}')_a.
\]
Thus $c_a-c_b\in I_{n,y,y'}(\vartheta)$, where
\[
I_{n,y,y'}(\vartheta)
:=
\left[
G_\vartheta(s_{n,y,y'})_b
-
G_\vartheta(s_{n,y,y'})_a,
\,
G_\vartheta(s_{n,y,y'}')_b
-
G_\vartheta(s_{n,y,y'}')_a
\right].
\]
If the left endpoint is larger than the right endpoint, then no output bias
$c$ can generate this pair of CoTs, and the corresponding feasible set is
empty.

Whenever $I_{n,y,y'}(\vartheta)$ is nonempty,
\begin{align}
|I_{n,y,y'}(\vartheta)|
&\leq
\left|
G_\vartheta(s_{n,y,y'}')_b
-
G_\vartheta(s_{n,y,y'})_b
\right|
\nonumber\\
&\quad+
\left|
G_\vartheta(s_{n,y,y'}')_a
-
G_\vartheta(s_{n,y,y'})_a
\right|
\nonumber\\
&\leq
2
\left\|
G_\vartheta(s_{n,y,y'})
-
G_\vartheta(s_{n,y,y'}')
\right\|_2.
\label{eq:interval-width-by-logit-distance}
\end{align}
Combining \eqref{eq:uniform-sensitive-logit-bound} and
\eqref{eq:interval-width-by-logit-distance}, we obtain
\begin{equation}
|I_{n,y,y'}(\vartheta)|
\leq
2(C+1)C_{(\vartheta,0)}
\frac{(\ln\ell_n)^{D-1}}
{\ell_n^{1-\alpha}}
\label{eq:uniform-interval-width-tm}
\end{equation}
for all sufficiently large $n$, uniformly over all admissible pairs
$(y,y')$.

\medskip
\noindent\textbf{Step 4: Cover the feasible section by finitely many thin slabs.}

For distinct tokens $a,b$ and an interval $I\subseteq\mathbb{R}$, define
$\mathcal H_{a,b}(I):=\{c\in\mathbb{R}^{K}:c_a-c_b\in I\}$.

Let $\mathcal S_{n,\vartheta}$ be the set of output biases for which the
transformer generates some admissible complete output on both $x_n$ and
$x_n'$. By the first-disagreement argument above,
\begin{equation}
\mathcal S_{n,\vartheta}
\subseteq
\bigcup_{\substack{
y\in\Gamma_G^\TM(x_n)\\
y'\in\Gamma_G^\TM(x_n')
}}
\mathcal H_{a(y,y'),b(y,y')}
\left(
I_{n,y,y'}(\vartheta)
\right).
\label{eq:slab-cover-tm}
\end{equation}

Fix $R>0$ and let $Q_R:=[-R,R]^K$. For every $a\neq b$ and every finite
interval $I$, Fubini's theorem gives
\begin{equation}
\lambda_K
\left(
\mathcal H_{a,b}(I)\cap Q_R
\right)
\leq
(2R)^{K-1}|I|.
\label{eq:single-slab-measure-tm}
\end{equation}

By the $f$-boundedness of $G$ and
\eqref{eq:sensitive-pair-equal-length},
\begin{equation}
|\Gamma_G^\TM(x_n)|
\,
|\Gamma_G^\TM(x_n')|
\leq
f(\ell_n)^2.
\label{eq:number-of-cot-pairs-tm}
\end{equation}

Combining \eqref{eq:slab-cover-tm},
\eqref{eq:single-slab-measure-tm},
\eqref{eq:uniform-interval-width-tm}, and
\eqref{eq:number-of-cot-pairs-tm}, we obtain, for all sufficiently large
$n$,
\begin{align}
\lambda_K
\left(
\mathcal S_{n,\vartheta}\cap Q_R
\right)
&\leq
2(C+1)C_{(\vartheta,0)}
(2R)^{K-1}
f(\ell_n)^2
\frac{(\ln\ell_n)^{D-1}}
{\ell_n^{1-\alpha}}
\nonumber\\
&=
2(C+1)C_{(\vartheta,0)}
(2R)^{K-1}
\left[
\frac{
f(\ell_n)(\ln\ell_n)^{(D-1)/2}
}{
\sqrt{\ell_n^{1-\alpha}}
}
\right]^2.
\label{eq:section-cover-measure-bound-tm}
\end{align}

By assumption,
\[
\lim_{n\to\infty}
\frac{
f(n)(\ln n)^{(D-1)/2}
}{
\sqrt{n^{1-\alpha}}
}
=0.
\]
Together with \eqref{eq:sensitive-input-length-diverges}, this implies from
\eqref{eq:section-cover-measure-bound-tm} that
\begin{equation}
\lambda_K
\left(
\mathcal S_{n,\vartheta}\cap Q_R
\right)
\longrightarrow0.
\label{eq:Sn-measure-goes-zero-tm}
\end{equation}

Every $c\in M_{G,\vartheta}(\TM)$ must correctly handle both $x_n$ and
$x_n'$ for every $n$. Hence
$M_{G,\vartheta}(\TM)\subseteq\mathcal S_{n,\vartheta}$ for every $n$.
Therefore,
\[
\lambda_K
\left(
M_{G,\vartheta}(\TM)\cap Q_R
\right)
\leq
\lambda_K
\left(
\mathcal S_{n,\vartheta}\cap Q_R
\right)
\]
for every $n$. Letting $n\to\infty$ and using
\eqref{eq:Sn-measure-goes-zero-tm}, we obtain
\begin{equation}
\lambda_K
\left(
M_{G,\vartheta}(\TM)\cap Q_R
\right)
=0.
\label{eq:bounded-section-zero-tm}
\end{equation}

Since $\mathbb{R}^{K}=\bigcup_{m=1}^{\infty}[-m,m]^K$, it follows from
\eqref{eq:bounded-section-zero-tm} that
\begin{equation}
\lambda_K
\left(
M_{G,\vartheta}(\TM)
\right)
=0
\qquad
\text{for every }
\vartheta\in\mathbb{R}^{P-K}.
\label{eq:all-output-bias-sections-zero}
\end{equation}

\medskip
\noindent\textbf{Step 5: The full memorization parameter set has measure zero.}

By Lemma~\ref{lem:MG-borel}, $M_G(\TM)$ is Borel measurable. Hence
Tonelli's theorem and \eqref{eq:all-output-bias-sections-zero} imply
\begin{align}
\lambda_P(M_G(\TM))
=
\int_{\mathbb{R}^{P-K}}
\lambda_K
\left(
M_{G,\vartheta}(\TM)
\right)
\,d\vartheta
\nonumber = 0 
\label{eq:MG-lebesgue-zero}
\end{align}

\medskip
\noindent\textbf{Step 6: Apply absolute continuity of the training distribution.}

Let $\mu:=\Alg(S,D,W,H,\Sigma)$ be the probability distribution on
transformer parameters produced by the randomized training algorithm. By
assumption, $\mu\ll\lambda_P$. Since$\lambda_P(M_G(\TM))=0$, absolute
continuity implies $\mu(M_G(\TM))=0$. Therefore,
\[
\Pr_{\F\sim\Alg(S,D,W,H,\Sigma)}
\left(
\F\in M_G(\TM)
\right)
=0.
\]

This completes the proof.
\end{proof}

\subsection{Proof of Corollary \ref{Tarithae}}
\label{app-Tarithae}
We now restate Corollary \ref{Tarithae} and provide a proof.

\begin{corollary}
For $\TM_\Ar$, we restrict the generation rules $G$ to performing one calculation at a time in CoT. That is, at each CoT step, the leftmost currently calculable innermost operation is evaluated and replaced by its value in a fixed format (shown in Example~\ref{eg:mem_turing}). Then, if $\Alg(S,D,W,H,\Sigma)$ is an absolutely continuous distribution, then $\Pr_{\F\sim \Alg(S,D,W,H,\Sigma)}(\F\in M_G(\TM_\Ar))=0$.
\end{corollary}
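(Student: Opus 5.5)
The plan is to reduce the corollary to Theorem~\ref{ae}. I will check its three hypotheses for $\TM_\Ar$ with the restricted rule $G$, taking $\alpha=0$ and $f\equiv 1$. Condition (2) is assumed directly. For the alphabet requirement in condition (1), $\Sigma_2=\Sigma_1\cup\{=,B\}$ and $\sigma_T$ is `$=$'. Since $\TM_\Ar$ never outputs `$=$' or $B$ (by the standing remark that outputs avoid $B$), we may harmlessly replace $\Sigma_2$ by a tape alphabet disjoint from $\sigma_T$, for example by renaming the internal `$=$'. Then $\Sigma_2\subset\Sigma_0$.

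\textbf{Step 1 (non-tameness, $(1,0)$-sensitivity).} I will exhibit infinitely many $(1,0)$-sensitive inputs, meaning pairs $x,x'$ of equal length that differ in exactly one non-final position, are both valid expressions, and have different values. A concrete family is
\[
x_n=1\oplus(+0)^n\oplus+1,
\qquad
x_n'=2\oplus(+0)^n\oplus+1 .
\]
Their values are $2$ and $3$. They differ only at position $1$, which is not the last position, so $x_n'\in\mathbb{B}(x_n,1)=\mathbb{B}(x_n,\lceil 1\cdot \len(x_n)^0\rceil)$. Both inputs are stoppable, and their lengths tend to infinity. Hence $\TM_\Ar$ is $(1,0)$ non-tame.

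\textbf{Step 2 (counting CoTs).} The rule ``evaluate the leftmost innermost calculable operation and rewrite in a fixed format'' is deterministic. So for each valid input $x$ there is exactly one admissible CoT string, and $|G(x)|\le 1$. For invalid inputs, $\TM_\Ar$ does not halt and membership in $M_G$ imposes nothing, so we may set $G(x)=\emptyset$. Thus $G$ is an $f$-number of rule with $f\equiv1$. Condition (3) then reads
\[
\lim_{n\to\infty}(\ln n)^{(D-1)/2}/\sqrt{n}=0,
\]
which holds for every fixed depth $D$.

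With all three hypotheses verified, Theorem~\ref{ae} gives $\Pr(\F\in M_G(\TM_\Ar))=0$.

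The main obstacle I expect is making the CoT rule precise enough that $G(x)$ is well defined as a single string. This includes the exact rational format $n|m$, reduction to lowest terms, the sign conventions, and the fact that the CoT begins with $\sigma_T$ as in Example~\ref{eg:mem_turing}. There is also the delimiter subtlety: the final output must follow the \emph{last} `$=$'. Theorem~\ref{ae} only needs $|G(x)|\le f(\len(x))$, though, so any fixed deterministic formatting suffices. The proof of Theorem~\ref{ae} also used only that the complete generations $y,y'$ for $x_n,x_n'$ differ, which holds because the final answers $2\neq 3$ differ. So I would state the formatting convention once and not belabor it.
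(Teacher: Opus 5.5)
Your proposal is correct, but it reaches the conclusion by a different route than the paper. You use Theorem~\ref{ae} as a black box. You verify $(1,0)$ non-tameness with the flat family $1\oplus(+0)^n\oplus+1$ versus $2\oplus(+0)^n\oplus+1$, and you note that the deterministic rewriting rule gives $|G(x)|\le 1$. Condition (3) with $f\equiv 1$ and $\alpha=0$ then reduces to $(\ln n)^{(D-1)/2}/\sqrt{n}\to 0$.

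The paper's proof mentions non-tameness but never actually invokes Theorem~\ref{ae}. Instead it reruns the output-bias argument directly from the single-token bound, Theorem~\ref{th-1}:
\begin{enumerate}
\item It takes the nested family $x_n=(A_n+1)$, $x_n'=(A_n*1)$, whose admissible CoTs are unique.
\item It passes to a subsequence on which the first-disagreement tokens $(a,b)$ are constant.
\item For each fixed $\vartheta$, every feasible bias then satisfies $c_a-c_b\in I_n(\vartheta)$ with $|I_n(\vartheta)|\to 0$.
\item Hence the whole section $M_{G,\vartheta}(\TM_\Ar)$ lies in a single affine hyperplane $\{c_a-c_b=r_\vartheta\}$, and Tonelli plus absolute continuity finish the proof.
\end{enumerate}

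What each route buys:
\begin{itemize}
\item Your route is shorter and modular. It inherits the general slab-counting machinery (Corollary~\ref{cor:k-perturb}, the Fubini bound on slabs, and the growth condition), which costs nothing here because $f\equiv 1$.
\item The paper's route is self-contained and needs only that the sensitivity tends to zero, with no rate. It also gives a sharper geometric conclusion: the feasible biases are confined to one hyperplane, not merely covered by slabs of vanishing measure. That is exactly what uniqueness of the CoT buys.
\end{itemize}

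Your remark about $\Sigma_2\subset\Sigma_0$ is a real point the paper's proof skips. With the symbol $=$ serving both as a tape symbol and as $\sigma_T$, the literal definition of memorization makes $M_G(\TM_\Ar)$ empty. Renaming the internal symbol is what makes the statement non-vacuous.

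Similarly, setting $G(x)=\emptyset$ on invalid inputs is harmless only under the reading used in Lemma~\ref{lem:MG-borel}, namely that membership in $M_G$ constrains halting inputs only. You should state that reading explicitly.
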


\begin{proof}
First, note that  $\TM_\Ar$ is non-tame since it contains infinitely many $(1,0)$-sensitive inputs.

Let $\Theta_{D,W,H,\Sigma}\cong\mathbb{R}^{P}$ be the parameter space of
the fixed Transformer architecture, and let $K$ be the size of the output
vocabulary. We first show that $\lambda_P(M_G(\TM_\Ar))=0$.

We construct a sequence of arithmetic input pairs for which the admissible
CoT is unique. Let $A_1=0$ and $A_{n+1}=(A_n+0)$, and define
$x_n=(A_n+1)$ and $x_n'=(A_n*1)$. Then $x_n$ and $x_n'$ have the same
length, differ at exactly one non-final token, and satisfy
$\TM_\Ar(x_n)=1$ and $\TM_\Ar(x_n')=0$. Moreover,
$\len(x_n)\to\infty$.

Under the generation rule $G$, each CoT step computes one currently
calculable operator. By the nested structure above, at every step, there is
exactly one such operator. Hence, both $x_n$ and $x_n'$ admit a unique
admissible CoT. Let $y_n$ and $y_n'$ denote the corresponding complete
generation sequences. Since their final outputs differ, $y_n\neq y_n'$.

As in the proof above, separate the output bias and write
$\theta=(\vartheta,c)\in\mathbb{R}^{P-K}\times\mathbb{R}^{K}$, so that
$F_{\vartheta,c}(s)=G_\vartheta(s)+c$. Fix
$\vartheta\in\mathbb{R}^{P-K}$ and let
\[
M_{G,\vartheta}(\TM_\Ar)
:=
\left\{
c\in\mathbb{R}^{K}:
(\vartheta,c)\in M_G(\TM_\Ar)
\right\}.
\]

For each $n$, let $j_n$ be the first position at which $y_n$ and $y_n'$
differ, let $z_n$ be their common prefix before $j_n$, and write
$a_n:=y_n[j_n]$ and $b_n:=y_n'[j_n]$. Since the output vocabulary is
finite, after passing to an infinite subsequence, we may assume that
$a_n=a$ and $b_n=b$ for two fixed tokens $a\neq b$.

Set $s_n:=x_n\oplus z_n$ and $s_n':=x_n'\oplus z_n$. These two states
still differ at exactly one non-final position and have the same length
$N_n$, where $N_n\geq\len(x_n)\to\infty$. By Theorem~4 applied to the
Transformer with parameters $(\vartheta,0)$,
\[
\left\|
G_\vartheta(s_n)-G_\vartheta(s_n')
\right\|_2
\leq
C_\vartheta
\frac{(\ln N_n)^{D-1}}{N_n}
\longrightarrow 0.
\]

Now let $c\in M_{G,\vartheta}(\TM_\Ar)$. Since the admissible CoTs are
unique, the Transformer must generate $y_n$ on $x_n$ and $y_n'$ on
$x_n'$. At their first disagreement, it must therefore choose $a$ at
$s_n$ and $b$ at $s_n'$. Hence
\[
c_a-c_b\in I_n(\vartheta),
\]
where
\[
I_n(\vartheta)
:=
\left[
G_\vartheta(s_n)_b-G_\vartheta(s_n)_a,\,
G_\vartheta(s_n')_b-G_\vartheta(s_n')_a
\right].
\]
As in the preceding proof, whenever $I_n(\vartheta)$ is nonempty,
\[
|I_n(\vartheta)|
\leq
2
\left\|
G_\vartheta(s_n)-G_\vartheta(s_n')
\right\|_2
\longrightarrow 0.
\]

Thus, for any $c,\widetilde c\in M_{G,\vartheta}(\TM_\Ar)$,
both $c_a-c_b$ and $\widetilde c_a-\widetilde c_b$ belong to
$I_n(\vartheta)$ for every $n$ in the chosen subsequence. Therefore
\[
\left|
(c_a-c_b)-(\widetilde c_a-\widetilde c_b)
\right|
\leq
|I_n(\vartheta)|
\longrightarrow 0,
\]
and hence
$c_a-c_b=\widetilde c_a-\widetilde c_b$.

Consequently, either $M_{G,\vartheta}(\TM_\Ar)=\varnothing$, or there
exists $r_\vartheta\in\mathbb{R}$ such that
\[
M_{G,\vartheta}(\TM_\Ar)
\subseteq
\left\{
c\in\mathbb{R}^{K}:
c_a-c_b=r_\vartheta
\right\}.
\]
The set on the right is an affine hyperplane in $\mathbb{R}^{K}$ and hence
has $K$-dimensional Lebesgue measure zero. Thus
\[
\lambda_K\!\left(M_{G,\vartheta}(\TM_\Ar)\right)=0
\qquad
\text{for every }
\vartheta\in\mathbb{R}^{P-K}.
\]

By Lemma~\ref{lem:MG-borel}, $M_G(\TM_\Ar)$ is Borel measurable. Hence,
by Tonelli's theorem,
\[
\lambda_P\!\left(M_G(\TM_\Ar)\right)
=
\int_{\mathbb{R}^{P-K}}
\lambda_K\!\left(M_{G,\vartheta}(\TM_\Ar)\right)
\,d\vartheta
=
0.
\]

Finally, since $\Alg(S,D,W,H,\Sigma)\ll\lambda_P$, absolute continuity gives
\[
\Pr_{F\sim \Alg(S,D,W,H,\Sigma)}
\left(
F\in M_G(\TM_\Ar)
\right)
=0.
\]
This completes the proof.
\end{proof}

\section{Turing Completeness of Agents}

In this section, we will prove Theorem~\ref{agent} and a version of Theorem~\ref{agent} for agents with one transformer.

\subsection{Proof of Theorem~\ref{agent}}
\label{app:proof_thm8}

To prove the theorem, we first introduce several auxiliary functions.

\begin{definition}
\textbf{Search function.}
For $x,z\in\Sigma^*$ with $|z|\geq 3$, let
\[
i
=
\operatorname*{argmin}_{r}\{r:x[r]=z[2]\},
\qquad
j
=
\operatorname*{argmin}_{r}\{r:x[r]=z[3]\}.
\]
We use the convention that $x[r]=B$ whenever $r\leq 0$ or
$r\geq |x|+1$. Define
\[
f_{\mathrm{Search}}(x,z)
=
x[i-1:j+1].
\]
If either $i$ or $j$ does not exist, set $f_{\mathrm{Search}}(x,z)=B \oplus x[1]$.

\textbf{Judge function.}
For a fixed symbol $\sigma\in\Sigma$, define
\[
f_{\mathrm{Judge},\sigma}(x)
=
\begin{cases}
1, & x[|x|]=\sigma,\\
0, & x[|x|]\neq\sigma.
\end{cases}
\]

\textbf{Delete function.}
For $x,z\in\Sigma^*$, define
\[
f_{\mathrm{Delete}}(x,z)=x/z,
\]
where $x/z$ denotes the sequence obtained from $x$ by deleting every
symbol $x[r]$ that coincides with at least one symbol occurring in $z$.

\textbf{Replace function.}
For $x,y,z\in\Sigma^*$ with $|y|\ge 3$ and $|z|\ge 3$, let
\[
i=\arg\min_r\{r:x[r]=y[2]\},
\qquad
j=\arg\min_r\{r:x[r]=y[3]\}.
\]
If either index does not exist, define
\[
f_{\mathrm{Replace}}(x,y,z)=x.
\]
Otherwise, define
\[
f_{\mathrm{Replace}}(x,y,z)
=
x[1:i-2]
\oplus
z[2:|z|-1]
\oplus
x[j+2:|x|].
\]

\end{definition}

Each of the four functions above can be computed in linear time.

\begin{definition}
\label{df9}
Let
\[
\mathcal{T}
=
(Q,\Sigma_1,\Sigma_2,\delta)
\]
be a Turing machine. For an input $x$, let $\mathcal{T}(x)[i]$ denote
the finite tape segment recorded after the $i$-th transition as follows.
Let $l_1$ and $r_1$ be the leftmost and rightmost non-blank tape positions,
and let $l_2$ and $r_2$ be the leftmost and rightmost positions ever
visited by the head up to that time. Then $\mathcal{T}(x)[i]$ is the tape
segment between
\[
\min\{l_1,l_2\}
\qquad\text{and}\qquad
\max\{r_1,r_2\}.
\]
In particular,
\[
\mathcal{T}(x)[0]=x.
\]
\end{definition}

\begin{definition}
For $z\in\mathbb{Z}_+$, define
\[
B_z:
\mathbb{Z}_+
\longrightarrow
\{-1,0,1\}^{\lceil\ln_2 z\rceil}
\]
as follows. For $x\in\{1,\ldots,z-1\}$, $B_z(x)$ is the binary
representation of $x$ with length $\lceil\ln_2 z\rceil$. Moreover,
\[
B_z(z)
=
-\mathbf{1}_{\lceil\ln_2 z\rceil},
\]
and for all remaining values of $x$ we set
\[
B_z(x)
=
\mathbf{0}_{\lceil\ln_2 z\rceil}.
\]
\end{definition}

\begin{definition}
Assume that
\[
\Sigma_2
\subset
\Sigma\setminus\{\sigma_b,\sigma_e,0,1,-1\}.
\]
Let $\mathcal{T}(x)[i]$ be as in Definition~\ref{df9}. Suppose that,
after the $i$-th transition, the machine is in state $q_k$ and its head
is at position $j$ of $\mathcal{T}(x)[i]$. Then define
\begin{align}
\mathcal{T}'(x)[i]
&=
\mathcal{T}(x)[i][1:j-1]
\oplus
\sigma_b
\oplus
B_{|Q|}(k)
\oplus
\sigma_e
\nonumber\\
&\quad\oplus
\mathcal{T}(x)[i][j:|\mathcal{T}(x)[i]|].
\label{eq:encoded-configuration}
\end{align}
If $j=0$, we use the convention
\[
\mathcal{T}(x)[i][-1]=B.
\]
Without loss of generality, let
\[
\sigma_{|\Sigma_2|}=B.
\]
Thus $\mathcal{T}'(x)[i]$ explicitly records both the head position and
the current state.
\end{definition}

We also use the following notation.

\begin{definition}
For indices $i_1,\ldots,i_r$, let
\[
I_{i_1,\ldots,i_r}
\]
denote the vector whose coordinates $i_1,\ldots,i_r$ are equal to $1$
and whose remaining coordinates are $0$.

Similarly,
\[
I_{(i_1,j_1),\ldots,(i_r,j_r)}
\]
denotes the matrix whose entries at
$(i_1,j_1),\ldots,(i_r,j_r)$ are equal to $1$ and whose remaining
entries are $0$.
\end{definition}

We now establish two auxiliary lemmas.

\begin{lemma}
\label{lm1}
Let $\Sigma=\{\sigma_i\}_{i=1}^{T}$. For any
$k,j\in\{1,\ldots,T\}$ and any $q\geq6$, there exists a Transformer
$\mathcal{F}$ of width $3$, depth $2$, and one attention head such that
the following holds for every $x\in\Sigma^*$.

If $1\leq\#\{r:x[r]=\sigma_k\}\leq\lfloor e^q\rfloor$, then
$\mathcal{F}(x)=I_j$ . If $\sigma_k$ does not
occur in $x$, then $\mathcal{F}(x)=0$.
\end{lemma}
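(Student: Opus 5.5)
The plan is to build $\F$ explicitly. Its single attention head acts as an indicator-weighted average that puts all its mass on the occurrences of $\sigma_k$. The feed-forward part then rounds the result to an exact $0/1$ value, and the output layer sends that value to $I_j$. Throughout, the finite-precision rules of Definition~\ref{def:finite-precision} are what make "no occurrence" and "some occurrence" produce exactly separated values.

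\textbf{Embedding and positions.} Use width $3$. Embed $\sigma_k$ as $(1,0,1)$ and every other symbol as $(0,0,1)$, so coordinate $1$ flags $\sigma_k$ and coordinate $3$ is a constant bias channel. Set all RoPE frequencies $C_i=0$. Then $A_r=I$ for every $r$, so positions play no role. The phase recursion of Definition~\ref{def:finite-precision} is then trivially constant, and no rounding issue arises.

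\textbf{Layer 1 attention.} Choose $Q,K$ so that the logit is $a_{j,w}=\beta\cdot y_w[1]$. Concretely, $Q$ maps coordinate $3$ to $\beta$, and $K$ reads coordinate $1$. Take $\beta=3q$, which is representable in $q$-precision. Choose $V$ to copy coordinate $1$ into coordinate $2$, and let the layer-1 FNN be zero.

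Case 1: $\sigma_k$ occurs $c\ge 1$ times. The maximal logit is $\beta$. Occurrences of $\sigma_k$ get $E=[e^0]_q=1$. All other positions get $[e^{-3q}]_q=0$, since $e^{-3q}<10^{-q}$. Hence $Z=c$, and $Z$ is finite because $c\le\lfloor e^q\rfloor<10^q$. Each occurrence then receives weight $[1/c]_q$. The coordinate-2 output is therefore the $q$-rounded sum of $c$ copies of $[1/c]_q$, which lies in $[\,1-c\,10^{-q},\,1\,]\subset[1-(e/10)^q,1]$. Since $q\ge 6$, this is at least $1/2$.

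Case 2: $\sigma_k$ is absent. All logits are $0$, so the weights are uniform. The value vectors are all $0$, so the output is exactly $0$.

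\textbf{Layer 2 and output.} Use the layer-2 FNN (with attention set to zero) to compute the threshold $t\mapsto \Relu(2t)-\Relu(2t-1)$ on coordinate $2$. This equals exactly $1$ for $t\ge 1/2$ and exactly $0$ for $t=0$, and all the constants involved are exactly representable. Write the result into the free coordinate $3$ after subtracting the constant $1$ there: the bias channel holds $1$, so adding $\text{thr}-1$ leaves $\text{thr}$ in coordinate $3$. Alternatively, first clear coordinates $1$ and $2$ with a ReLU pair, so that the last position's state becomes $(0,0,\text{thr})$ up to known constants. Finally, take $W_0$ to map coordinate $3$ to $e_j$ and use $c=0$. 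This gives $\F(x)=I_j$ or $0$ as required.

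\textbf{Main obstacle.} The delicate step is the bookkeeping in layer 1 under the $q$-precision softmax. The non-$\sigma_k$ exponentials must round to exactly $0$, which requires choosing $\beta$ with $e^{-\beta}<10^{-q}$. The normaliser must not overflow to $\mathrm{Inf}$, which is where the hypothesis $c\le\lfloor e^q\rfloor$ enters. And the accumulated rounding error $c\,10^{-q}$ must stay below $1/2$. I would verify these three inequalities first. If the paper's rounding convention is truncation rather than rounding to nearest, I would recheck them, adjusting $\beta$ or the threshold constant (for instance to $1/4$) if needed.
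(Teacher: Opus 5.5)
Your construction is correct and follows essentially the same route as the paper's proof. It uses one head whose logit gap ($3q$ for you, $q^2$ in the paper) makes every non-$\sigma_k$ exponential round to exactly $0$; the bound $c\le\lfloor e^q\rfloor$ then keeps the normaliser finite and the accumulated error $c\,10^{-q}\le(e/10)^q$ small. A second-layer ReLU clip (yours at $1/2$, the paper's $2\Relu(t-\tfrac14)-2\Relu(t-\tfrac34)$) turns the averaged value into an exact $0$ or $1$.

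The upper bound $c[1/c]_q\le 1$ holds only under truncation; the paper rounds to nearest and gets $[\tfrac34,\tfrac54]$. This is harmless because your clip is flat on $[\tfrac12,\infty)$. Your explicit step of overwriting the constant channel, so that the linear output layer reads exactly $0$ or $1$, is more careful than the paper's brief description of the output map.
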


\begin{proof}
We construct the Transformer explicitly. For the embedding layer, embed
each symbol $\sigma_i$ as
\[
v_i=
\begin{cases}
(0,1,q), & i=k,\\
(0,0,q), & i\neq k.
\end{cases}
\]

In the first hidden layer, choose $A_{m-n}=I$,
$QK=qI_{(3,2)}$, $V=I_{(2,1)}$, and $\operatorname{FNN}=0$. Then
\[
v_iQKv_r^{\top}
=
\begin{cases}
q^2, & \sigma_r=\sigma_k,\\
0, & \sigma_r\neq\sigma_k.
\end{cases}
\]

Suppose that $\sigma_k$ appears exactly $m$ times, where
$1\leq m\leq\lfloor e^q\rfloor$. After subtracting the maximum attention
score, the exponentiated score is $1$ at positions containing $\sigma_k$
and $e^{-q^2}$ elsewhere. Since $q\geq6$, $[e^{-q^2}]_q=0$. Hence only
the $m$ occurrences of $\sigma_k$ receive nonzero attention weight.

Let $\beta_x:=m[1/m]_q$. Since
$\left|[1/m]_q-1/m\right|\leq\frac12 10^{-q}$, we obtain
\[
|\beta_x-1|
\leq
\frac{m}{2}10^{-q}
\leq
\frac12\left(\frac{e}{10}\right)^q
\leq
\frac14.
\]
Therefore $\beta_x\in[\frac34,\frac54]$.

Hence, the first coordinate of the hidden state at the final position
satisfies
\[
h_1(x)[1]
=
\begin{cases}
\beta_x, & \sigma_k\text{ occurs in }x,\\
0, & \text{otherwise}.
\end{cases}
\]

Finally, choose the output map
$\phi(t)=2\operatorname{ReLU}(t-\tfrac14)
-2\operatorname{ReLU}(t-\tfrac34)$.
Then $\phi(0)=0$ and $\phi(\beta)=1$ for
$\beta\in[\frac34,\frac54]$. Choosing all output coordinates except $j$
to be zero therefore gives $\mathcal{F}(x)[j]=1$. Hence
\[
\mathcal{F}(x)
=
\begin{cases}
I_j,
& 1\leq\#\{r:x[r]=\sigma_k\}\leq\lfloor e^q\rfloor,\\
0,
& \sigma_k\text{ does not occur}.
\end{cases}
\]
This proves the lemma.
\end{proof}

Next, an important lemma will be presented for mapping a symbol back to its index within the alphabet.

\begin{lemma}
\label{lm2}
Let $L\ge2$, let $|\Sigma|$ be the vocabulary size, and set
$P=10^{\lceil\ln_{10}(16L)\rceil}$, $\omega=P^{-1}$, and
$A=64(q+1)P^2$. Suppose $q\ge6$ and
\begin{equation}
10^q\ge 10^4(q+1)P^2.
\label{eq:copy-precision}
\end{equation}
There is one attention layer of width $O(L|\Sigma|)$ with $L$ heads such
that, for every $1\le n=|x|\le L$, its final-position hidden state contains
blocks $V_0,\ldots,V_{L-1}$ satisfying
\begin{equation}
V_h=e_{x[n-\min\{h,n-1\}]}.
\label{eq:copy-exact}
\end{equation}
In particular, $V_h=e_{x[n-h]}$ for every $0\le h\le n-1$. For inputs of
arbitrary length, the same layer also satisfies
\begin{equation}
(V_h)_i\ge0,
\qquad
\norm{V_h}_1\le2,
\qquad
0\le h\le n-1.
\label{eq:copy-safe}
\end{equation}
\end{lemma}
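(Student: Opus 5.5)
The idea is to build one attention head per offset $h\in\{0,\ldots,L-1\}$. Each head uses the RoPE relative-position matrices $A_{n-m}$ to put a sharp peak on the key at relative distance $h$, and its value matrix copies the one-hot token embedding into block $V_h$.

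Each symbol $\sigma_i$ is embedded as the concatenation of the one-hot vector $e_i\in\R^{|\Sigma|}$, a constant coordinate $1$, and zero slots for the $L$ output blocks. The width is therefore $O(L|\Sigma|)$.

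For the position mechanism, we take a single RoPE frequency $\omega=P^{-1}$. The constant coordinates of query and key are routed through one rotation pair, so the logit between the final position $n$ and position $m$ becomes
\[
a_{n,m}^{(h)}=A\cos\big((n-m-h)\omega\big).
\]
To realize the shift by $h$, the query is rotated by the angle $h\omega$: we take $Q_h$ to map the constant coordinate to $(\cos h\omega,\sin h\omega)$ and $K_h$ to read $(1,0)$. Since $|n-m-h|\le 2L$ and $2L\omega\le 1/8$, the cosine is a strictly decreasing function of $|n-m-h|$ on this range. Its maximum value $1$ is attained exactly at $m=n-h$.

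When $h\ge n$ the position $n-h$ does not exist. The maximum over the admissible positions $m\in[1,n]$ is then attained at $m=1$, which is exactly the target $x[n-\min\{h,n-1\}]$. The value matrix $V_h$ maps the one-hot block to block $h$.

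For the exactness claim, the gap between the best and second-best logits is at least
\[
A\big(1-\cos\omega\big)\approx A\omega^2/2=32(q+1).
\]
After subtracting the maximum, every other exponent is therefore at most $e^{-32(q+1)}$, which rounds to $0$ at $q$ decimals. In $q$-precision the softmax is then an exact one-hot, so each $V_h$ equals the selected token's one-hot embedding exactly. This gives the first claimed identity.

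The main obstacle is the rounding of the logit itself. The logit has magnitude up to $A=64(q+1)P^2$, and we need $[\cdot]_q$ not to overflow to Inf. Condition \eqref{eq:copy-precision} is there to guarantee exactly this: $A\le 10^q$ with room to spare. We must also check that the rounding errors in $\cos_q$ and in the products, each $O(10^{-q})$ times $A$, are much smaller than the gap $32(q+1)$. This is again controlled by \eqref{eq:copy-precision}.

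For arbitrary length $n>L$, uniqueness of the maximum can fail, because the cosine is periodic and a distant position can tie or nearly tie with the target. Exactness is therefore not claimed there. The softmax weights are nonnegative, and the rounded weights sum to at most $1+n\cdot 10^{-q}$ before the normalizer saturates. In that saturated case the Inf convention gives zero, as in Part One of Theorem \ref{zzdd1}. Each $V_h$ is a rounded convex-like combination of one-hot vectors, so its entries are nonnegative and $\|V_h\|_1\le 2$. I would verify this last bound carefully using the precision rules \eqref{eq:q-precision-softmax-normalizer}–\eqref{eq:q-precision-softmax}.
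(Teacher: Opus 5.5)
Your exact-copy argument for $n\le L$ is essentially the paper's proof. It uses the same one-hot-plus-constant embedding with a private output block per head and a single RoPE plane with $\omega=P^{-1}$. The query is rotated by $h\omega$, so the score is $A\cos((u-h)/P)$ up to $O(A\,10^{-q})$ rounding, with its peak at $u_*=\min\{h,n-1\}$. The score gap is of order $A/P^2\sim q+1$, which by the precision assumption dominates the rounding error and forces an exactly one-hot $q$-precision softmax. One step is left implicit: the boundary case $h\ge n$. There the gap is $A(\cos a-\cos(a+\omega))$ with $a\ge\omega$, and you should note that this is still at least $A(1-\cos\omega)$ because the decrements of $\cos$ increase on $[0,\pi/2]$. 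The paper checks this case separately via $\int_a^{a+1/P}\sin t\,dt$.

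The genuine gap is the bound $\norm{V_h}_1\le 2$ for inputs of arbitrary length. Your estimate $\sum_j[E_j/Z]_q\le 1+n\cdot10^{-q}$ exceeds $2$ once $n>10^q$, and your case split tacitly assumes the normalizer has saturated by then. It need not. $Z=[\sum_jE_j]_q$ sums the rounded exponentials, most of which can be exactly $0$, so $Z$ can stay far below $10^q$ while $1+n10^{-q}\gg 2$. In this very construction, only positions whose reduced phase lies within $O(1/P)$ of the peak phase survive the rounding. So $Z$ grows only like a small multiple of $n/P$ and stays finite well past $n=10^q$.

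The missing idea is a relative, per-term rounding bound:
\begin{itemize}
\item When $Z$ is finite it equals $\sum_jE_j$ exactly, since a sum of $q$-decimal numbers needs no rounding. Hence $p_j:=E_j/Z$ sum to exactly $1$.
\item Zero weights incur no rounding error.
\item A weight that survives rounding has $p_j\ge\eta/2$ with $\eta=10^{-q}$, so $[p_j]_q\le p_j+\eta/2\le 2p_j$.
\end{itemize}
Summing gives $\sum_j[p_j]_q\le 2$. Since the values are one-hot and the resulting products and sums of $q$-decimal numbers are exact, you get $(V_h)_i\ge 0$ and $\norm{V_h}_1\le 2$. Together with your $Z=\mathrm{Inf}$ case, where all weights are $0$, this is exactly how the paper concludes.
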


\begin{proof}
For the embedding layer, embed each token $a\in\Sigma$ as
$v_a=(1,e_a,0,\ldots,0)$, where $e_a\in\mathbb{R}^K$ is its one-hot
representation. For each attention head, reserve a disjoint block of $K$
coordinates in the hidden state, initialized to zero.

For the positional encoding, choose a two-dimensional RoPE plane with
frequency $\omega=P^{-1}$ and set all other required frequencies to zero.
Define $c_u=\cos_q(u\omega)$ and $s_u=\sin_q(u\omega)$ for $u\ge0$.

The attention layer contains $L$ heads. For the $h$-th head,
$0\le h\le L-1$, choose the query corresponding to the constant coordinate
as $(Ac_h,-As_h)$ and the key as $(1,0)$. Then, under the RoPE convention
used in this paper, the attention score for a token at relative distance
$u$ from the final position is
\[
\lambda_h(u)
=
\bigl[
[Ac_hc_u]_q+[As_hs_u]_q
\bigr]_q.
\]
The value matrix of the $h$-th head copies the one-hot representation of
the selected token into the $h$-th reserved block.

First consider an input $x=(x[1],\ldots,x[n])$ with $n\le L$. Since
$L/P\le1/16$, there is no phase wrap for $0\le h,u<L$. Let
$\eta=10^{-q}$. By the $q$-precision rounding of the trigonometric values
and the two multiplications,
\[
\left|
\lambda_h(u)
-
A\cos\left(\frac{u-h}{P}\right)
\right|
\le4A\eta.
\]

For the $h$-th head, the ideal score
$A\cos((u-h)/P)$ is maximized at $u_*=\min\{h,n-1\}$. We show that this
maximum is sufficiently separated from every other visible position.

If $u_*=h$, then $|(u-h)/P|\ge1/P$ for every $u\neq h$. Since
$1-\cos t\ge t^2/4$ for $|t|\le1$, the ideal score gap is at least
$A/(4P^2)$.

If $u_*=n-1<h$, the maximum is attained at the boundary of the visible
positions. Writing $a=(h-(n-1))/P$ and using that all relevant phases lie
in $[0,1]$,
\[
\cos a-\cos(a+P^{-1})
=
\int_a^{a+P^{-1}}\sin t\,dt
\ge
\frac{1}{4P^2}.
\]
Hence the ideal score gap is again at least $A/(4P^2)$.

Taking the rounding errors into account, the computed score gap between the
maximal position and any competing position is at least
\[
\frac{A}{4P^2}-8A10^{-q}.
\]
By the definition $A=64(q+1)P^2$ and assumption
\eqref{eq:copy-precision},
\[
\begin{aligned}
\frac{A}{4P^2}-8A10^{-q}
&\ge
16(q+1)-\frac{512}{10^4}\\
&\ge
15(q+1)
>
q\ln 10+\ln2.
\end{aligned}
\]
Therefore, after subtracting the maximal attention score, every nonmaximal
exponential satisfies $[e^{-\Delta}]_q=0$, where $\Delta$ is the
corresponding score gap.

The exponential corresponding to the maximal score is exactly one, so the
softmax denominator is also one. Thus the selected position receives
attention weight exactly one and all other positions receive weight zero.
Hence the $h$-th head exactly copies the token at relative position
$u_*=\min\{h,n-1\}$. Since the corresponding destination block is zero in
the embedding layer, the residual connection does not affect it. Therefore
\[
V_h=e_{x[n-\min\{h,n-1\}]}.
\]
In particular, $V_h=e_{x[n-h]}$ for every $0\le h\le n-1$, proving
\eqref{eq:copy-exact}.

It remains to consider inputs whose lengths may exceed $L$. Exact copying is
no longer required, but the attention computation must remain well defined.

For an arbitrary relative position $u$, instead of directly computing the
possibly unbounded product $u\omega$, set $r_0=0$ and recursively define
\[
r_{u+1}
=
\begin{cases}
r_u+\omega,
& r_u+\omega<2[\pi]_q,\\
r_u+\omega-2[\pi]_q,
& r_u+\omega\ge2[\pi]_q.
\end{cases}
\]
Since both $\omega$ and $2[\pi]_q$ are $q$-precision numbers, every $r_u$
remains a $q$-precision number and satisfies
$0\le r_u<2[\pi]_q<7$. This gives the same reduced phase as in
Definition~13 without forming an unbounded intermediate value. Negative
relative positions are handled analogously by modular negation. Hence all
trigonometric values and attention scores remain in the permitted numerical
range.

Finally, consider an arbitrary attention row. Let $E_i$ denote its rounded
exponentials and set $Z=[\sum_iE_i]_q$. At least one $E_i$ equals one. If
$Z=\mathrm{Inf}$, then by the finite-precision division convention all
attention weights are zero, so the corresponding output block is zero.

Otherwise, $Z$ is finite. Let $p_i=E_i/Z$, so that $\sum_i p_i=1$. If
$[p_i]_q>0$, then $p_i\ge\eta/2$, and therefore
\[
[p_i]_q
\le
p_i+\frac{\eta}{2}
\le
2p_i.
\]
Consequently, $\sum_i[p_i]_q\le2$. Since the value vectors are one-hot
vectors, all coordinates of the head output are nonnegative and its
$\ell_1$ norm is at most $2$. Thus \eqref{eq:copy-safe} holds for inputs
of arbitrary length.

The construction uses $L$ attention heads and $L$ disjoint blocks of size
$K$, so its width is $O(LK)$. Moreover, neither the precision condition nor
the bound in \eqref{eq:copy-safe} depends on $K$.
\end{proof}

We now prove Theorem~\ref{agent}.

\begin{proof}
Let $\Sigma=\Sigma_2\cup\{\sigma_b,\sigma_e,0,1,-1,\sigma_T\}$. Without
loss of generality, let $\sigma_{|\Sigma_2|}=B$ and $T=|\Sigma|$.

We construct the agent module by module.

\medskip
\noindent\textbf{Step 1: the decision Transformer $\mathcal{F}_d$.}

The role of $\mathcal{F}_d$ is only to decide whether the agent should
continue the simulation or terminate. Its required behavior is
$\sigma_T\oplus\sigma_b\oplus\sigma_e$ when $-1$ does not occur in the
input, and
$\sigma_T\oplus\sigma_b\oplus\sigma_e\oplus B\oplus(-1)$ when $-1$
does occur.

Let $\mathcal{F}_1$ be the Transformer from Lemma~\ref{lm1} that detects
$-1$ and outputs $\mathcal{F}_1(x)=I_{|\Sigma_2|}$ when $-1$ occurs,
and $0$ otherwise. Let $\mathcal{F}_2$ be the Transformer from
Lemma~\ref{lm1} that detects $\sigma_T$ and outputs
$\mathcal{F}_2(x)=I_{|\Sigma_2|+5}$ when $\sigma_T$ occurs, and $0$
otherwise.

Finally, construct a Transformer $\mathcal{F}_3$ with embedding dimension
$5$. Use $v_i=(1,0,0,0,0)$ for $\sigma_i\in\Sigma_2$, and
\[
v_T=(0,1,0,0,0),\qquad
v_b=(0,0,1,0,0),\qquad
v_e=(0,0,0,1,0),\qquad
v_{-1}=(0,0,0,0,1).
\]
In its hidden layer, set $Q=K=V=0$ and $\operatorname{FNN}=0$. Let
$h=x_{\mathrm{last}}$ denote the final-token embedding. Choose the output
map so that
\[
\begin{aligned}
\mathcal{F}_3(x)[|\Sigma_2|+6]&=3h[1],&
\mathcal{F}_3(x)[|\Sigma_2|+1]&=10h[2],\\
\mathcal{F}_3(x)[|\Sigma_2|+2]&=10h[3],&
\mathcal{F}_3(x)[0]&=h[4]+10h[5],\\
\mathcal{F}_3(x)[|\Sigma_2|+5]&=-10h[4],
\end{aligned}
\]
with all remaining coordinates equal to zero.

Set
\[
\mathcal{F}_d=\frac54\mathcal{F}_1+4\mathcal{F}_2+\mathcal{F}_3.
\]
By direct inspection of the output logits, $\mathcal{F}_d$ has the desired
decision behavior.

Define the stopping rule by
$v_{\mathrm{end}}(x)=f_{\mathrm{Judge},-1}(x)$. Thus
$v_{\mathrm{end}}(x)=1$ exactly when the last symbol is $-1$.

\medskip
\noindent\textbf{Step 2: the data-selection function $v_d$.}

Let $v_d=f_{\mathrm{Search}}$. For an encoded memory state,
\[
v_d\bigl(\mathcal{T}'(x)[i],\mathcal{F}_d(\mathcal{T}'(x)[i])\bigr)
=
v_d\bigl(\mathcal{T}'(x)[i],
\sigma_T\oplus\sigma_b\oplus\sigma_e\bigr)
=
\ell_i(x),
\]
where
\[
\ell_i(x)
:=
a_i\oplus\sigma_b\oplus B_{|Q|}(k)\oplus\sigma_e\oplus b_i,
\]
with $a_i=\mathcal{T}(x)[i][j-1]$ and
$b_i=\mathcal{T}(x)[i][j]$. Thus $\ell_i(x)$ is the local segment
surrounding the current head position, where $b_i$ is the symbol currently
scanned by the head and $a_i$ is its left neighbor.

For the initial raw memory $M=x$, the symbols $\sigma_b$ and $\sigma_e$
have not yet been inserted. By the no-match branch of
$f_{\mathrm{Search}}$,
\[
v_d\bigl(x,\sigma_T\oplus\sigma_b\oplus\sigma_e\bigr)
=
B\oplus x[1].
\]
The execution Transformer interprets this two-symbol input as
$a=B$, $k=1$, and $b=x[1]$, corresponding to the initial state $q_1$ of
the Turing machine.

\medskip
\noindent\textbf{Step 3: the execution Transformer $\mathcal{F}_c$.}

We construct $\mathcal{F}_c$ so that it memorizes one transition of the
Turing machine. Let
$\delta(q_k,b_i)=(q_{k'},c_i,m_i)$ with $m_i\in\{-1,0,1\}$, where
$q_{k'}$ is the next state, $c_i$ is the symbol written at the current
head position, and $m_i$ is the head-movement direction. Regard
$\tau_i(x):=(k',c_i,m_i)$ as the transition information computed in
Part~5.

Part~6 converts this transition information, together with the local
symbol $a_i$, into the local memory-update block
\[
R_{m_i}
=
\begin{cases}
S_{k'}\oplus a_i\oplus c_i, & m_i=-1,\\
a_i\oplus S_{k'}\oplus c_i, & m_i=0,\\
a_i\oplus c_i\oplus S_{k'}, & m_i=1.
\end{cases}
\]
Define $u_i(x):=R_{m_i}(x)\oplus\sigma_{m_i}$. The required
autoregressive behavior of the execution Transformer is therefore
\[
\widehat{\mathcal F}_c(\ell_i(x))
=
\sigma_T\oplus R_{m_i}(x)\oplus\sigma_{m_i}.
\]

If $\sigma_T$ has not yet appeared, Lemma~\ref{lm1} allows
$\mathcal{F}_c$ to output $\sigma_T$ first. It therefore remains to
construct the one-step transition map after the marker has appeared.

Let $L=\lceil\log_2|Q|\rceil$ and $d=L+4$. By Lemma~10, the relevant
tokens in the bounded execution context can be recovered into fixed
coordinate blocks of the last hidden state. The remaining computation is
implemented by a constant-depth ReLU network, divided into the following
seven parts.

\medskip
\noindent\textbf{Part 0: indicator gadgets.}

For integer-valued inputs, define
\[
h_a(u)
=
\operatorname{ReLU}(u-a+1)
-2\operatorname{ReLU}(u-a)
+\operatorname{ReLU}(u-a-1).
\]
Then $h_a(u)=1$ if $u=a$ and $0$ otherwise. We also use
$\operatorname{clip}(u)=\operatorname{ReLU}(u)-\operatorname{ReLU}(u-1)$
and, for $v\in[0,1]$ and $g\in\{0,1\}$,
$\operatorname{Gate}(v,g)=\operatorname{ReLU}(v+g-1)$.
These gadgets implement the case distinctions used below.

\medskip
\noindent\textbf{Part 1: distinguish the initial configuration.}

Let $V_j$ denote the copied token at relative offset $j$ given by
Lemma~10. Define $E_t=\operatorname{clip}((V_t)_{\sigma_T})$ and
\[
U_j
=
\operatorname{clip}\left(
\sum_{t=0}^{d+1}
\operatorname{Gate}
\bigl(\operatorname{clip}(V_{t+j}),E_t\bigr)
\right).
\]
Thus $U_j$ is the $j$-th token preceding the last $\sigma_T$. Set
$\nu=(U_2)_{\sigma_e}$. On every valid execution input, $\nu=0$ exactly
in the initial case and $\nu=1$ once the state block has been explicitly
encoded.

\medskip
\noindent\textbf{Part 2: isolate the current local configuration.}

In the normal case, $b=U_1$ and $a=U_{L+4}$, while the $j$-th bit of the
current state code is obtained from $U_{L+3-j}$ for $1\le j\le L$.
When $\nu=0$, the execution input represents the implicit initial
configuration, so we instead use $a=B$ and $k=1$, while still taking
$b=U_1$.

\medskip
\noindent\textbf{Part 3: compute the generation phase.}

Retain the indicators $E_0,E_1,\ldots,E_{d+1}$. On every valid
autoregressive prefix, exactly one $E_t$ equals $1$. It indicates that
exactly $t$ tokens have already been generated after $\sigma_T$ and will
be used in Part~6 to select the next output token.

\medskip
\noindent\textbf{Part 4: decode the current state.}

Let $u_1,\ldots,u_L$ be the recovered state bits and define
$k_{\mathrm{raw}}=\sum_{j=1}^L2^{L-j}u_j$. Using $\nu$, select
\[
k_{\mathrm{current}}
=
\begin{cases}
1, & \nu=0,\\
k_{\mathrm{raw}}, & \nu=1.
\end{cases}
\]
This selection is implemented by the gating gadgets from Part~0.

\medskip
\noindent\textbf{Part 5: apply the transition function.}

For each $(k,s)\in Q\times\Sigma_2$, define
\[
T_{k,s}
=
\operatorname{ReLU}
\bigl(
h_k(k_{\mathrm{current}})+b_s-1
\bigr),
\]
where $b_s$ is the $s$-coordinate of the one-hot representation of $b$.
Exactly one such indicator is active. If
$\delta(q_k,s)=(q_{k'},c,m)$, the corresponding transition-table entries
directly produce the encoded next state $B_{|Q|}(k')$, the written symbol
$c$, and the direction $m$. Since there are
$O(|Q||\Sigma_2|)$ state-symbol pairs, this part has width
$O(|Q||\Sigma_2|)$ and constant depth.

\medskip
\noindent\textbf{Part 6: reconstruct and serialize the encoded configuration.}

Let $S_{k'}=\sigma_b\oplus B_{|Q|}(k')\oplus\sigma_e$. As in the
original construction, define
\[
R_m
=
\begin{cases}
S_{k'}\oplus a\oplus c, & m=-1,\\
a\oplus S_{k'}\oplus c, & m=0,\\
a\oplus c\oplus S_{k'}, & m=1.
\end{cases}
\]
The transition indicators select the appropriate $R_m$.

Finally, the phase indicators $E_t$ serialize this block: when $E_t=1$,
the network outputs the $(t+1)$-st token of $R_m$; after all tokens of
$R_m$ have been generated, it outputs the direction token $\tau_m$,
followed by $\sigma_0$. If $\sigma_T$ has not yet appeared, the marker
detector instead outputs $\sigma_T$ first. Therefore
$\widehat F_c(\ell)=\sigma_T\oplus R_m\oplus\tau_m$.

\medskip
\noindent\textbf{Step 4: memory replacement and final output.}

Define
\[
v_r(x,z)
:=
\begin{cases}
B\oplus
f_{\mathrm{Replace}}
\bigl(x,\sigma_T\oplus\sigma_b\oplus\sigma_e,z\bigr)
\oplus B,
& \text{if both $\sigma_b$ and $\sigma_e$ occur in $x$},\\[1mm]
B\oplus z[2:|z|-1]\oplus x[2:|x|]\oplus B,
& \text{otherwise}.
\end{cases}
\]
The second case occurs only at the first execution step, when the memory
still consists of the unencoded input. Since
$\widehat F_c(\ell_0(x))
=\sigma_T\oplus R_{m_0}(x)\oplus\tau_{m_0}$, we have
\[
v_r\bigl(x,\widehat F_c(\ell_0(x))\bigr)
=
B\oplus R_{m_0}(x)\oplus x[2:|x|]\oplus B,
\]
which is an encoded representation of the configuration after the first
transition.

For every subsequent step, both $\sigma_b$ and $\sigma_e$ occur in the
memory. By Step~2,
$f_{\mathrm{Search}}(T'(x)[i],
\sigma_T\oplus\sigma_b\oplus\sigma_e)$ locates the current local
configuration
\[
\ell_i(x)
=
a_i\oplus\sigma_b\oplus B_{|Q|}(k)
\oplus\sigma_e\oplus b_i.
\]
By the construction in Part~6,
$\widehat F_c(\ell_i(x))
=\sigma_T\oplus R_{m_i}(x)\oplus\tau_{m_i}$. Hence, by the definition
of $f_{\mathrm{Replace}}$, the local block $\ell_i(x)$ is replaced by
$\widehat F_c(\ell_i(x))[2:|\widehat F_c(\ell_i(x))|-1]
=R_{m_i}(x)$, while all symbols outside this block are left unchanged.

Therefore, after each execution cycle, the memory is a blank-padded
encoding of the next Turing-machine configuration, namely,
$B^{a_{i+1}}\oplus T'(x)[i+1]\oplus B^{b_{i+1}}$ for some
$a_{i+1},b_{i+1}\ge1$. These additional boundary blanks do not change the
represented tape configuration and ensure that the head can always move
one position to the left or to the right.

Finally, let

\[
v_{\mathrm{out}}(x)
=
f_{\mathrm{Delete}}
\bigl(x,\sigma_b\oplus\sigma_e\oplus(-1)\oplus B\bigr),
\]

consisting of $B$'s. 
After these auxiliary symbols are removed, the
remaining sequence is exactly the output of the memorized Turing machine.

It remains to verify the computational-equivalence statement. Each
execution cycle of the agent memorizes exactly one transition of $\TM$.
Hence, on every stoppable input $x$, the agent executes step~(b2) at most
$t_{\TM}(x)+1$ times.

Throughout the computation, the memory length is at most
$O(t_{\TM}(x)+\len(x)+\log_2|Q|)$. When $|Q|$ and $|\Sigma_2|$ are
treated as constants, the depth, width, number of attention heads,
numerical precision, and per-invocation output length of $\F_d$ and
$\F_c$ are all constants independent of $x$. Therefore, each invocation
of the fixed Transformers can be carried out in time polynomial in the
current memory length. Since $v_d$, $v_r$, $v_{\rm end}$, and $v_{\out}$
are computable in linear time, and there are at most $t_{\TM}(x)+1$
execution cycles, the total running time of $A_q$ is polynomial in
$t_{\TM}(x)+\len(x)$.

Thus $A_q$ memorizes $\TM$ with at most polynomial overhead. Conversely,
the entire computation of $A_q$ consists only of finite-precision
Transformer evaluations, linear-time tool calls, and updates of finite
strings, and hence can itself be memorized by a Turing machine with at
most polynomial overhead. Together with $A_q(x)=\TM(x)$ for every
stoppable input $x$, this shows that when $|Q|$ and $|\Sigma_2|$ are
constants, $A_q$ is computationally equivalent to $\TM$.
\end{proof}

\subsection{Agent with One Transformer}
\label{thcap}

In this section, we show that the two transformers $\F_c,\F_d$ in the agent can be the same one.  We can consolidate the functions of $\F_c$ and $\F_d$ into a single transformer, $\F_s$, and include a suitable prompt so that $\F_s$ can distinguish whether it is performing decision-making or execution.

Write $A^s=(\F_s,\mathbf{M},v_{\rm{end}},v_d,v_r,v_{\out})$ to indicate that the agent uses the same transformer $\F_s$ in both the decision module and the execution module. Then we have that:

\begin{proposition}
    \label{agent1}
    If $|\Sigma|\ge|\Sigma_2|+7$ holds, there exists linear time computable functions $v_d$, $v_{\out}$, $v_{\rm{end}}$, $v_r$ such that for any Turing machine $\TM=(Q,\Sigma_1,\Sigma_2,\delta)$, there is an agent $A^s_q$ such that $A^s_q$ can memorize the Turing machine $\TM$, and $A^s_q$ satisfies that:

(1) $\F_s$ has constant depth $O(1)$, width $O(|Q|\cdot|\Sigma_2|)$, number of heads $O(\ln|Q|)$, and numerical precision $q=O(\ln|Q|)$.

(2) The agent executes step (b2) at most $t_\TM(x)+1$ times, memory module requires at most $O(t_{\TM}+\len(x)+\len(|Q|))$ space.
\end{proposition}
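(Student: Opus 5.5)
The plan is to reduce Proposition~\ref{agent1} to Theorem~\ref{agent}. I take the decision transformer $\F_d$ and the execution transformer $\F_c$ built in the proof of Theorem~\ref{agent} and merge them into one transformer $\F_s$. The extra vocabulary symbol allowed by $|\Sigma|\ge|\Sigma_2|+7$ becomes a \emph{mode prompt} $\sigma_p$. The tool $v_d$ is modified to return $\sigma_p\oplus f_{\mathrm{Search}}(M,v_1)$, so every execution call carries the prefix $\sigma_p$. The decision call in step (b2) still receives the raw memory $M$, and $M$ never contains $\sigma_p$. The reason is that $v_r$ only writes $\widehat\F_s(\cdot)[2:|\cdot|-1]$, which is the block $R_m$, and $R_m$ has no $\sigma_p$. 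The other tools $v_{\rm end},v_r,v_{\out}$ are kept unchanged. If needed, $v_{\out}$ is extended to also delete $\sigma_p$. All tools remain linear time.

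To build $\F_s$, I place $\F_d$ and $\F_c$ side by side in disjoint coordinate blocks. Width and heads add, and depth is the maximum after padding with identity layers, so the bounds $O(1)$, $O(|Q||\Sigma_2|)$ and $O(\ln|Q|)$ are preserved. I then add a detector $\F_p$ from Lemma~\ref{lm1} for the symbol $\sigma_p$. It outputs an indicator $g\in\{0,1\}$ in a dedicated coordinate. Since $\sigma_p$ occurs exactly once in any execution context, the count condition $1\le m\le\lfloor e^q\rfloor$ holds trivially. In the last FNN layer I gate the two logit vectors with the $\mathrm{Gate}$ gadget from Part~0 of the proof of Theorem~\ref{agent}: the output is $\mathrm{Gate}(\F_c,g)+\mathrm{Gate}(\F_d,1-g)$, applied coordinatewise after shifting the logits to be nonnegative and bounded.

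Two consistency checks follow. First, in execution mode the offsets used by Lemma~\ref{lm2} are measured relative to the last $\sigma_T$. The prepended $\sigma_p$ sits before the copied window, so $U_j$ and $\nu$ are unchanged. In the initial case the context is $\sigma_p\oplus B\oplus x[1]$, and the window logic still reads $b=U_1$ and $\nu=0$. Second, in decision mode the input contains no $\sigma_p$, so $g=0$ and the behaviour is exactly that of $\F_d$. The step count of at most $t_\TM(x)+1$ and the memory bound then carry over verbatim from Theorem~\ref{agent}.

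I expect the main obstacle to be the gating under $q$-precision. The logits of $\F_d$ and $\F_c$ must be bounded by a constant, and after gating the argmax must still be strict despite rounding. I would first rescale both logit vectors into $[0,1]$ and enforce a margin of at least $1/4$ on the winning token. Any rounding of order $10^{-q}$ with $q=O(\ln|Q|)$ then cannot flip the argmax, and a $\mathrm{ReLU}$ gate with values in $\{0,1\}$ selects exactly one of the two vectors.
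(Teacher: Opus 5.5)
Your proposal is correct and follows the paper's proof essentially step for step. The paper also uses the extra symbol $\sigma_p$, prepends it via $v_d=\sigma_p\oplus f_{\mathrm{Search}}$ so that it never enters memory, and sets $\mathcal{F}_s=U(\mathcal{F}_d,\mathcal{F}_c,\sigma_p)$: both transformers and a Lemma~\ref{lm1} detector run in parallel, and the output is selected by $\mathrm{ReLU}(\mathcal{F}_d-g\mathbf 1)+\mathrm{ReLU}(\mathcal{F}_c-(1-g)\mathbf 1)$ on logits in $[0,1]$ (Lemma~\ref{lmm1}), which is exactly your Gate construction. The one detail the paper states that you leave implicit is enlarging the copying window of Lemma~\ref{lm2} by one position. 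Your claim that ``$\sigma_p$ sits before the copied window'' needs this, because that lemma's exact-copy guarantee only covers contexts no longer than the window, and the prompt lengthens every execution context by one token.
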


The proof follows the same construction as Theorem~\ref{agent}, with an
additional prompt symbol used to distinguish the decision module from the execution module.

We first prove a simple combination lemma.

\begin{lemma}
\label{lmm1}
Let $q\ge 6$ be fixed, and let $\mathcal Z\subseteq\Sigma^*$ be a class
of nonempty contexts, closed under taking nonempty prefixes, in which
a distinguished token $\sigma_k$ occurs at most once.
Suppose that two $q$-precision Transformers $\mathcal F^1$ and
$\mathcal F^2$ are well-defined on every $z\in\mathcal Z$, and that
\[
\mathcal F^1(z),\mathcal F^2(z)\in[0,1]^{|\Sigma|+1}.
\]
Assume that their positional-encoding blocks can be embedded into
disjoint blocks of a common RoPE matrix.

Then there exists a $q$-precision Transformer
\[
\mathcal F
=
U(\mathcal F^1,\mathcal F^2,\sigma_k)
\]
such that, for every $z\in\mathcal Z$,
\[
\mathcal F(z)
=
\begin{cases}
\mathcal F^1(z), & \sigma_k\text{ does not occur in }z,\\
\mathcal F^2(z), & \sigma_k\text{ occurs in }z.
\end{cases}
\]
Moreover, the construction introduces only constant-factor/additive
overhead in depth, width, and number of heads, and does not change the
numerical precision $q$.
\end{lemma}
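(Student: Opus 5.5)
The plan is to build $\mathcal F$ by running $\mathcal F^1$ and $\mathcal F^2$ side by side in disjoint coordinate blocks. Alongside them we run a detector for $\sigma_k$ in the style of Lemma~\ref{lm1}. At the end, a ReLU gadget selects one of the two output vectors according to the detector bit.

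Concretely, the embedding of a token $a$ will be the concatenation $(v^1_a, v^2_a, g_a)$. Here $v^1_a,v^2_a$ are the embeddings of $\mathcal F^1,\mathcal F^2$, and $g_a$ is the three-dimensional embedding used in Lemma~\ref{lm1} for detecting $\sigma_k$. We pad the shallower network with identity layers (zero attention, zero FNN), so both have the same depth $L=\max(L_1,L_2)$.

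Each layer of $\mathcal F$ has the heads of $\mathcal F^1$ and of $\mathcal F^2$. Their $Q,K,V$ matrices are block-embedded so that they read and write only their own block. The RoPE matrices are placed on disjoint diagonal blocks, which the hypothesis allows. The FNNs are likewise block-diagonal. Because queries and keys of one head vanish outside its block, each attention logit equals the logit of the original network. The $q$-precision rounding is applied entrywise, and padding with zero products leaves the rounded sums unchanged. Hence, at every position, blocks $1$ and $2$ reproduce exactly the hidden states of $\mathcal F^1$ and $\mathcal F^2$.

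In the first layer we add one more head, namely the one from Lemma~\ref{lm1}, acting only on the $g$-block. Since $\sigma_k$ occurs at most once in $z\in\mathcal Z$, the count condition $1\le m\le \lfloor e^q\rfloor$ holds. After the $\phi$-FNN from that lemma (placed in a later layer in its own block), the final position carries a bit $\beta(z)\in\{0,1\}$, exactly indicating whether $\sigma_k$ occurs.

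Closure under prefixes is what we need for the autoregressive use. It guarantees that every intermediate context is still in $\mathcal Z$, so the hypotheses hold for it. It is also where the detector's behaviour is meant to be well defined.

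The output layer is the key obstacle, because $E_{\rm out}$ is linear and cannot multiply by $\beta$. We therefore add one final FNN layer computing, coordinatewise,
\[
o_i=\operatorname{ReLU}\bigl(\mathcal F^1_i(z)-\beta\bigr)+\operatorname{ReLU}\bigl(\mathcal F^2_i(z)+\beta-1\bigr).
\]
This reproduces the gadget $\operatorname{Gate}$ from the proof of Theorem~\ref{agent}. It is exact because $\mathcal F^1_i,\mathcal F^2_i\in[0,1]$ and $\beta\in\{0,1\}$.

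To make the values $\mathcal F^j_i(z)$ available inside the network, we fold each original output map $W^j_0,c^j$ into the first matrix $E_1$ of this last FNN. This is an affine map of the final hidden blocks, and it is computed with the same $q$-rounded arithmetic. The output layer of $\mathcal F$ then simply reads off $o$.

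The care needed here is to confirm that the rounded affine map gives exactly the same numbers as the original output layers. Since they are identical sequences of $q$-rounded operations on identical inputs, this holds. The residual stream must not pollute $o$, so we write $o$ into fresh zero-initialised coordinates.

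Finally we count the cost. We used one extra head, $O(1)$ extra width, and at most two extra layers, with $q$ unchanged. That gives the claimed constant overhead.
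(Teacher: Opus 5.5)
Your proposal is correct and is essentially the paper's proof. Like the paper, you run $\mathcal F^1$, $\mathcal F^2$ and the Lemma~\ref{lm1} detector in disjoint blocks, pad the shallower network with identity layers, and select with the same gate $\operatorname{ReLU}(\mathcal F^1_i-\beta)+\operatorname{ReLU}(\mathcal F^2_i+\beta-1)$; folding the original output maps into a final FNN just makes explicit a step the paper leaves implicit. One small accounting correction: the fresh output coordinates and the final FNN's hidden units cost $O(|\Sigma|)$ extra width rather than $O(1)$, which is still an additive overhead.
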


\begin{proof}
By Lemma~G.1, there exists a $q$-precision Transformer
$D_{\sigma_k}$ such that
\[
D_{\sigma_k}(z)
=
\begin{cases}
0, & \sigma_k\text{ does not occur in }z,\\
1, & \sigma_k\text{ occurs in }z.
\end{cases}
\]
Write $D(z)=D_{\sigma_k}(z)$.

Run $\mathcal F^1$, $\mathcal F^2$, and $D$ in disjoint hidden-state
blocks, and pad the shallower networks by identity layers if necessary.
Define, coordinatewise,
\[
\mathcal F(z)
=
\operatorname{ReLU}
\bigl(\mathcal F^1(z)-D(z)\mathbf 1\bigr)
+
\operatorname{ReLU}
\bigl(\mathcal F^2(z)-(1-D(z))\mathbf 1\bigr).
\]

If $\sigma_k$ is absent, then $D(z)=0$, and since
$\mathcal F^1(z),\mathcal F^2(z)\in[0,1]^{|\Sigma|+1}$,
\[
\mathcal F(z)
=
\mathcal F^1(z).
\]
If $\sigma_k$ is present, then $D(z)=1$, and similarly
\[
\mathcal F(z)
=
\mathcal F^2(z).
\]

All additional operations involve only values in $[-1,1]$ and are exact
under the same $q$-precision arithmetic. Hence no increase of numerical
precision is required. The resource bounds follow from the parallel
composition of the three constant-overhead subnetworks.
\end{proof}

\begin{proof}
Let
\[
\Sigma
=
\Sigma_2
\cup
\{\sigma_p,\sigma_b,\sigma_e,0,1,-1,\sigma_T\}.
\]
Use the same constructions of the decision Transformer
$\mathcal F_d$, execution Transformer $\mathcal F_c$,
$v_{\rm end}$, $v_r$, and $v_{\rm out}$ as in the proof of
Theorem~\ref{agent}, with the same common precision
$q=O(\ln |Q|)$ chosen from the outset.

The only change is that the local input supplied to the execution module
is prefixed by $\sigma_p$. Define
\[
v_d(x,z)
=
\sigma_p
\oplus
f_{\rm Search}(x,z).
\]
Enlarge the copying window in the construction of $\mathcal F_c$ by one
position. Since the execution construction reads the local configuration
and the generated prefix only through fixed relative offsets, prefixing
$\sigma_p$ leaves all relevant relative offsets unchanged. Hence, for
every valid local input $\ell$ and every autoregressive prefix $y[1:j]$,
\[
\mathcal F_c
\bigl(
\sigma_p\oplus\ell\oplus y[1:j]
\bigr)
=
\mathcal F_c
\bigl(
\ell\oplus y[1:j]
\bigr).
\]

Let $\mathcal Z$ contain all decision contexts, prompted execution
contexts, and their nonempty autoregressive prefixes. By the constructions
in Theorem~\ref{agent}, both branches are well-defined on $\mathcal Z$;
their candidate logits may be taken in $[0,1]^{|\Sigma|+1}$.
Therefore Lemma~\ref{lmm1} applies. Set
\[
\mathcal F_s
=
U(\mathcal F_d,\mathcal F_c,\sigma_p).
\]

On every decision call, $\sigma_p$ is absent, so $\mathcal F_s$ has the
same autoregressive output as $\mathcal F_d$. On every execution call,
$\sigma_p$ occurs exactly once, so $\mathcal F_s$ has the same
autoregressive output as $\mathcal F_c$.

The prompt symbol is not written into persistent memory. Hence
$v_r$, $v_{\rm end}$, and $v_{\rm out}$ are unchanged, and every cycle
produces the same memory update as in Theorem~\ref{agent}. Therefore, the
shared-transformer agent memorizes the same Turing-machine computation.

Finally, Lemma~\ref{lmm1} changes the depth, width, and head count only
by constant factors and does not change $q$. Thus, all resource bounds
of Theorem~\ref{agent} remain unchanged up to absolute constants.
\end{proof}

\section{Application Scopes of Results fir Finite Precision transformers, Infinite Precision transformers, and Agents}
\label{asct}

In this section, we will clarify the scope within which the theorems presented in this paper are valid.

{\bf Results in Section \ref{adfdfgsef} do not hold for finite precision transformers.}

Section \ref{adfdfgsef} considers the Turing incompleteness of infinite precision transformers on non-tame Turing machines, it holds for any $|\Sigma|$ such that $\Sigma_2\subset\Sigma_0$, but we must point out that these conclusions cannot be directly extended to finite precision transformers. Naturally, they cannot be applied to agents with finite transformers.

Results in Section \ref{sss1} do not hold for a finite precision transformer. 
Precision loss does not arise in an infinite-precision transformer and is therefore not taken into account in the sensitivity analysis of Theorem \ref{th-1}.
Once the loss of precision is taken into account, that bound will no longer hold.

Results in Section \ref{in-pr} are not valid for finite precision transformers. Under the precision limit, if the transformer can correctly output, the maximum weight is at least $10^{-q}$ larger than the others. By using several amplification layers, such a value can naturally reach any large amount not exceeding the precision limitation, so there is no confidence problem for a finite precision transformer.

Results in Section \ref{mtmfae} do not hold for a finite precision transformer. Under the precision limit, when the structure of the transformer and precision are fixed, the number of parameters in the hypothetical space is limited. Therefore, as long as there is at least one parameter that can make the transformer memorize the target Turing machine, the probability of memorizing such a Turing machine is not 0.

{\bf Comparing the results in Section \ref{ffa} and Section \ref{sec-agent}.}

Section \ref{ffa} considers the Turing incompleteness of finite precision transformers on non-converging Turing machines, it holds for any $|\Sigma|$ such that $\Sigma_2\subset\Sigma_0$. 
Since our conclusion in section \ref{ffa} holds for all finite precision transformers, we do not need to consider where the transformers come from, such as whether they are constructed or trained.

In Section \ref{sec-agent}, the transformer can complete the task through multiple cycles and calls to tools. 
In essence, it is a method to solve long tasks by dividing them into shorter problems, thereby avoiding having the transformer directly solve particularly difficult problems, such as non-converging Turing machines, which greatly reduces the difficulty of the tasks undertaken in the operation of the single-step transformer and ultimately realizes Turing completeness. 
So the results in Section \ref{ffa} do not contradict the Turing completeness of agents that use finite precision transformers.

In the results of agent, we require that $|\Sigma_2|+6\le|\Sigma|$, this is not a hard requirement: for $0-1$ Turing machine, $\Sigma_2=\{0,1,B\}$; for $\TM_{\Ar}$, $|\Sigma_2|=19$; on the other hand, $|\Sigma|$ is always a large value in most of transformer.


\section{Additional Experimental Details}

\subsection{Implementation Details}
\label{app:experimental-details}

\textbf{Datasets}
The benchmark contains 100 expressions for each of the 13 operator counts: $(1,2,3,4,5,6,8,10,12,15,20,25,30)$. 
Reference answers are generated by a deterministic evaluator and are represented as integers or reduced fractions. 
A prediction is correct only if its integer value, or both its numerator and denominator, exactly match the reference answer.
The following table collects several examples from our test datasets.

\begin{table}[h]
\centering
\tiny
\begin{tabular*}{\linewidth}{@{\hspace{5pt}\extracolsep{\fill}} l@{}l@{}c@{}c@{}c@{}l@{}l @{\hspace{5pt}}}
  \toprule
  id & input & answer & length & num\_operands & operator\_family & operators \\
  \midrule
  test\_short-mixed-l2-0017 & $14 - 10 / 1$ & 4 & 2 & 3 & mixed & ["-", "/"]\\
  test\_short-subtract-l3-0008 & $79 - 15 - 30 - 81$ & -47 & 3 & 4 & subtract & ["-", "-", "-"]\\
  test\_short-mixed-l3-0019 & $15 / 5 + 19 / 3$ & 28/3 & 3 & 4 & mixed & ["/", "+", "/"]\\
  test\_medium-mixed-l6-0016 & $4 + 9 * 19 - 14 * 6 * 4 - 6$ & -167 & 6 & 7 & mixed & ["+", "*", "-", "*", "*", "-"]\\
  \bottomrule
\end{tabular*}
\end{table}



\textbf{Models and Inference Configuration}
We use Llama3.1-8B, Qwen3-8B, and GLM4.5-air as our test models and evaluate instruction-tuned models served with vLLM through its OpenAI-compatible API.
For each evaluated model, the decision component $\F_d$ and execution module $\F_c$ in the agent use identical model weights, numerical precision, chat template, decoding parameters, and structured-output backend, which is the same as the single model.
The primary evaluation uses greedy decoding with temperature $0.7$, seed $42$, and pass@1. 
A single-model response is limited to 1,024 generated tokens, while each decision or execution response is limited to 512 generated tokens.  
To avoid the influence of random factors, we add a supplementary analysis using independent pass@\(k\) sampling, which is presented in Appendix~\ref{app:detailed-result-analysis}.

\textbf{Evaluation Metrics}
Our primary metric is exact-answer accuracy.  
Predictions are parsed as integers or reduced fractions before comparison, allowing mathematically equivalent fraction forms to match.  
We also report the completion rate, defined as the fraction of examples for which a system returns a parseable final answer.  
Both metrics are reported by checkpoint, operator count, length split, and operator family.
For the agent, we additionally report the mean number of model calls, invalid action rate, token usage, and failures caused by reaching a generation or round limit.  
Data and analysis are detailed in Appendix~\ref{app:detailed-result-analysis}.

\subsection{Example Agent Trajectory}
\label{app:agent-example}

The following table presents a successful trajectory sampled from the experiment using GLM-4.5-Air.  
The input expression is $15 / 5 + 19 / 3$, whose reference answer is $28/3$.  The table omits the models' textual reasoning and retains the actions that affect execution.

\begin{table}[h]
\centering
\small
\begin{tabular*}{\linewidth}{@{\hspace{5pt}\extracolsep{\fill}} c l l l l l @{\hspace{5pt}}}
  \toprule
  Round & Memory $\mathbf{M}_t$ & $\F_d$ & $v_d$ & $\F_c$ & Memory $\mathbf{M}_{t+1}$ \\
  \midrule
  1 &
  $15\,\underset{o_1}{\underline{\mathstrut /}}\,5\,\underset{o_2}{\underline{\mathstrut +}}\,19\,\underset{o_3}{\underline{\mathstrut /}}\,3$ &
  $o_1$ &
  $(15,/,5)$ &
  $3$ &
  $3\,\underset{o_2}{\underline{\mathstrut +}}\,19\,\underset{o_3}{\underline{\mathstrut /}}\,3$ \\
  \cmidrule(lr){1-6}
  2 &
  $3\,\underset{o_2}{\underline{\mathstrut +}}\,19\,\underset{o_3}{\underline{\mathstrut /}}\,3$ &
  $o_3$ &
  $(19,/,3)$ &
  $19/3$ &
  $3\,\underset{o_2}{\underline{\mathstrut +}}\,19/3$ \\
  \cmidrule(lr){1-6}
  3 &
  $3\,\underset{o_2}{\underline{\mathstrut +}}\,19/3$ &
  $o_2$ &
  $(3,+,19/3)$ &
  $28/3$ &
  $28/3$ \\
  \bottomrule
\end{tabular*}
\end{table}

In the first round, the decision invocation $F_d$ selects \(o_1\).
The read function \(v_d\) locates this stable identifier and returns only the local operation \(15/5\).  
An independent computation invocation \(F_c\) returns \(3\), which \(v_r\) writes back to memory.  
The same procedure next reduces \(19/3\) and finally adds the two intermediate values.  
The trajectory therefore uses three decision invocations and three computation invocations.

The identifiers remain attached to the surviving operators as the expression is reduced.  
The deterministic functions \(v_d\) and \(v_r\) only retrieve and replace memory entries: they do not select an operator, perform arithmetic, or
verify that a replacement is correct.  
In this trajectory, the final memory state contains \(28/3\), which matches the reference answer.

\subsection{Detailed Result Analysis}
\label{app:detailed-result-analysis}

Figure~\ref{fig:exp_results_app} reports Pass@8 accuracy and completion rate as functions of the number of operators for GLM4.5-air and Qwen3-8B. 
Pass@8 reduces the influence of single-sample stochasticity by evaluating eight sampled attempts per problem. 
In the accuracy panel, performance declines with expression length for all curves; at larger operator counts, however, the agent remains more accurate than the single-model baseline, which collapses toward zero. 
The completion-rate panel also exhibits length-dependent degradation, though its severity varies across models and settings.

This comparison reflects two opposing effects: iterative execution reduces the state-management burden of each model invocation, but each additional decision and execution call creates another opportunity for error. 
An incorrect operator choice or replacement is written to memory and affects all subsequent rounds. 
External memory therefore does not eliminate model error; its benefit depends on whether local decomposition outweighs the accumulation of per-step errors. 
Accuracy and completion rate diagnose different failure modes: low completion indicates protocol errors, malformed actions, or generation truncation, whereas high completion with low accuracy points to semantic errors in operator selection or arithmetic. 
Consequently, the agent may match or underperform the single model on short expressions without contradicting the theoretical result. 
The theory concerns the computational capability enabled by repeated invocation and external memory, rather than uniform empirical dominance on every finite input.




\usepgfplotslibrary{groupplots}
\begin{figure*}[t]
    \footnotesize
    \centering
    \begin{tikzpicture}
        \begin{groupplot}[
            group style={
                group size=2 by 1,
                horizontal sep=1.6cm,
                vertical sep=2.2cm,
            },
            width=7cm,
            height=4.2cm,
            axis lines=left,
            ylabel={Accuracy},
            tick align=inside,
            ymajorgrids=true,
            grid style={dashed, gray!30},
        ]

        \nextgroupplot[
            xlabel={\# Operators}, 
            title={\footnotesize Accuracy (Pass@8)}, 
            xmin=0, xmax=252, 
            ymin=0, ymax=1, 
            clip=false,
            legend style={
                draw=none,
                at={(1,1)},
                font=\scriptsize, 
                legend columns=1,
                column sep=2pt,
            }
        ]
        \addplot[violet, dashed, thick, no marks, forget plot] coordinates {
            (1,1)
            (2,1)
            (3,1)
            (4,1)
            (5,1)
            (6,1)
            (8,1)
            (10,1)
            (12,1)
            (15,0.98)
            (20,0.94)
            (25,0.84)
            (30,0.66)
            (50,0.03)
            (100,0)
            (150,0)
            (200,0)
            (250,0)
        };
        \addplot[violet, thick, no marks] coordinates {
            (1,1)
            (2,1)
            (3,1)
            (4,1)
            (5,1)
            (6,1)
            (8,1)
            (10,1)
            (12,0.96)
            (15,0.87)
            (20,0.8)
            (25,0.78)
            (30,0.69)
            (50,0.23)
            (100,0.11)
            (150,0.06)
            (200,0.01)
            (250,0.01)
        };
        \addplot[red, dashed, thick, no marks, forget plot] coordinates {
            (1,1)
            (2,1)
            (3,1)
            (4,0.98)
            (5,0.99)
            (6,0.98)
            (8,0.95)
            (10,0.91)
            (12,0.91)
            (15,0.8)
            (20,0.57)
            (25,0.45)
            (30,0.41)
            (50,0.33)
            (100,0)
            (150,0)
            (200,0)
            (250,0)
        };
        \addplot[red, thick, no marks] coordinates {
            (1,1)
            (2,1)
            (3,1)
            (4,0.97)
            (5,0.95)
            (6,0.94)
            (8,0.9)
            (10,0.89)
            (12,0.83)
            (15,0.78)
            (20,0.75)
            (25,0.71)
            (30,0.58)
            (50,0.56)
            (100,0.46)
            (150,0.33)
            (200,0.21)
            (250,0.08)
        };
        \addlegendentry{GLM4.5-air};
        \addlegendentry{Qwen3-8B};

        \nextgroupplot[xlabel={\# Operators}, title={\footnotesize Completion Rate (Pass@8)}, xmin=0, xmax=252, ymin=0, ymax=1, clip=false]
        \addplot[violet, dashed, thick, no marks] coordinates {
            (1,1)
            (2,1)
            (3,1)
            (4,1)
            (5,1)
            (6,1)
            (8,1)
            (10,1)
            (12,1)
            (15,1)
            (20,1)
            (25,1)
            (30,1)
            (50,0.04)
            (100,0.06)
            (150,0.08)
            (200,0.04)
            (250,0.04)
        };
        \addplot[violet, thick, no marks] coordinates {
            (1,1)
            (2,1)
            (3,1)
            (4,1)
            (5,1)
            (6,1)
            (8,1)
            (10,1)
            (12,1)
            (15,1)
            (20,1)
            (25,1)
            (30,0.96)
            (50,0.24)
            (100,0.14)
            (150,0.12)
            (200,0.11)
            (250,0.13)
        };
        \addplot[red, dashed, thick, no marks] coordinates {
            (1,1)
            (2,1)
            (3,1)
            (4,1)
            (5,1)
            (6,1)
            (8,1)
            (10,1)
            (12,1)
            (15,1)
            (20,1)
            (25,1)
            (30,0.94)
            (50,0.85)
            (100,0.85)
            (150,0.75)
            (200,0.87)
            (250,0.82)
        };
        \addplot[red, thick, no marks] coordinates {
            (1,1)
            (2,1)
            (3,1)
            (4,1)
            (5,1)
            (6,1)
            (8,1)
            (10,0.99)
            (12,1)
            (15,1)
            (20,1)
            (25,1)
            (30,1)
            (50,0.98)
            (100,0.99)
            (150,0.93)
            (200,0.75)
            (250,0.62)
        };

        \end{groupplot}
    \end{tikzpicture}%
    \caption{Accuracy rate of single model and agent as the number of required operations grows.}
    \label{fig:exp_results_app}
\end{figure*}

\end{document}